\documentclass{article}
\usepackage{fullpage}
\usepackage{hyperref}
\usepackage{verbatim}

\usepackage{colortbl}

\usepackage{paralist}
\usepackage{pdfpages}
\usepackage{enumitem}
\usepackage{booktabs}
\usepackage{color,xcolor}
\usepackage{graphicx,color,eso-pic}
\usepackage{amsmath,amssymb,stmaryrd}
\usepackage{amsmath,amsthm,amstext,amssymb,amsfonts,latexsym}

\usepackage{tikz}
\usetikzlibrary{arrows,automata}
\usetikzlibrary{shapes}
\usetikzlibrary{decorations.pathmorphing}
\usetikzlibrary{decorations.pathreplacing}

\usepackage[
	lambda,
	advantage,
	operators,
	sets,
	adversary,
	landau,
	probability,
	notions,
	logic,
	ff,
	mm,
	primitives,
	events,
	complexity,
	asymptotics,
	keys]
{cryptocode}

\newcommand{\pr}[1]{\text{Pr} \left[#1 \right]}

\renewcommand{\vec}[1]{\overrightarrow{#1}}

\usepackage{boxedminipage}
\usepackage{url}
\usepackage{graphicx,proof}
\usepackage{graphicx}
\usepackage[bf]{caption}
\usepackage{subcaption}
\usepackage{color}
\usepackage{xspace}
\usepackage{multirow}
\usepackage{amsmath,amstext,amssymb,amsfonts,latexsym}
\usepackage{wrapfig}

{
\newtheorem{theorem}{Theorem}

\newtheorem{corollary}{Corollary}
\newtheorem{definition}{Definition}

\newtheorem{claim*}{Claim}

}

\newcommand{\A}{\mathcal{A}}

\newcommand{\Bit}{\lbrace 0,1 \rbrace}

\newcommand{\Gen}{\mathsf{Gen}}

\newcommand{\Ideal}{\mathsf{Ideal}}

\newcommand{\braces}[1]{\lbrace #1 \rbrace}

\renewcommand{\sk}{\mathsf{sk}}

\newcommand{\Watermark}{\mathsf{Watermark}}
\newcommand{\Detect}{\mathsf{Detect}}
\newcommand{\PRC}{\mathsf{PRC}}
\newcommand{\Encode}{\mathsf{Encode}}
\newcommand{\Decode}{\mathsf{Decode}}
\newcommand{\prompt}{\mathsf{prompt}}
\newcommand{\tokens}{\vec{t}}
\newcommand{\End}{\mathsf{End}}
\newcommand{\Ber}{\mathsf{Ber}}
\newcommand{\Generate}{\mathsf{Generate}}
\newcommand{\Unif}{\mathsf{Unif}}

\newcommand{\Nat}{\mathbb{N}}
\renewcommand{\H}{\mathcal{H}}

\newcommand{\E}{\mathcal{E}}

\newcommand{\V}{\mathcal{V}}
\renewcommand{\O}{\mathcal{O}}

\newcommand{\M}{\mathcal{M}}

\newlength{\boxwidth}
\newcommand{\pprotocol}[5]{
\begin{figure}[#4]
\renewcommand{\arraystretch}{0}
\begin{center}
\fbox{
\begin{minipage}{0.9\textwidth}
  \textsc{#1}

  { #5}

\end{minipage}
}%fbox
\end{center}
\vspace{-0.1in}
\caption{\label{#3} #2}
\end{figure}
} 

\newcommand{\TODO}[1]{\medskip \noindent
\colorbox{red!20}{
\parbox{0.98\textwidth}{
[TODO: #1]
}}
}

\newenvironment{boxfig}[2]{% {#1}{#2} = {Caption}{label}
     \begin{figure}
     \newcommand{\FigCaption}{#1}
     \newcommand{\FigLabel}{#2}
       \vspace{-0.8cm}
     \begin{center}
       \begin{small}
         \begin{tabular}{@{}|@{~~}l@{~~}|@{}}
           \hline
           \rule[-1.5ex]{0pt}{1ex}\begin{minipage}[b]{0.95\linewidth}
             \vspace{1ex}
             }{%
           \end{minipage}\\
           \hline
         \end{tabular}
       \end{small}
       \vspace{-0.1cm}
      \caption{\FigCaption}
       \label{fig:\FigLabel}
     \end{center}
     \vspace{-0.4cm}
   \end{figure}
}

\newcommand{\Nnote}[1]{{\color{purple}[\textbf{Noam}: #1]}}
\newcommand{\Anote}[1]{}
\newcommand{\Rnote}[1]{}

\newcommand{\protocol}[4]{
\pprotocol{#1}{#2}{#3}{ht}{#4} }

\author{Noam Mazor \thanks{New York University. Research was partly done while visiting the Simons institute.}\and Andrew Morgan \thanks{Cornell Tech. Research supported by AFOSR Award FA9550-24-1-0267.}
 \and Rafael Pass \thanks{Technion, Cornell Tech and Tel-Aviv University. Research supported in part by AFOSR Award FA9550-23-1-0312,  AFOSR Award FA9550-24-1-0267,  ISF Award 2338/23 and ERC Advanced Grant KolmoCrypt - 101142322. Any opinions, findings and conclusions or recommendations expressed in this material are those of the author(s) and do not necessarily reflect the views of the United States Government, the AFOSR, the European Union or the European Research Council Executive Agency.}}

\title{Can we Watermark Low-Entropy LLM Outputs?}

\begin{document}

\maketitle

\begin{abstract}
A recent and exciting thread of work focuses on developing methods for watermarking the output of large language models (LLMs). We focus on \emph{provably undetectable} watermarking---that is, schemes that do not alter the output distribution of the LLM, yet enable embedding a watermark in the output that identifies the output as having been generated by the particular LLM. Furthermore, the watermark should be hard to remove by an adversary that may potentially edit, insert, or delete tokens from the watermarked output.

Indeed, recent work (Christ et al. [COLT '24],  Christ et al. [CRYPTO’24], Golowich et al. [NeuroIPS’24]) shows how to develop such schemes that are \emph{robust} against a constant fraction of substitutions, or even against a constant fraction of arbitrary edits.

These works, however, make strong assumptions on the amount of entropy present in the output of the LLM. Most notably, they all require \emph{constant entropy rate}---that is, a constant fraction of the tokens in a sufficiently long substring of the output need to have empirical entropy at least $O(\log |T|)$, where $T$ is the alphabet of tokens, and Golowich et al. additionally require $T$ to be larger than the security parameter.

In this work, we consider the question of whether we can also watermark the outputs of LLMs when the per-token entropy is just a constant, discarding the dependence on the alphabet size or security parameter. In this regime, we construct:
\begin{itemize}
\item A watermarking scheme robust against random substitutions (assuming subexponential LPN, as in Christ et al. [CRYPTO’24])
\item A watermarking scheme robust against random substitutions and random deletions, given either the additional heuristic assumption that the output of the LLM only introduces random errors (analogous to the assumption made by Christ et al. [CRYPTO’24]) or a construction of a pseudorandom error-correcting code robust to adversarial substitutions and random deletions.
\end{itemize}

\end{abstract}
\newpage
\section{Introduction}

With the advent of large language models (LLMs) and other tools for AI-based content generation in everyday society, and with recent advances making AI-generated content increasingly natural and human-like, retaining the ability to differentiate content generated by an AI from content generated by a human has become a prevalent and concerning issue. A natural approach to solving this problem---and indeed, an approach which multiple developers of LLMs have pledged to implement \cite{openai}, is that of \emph{watermarking}, wherein an LLM or other AI content generation algorithm is subtly altered to embed a hidden code, or ``watermark" into the content it generates, in such a way that the code can be extracted later to detect that it was in fact generated by the watermarked AI.

There are several intuitive desiderata in such a scheme. First, the watermark should be \emph{undetectable}, meaning that the watermarked content generation algorithm should not produce content that is noticeably different from content that would be generated with the same inputs but without the addition of the watermarking; that is, the distribution of the AI's outputs should not be biased by the addition of watermarking. This is usually accomplished by use of a \emph{secret key} for watermark detection, such that the detection algorithm can use it to efficiently detect the watermark but an adversary not knowing the key cannot. Additionally, a watermark should be \emph{robust} against adversaries who may wish to ``hide" that watermarked content was AI-generated by editing the output in an attempt to remove the watermark; an ideal watermark should still be detectable even in an output edited via some combination of substitutions, insertions, or deletions from an adversary.

We focus on the setting of watermarking for LLMs, where the increasing prevalence and quality of practical LLMs has unsurprisingly prompted a wide variety of ideas for watermarking schemes and techniques. Works such as \cite{cgmr24,kgwk23,kthl24,zalw23} provide strong and empirically verifiable robustness guaranties, but do so by introducing a limited amount of bias into the distribution of the LLM's output, which is compounded over repeated use.

\paragraph{Undetectable Watermarking for LLMs.}
A stronger notion of undetectability was introduced by Christ, Gunn and Zamir \cite{cgz24}; this notion requires watermarked outputs of the LLM to be \emph{computationally indistinguishable} from the original output of the LLM.
Such watermarking schemes can be used repeatedly while still remaining undetectable (and without deteriorating the performance of the LLM). In this paper, we focus on such schemes.

\cite{cgz24} provide the first scheme that achieves this notion (assuming a minimal cryptographic hardness assumption of a one-way function); their scheme, however, requires large substrings (i.e., blocks) of text generated by the LLM to remain completely unchanged for the watermark to be recoverable. In other words, robustness only holds with respect to very limited transformations (e.g., not even changes to a small \emph{fraction} of tokens).

Christ and Gunn \cite{ChrGun24} suggest a general approach for constructing watermarking schemes that satisfy both undetectability and robustness against larger classes of attack---and in particular those that can handle modification to a \emph{constant fraction} of tokens---in two steps. First, one needs to construct a so-called \emph{pseudorandom code} (PRC): a code which is both \emph{pseudorandom}, in that a codeword is indistinguishable from true randomness to a computationally bounded adversary, and \emph{robust}, in that, similar to an error-correcting code, a codeword can be recovered even after being manipulated in certain ways. Given this, a PRC codeword can be embedded in the output of the LLM to watermark it, and the watermark can then be detected by attempting to recover the codeword from the output. The pseudorandomness of the PRC provides the undetectability of the watermarking, while the watermarking scheme inherits the robustness from the code (modulo any errors which may be introduced by the embedding process). 

Christ and Gunn then construct two PRCs with a binary alphabet: one which is robust against adversarial substitutions (assuming the subexponential hardness of LPN), and one which is robust against random deletions and substitutions. They also suggest an embedding scheme, specifically for LLMs with a binary alphabet of possible tokens, that operates by attempting to correlate the bit selected by the LLM with the respective bit of the PRC codeword whenever it is possible to do so without biasing the distribution of the LLM's selection. This, depending on which of their PRC constructions are leveraged, results in a watermarking scheme which is either secure against random substitutions, or, under the assumption that the errors introduced by the embedding procedure are random, against random deletions and substitutions; additionally, it can be generalized to watermark LLMs with non-binary sets of tokens by simply using binary encoding of the tokens. 

The recent work of Golowich and Moitra \cite{gm24} constructs a PRC robust against edit distance (i.e., adversarial deletions, insertions, and substitutions). In fact, they construct such a code robust to a surprisingly large fraction ($1-o(1)$) of adversarial edits, with the caveat that it requires the size of the codewords' alphabet to be polynomially large in the security parameter.\footnote{Specifically, each token in the alphabet needs to correspond to a different index of a codeword in an underlying binary PRC.} They also present an embedding scheme for such codes, which works by assuming the LLM has a polynomially large alphabet of tokens, hashing these tokens to random tokens in the PRC's alphabet, and applying a similar approach to \cite{ChrGun24} to attempt to correlate the hashed tokens to the respective tokens in the codeword. Through this, they obtain a watermarking scheme with a stronger robustness guarantee, additionally circumventing the requirement of a heuristic assumption such as that of \cite{ChrGun24} because of the added robustness in the PRC, but at the cost of requiring a much larger alphabet of tokens (as, in particular, the PRC does not retain its robustness guarantees through reduction to binary).

Notably, however, the embedding schemes in both of these works are not guaranteed to perfectly embed the codeword; they require that the output of the LLM has high enough \emph{empirical entropy}---that is, that a sufficient number of tokens are drawn from sufficiently diverse distributions---as otherwise it will be impossible to reliably correlate the LLM output (e.g., if it is a single fixed token) with a token in a pseudorandom codeword, creating substitution errors in the watermarking process and potentially interfering with robustness guarantees.

The scheme of \cite{gm24} requires a constant fraction of the tokens in a sufficiently long substring of the LLM's output to have empirical entropy $O(\log |T|)$, where $T$ is the set of possible tokens. (This assumption on the entropy of the tokens is somewhat inherent in their approach; indeed, when the entropy of each token is 1, one could not expect to have robustness against more than $1/2$ fraction of adversarial substitutions.\footnote{For example, imagine a scenario in which we ask the LLM to output a sequence of tokens of length $2L$, each is ``Zero" with probability $1/2$ or ``One" with probability $1/2$. In this case, each token has entropy $1$ given any prefix. On the other hand, by changing $1/2$ fraction of the tokens, we can change the output sequence to be any of $\sum_{i\le L}{2L \choose i} = 2^{2L}/2$ possible sequences of  distance at most $L$ from the LLM output. 
This means that either the watermarked can be removed by changing $L$ tokens, or the scheme recognize a random sequence as watermarked with
probability at least $1/2$, contradicting soundness.}) Notably, because $|T|$ is required to be a large polynomial in the security parameter $\lambda$, this implies that the required entropy is $\Omega(\log \lambda)$. 

The entropy requirement of \cite{ChrGun24} is more moderate, but requires additional assumptions: when the LLM's output is binary, it merely requires that a constant fraction of the tokens in a sufficiently long substring have $\Omega(1)$ empirical entropy. When translating a model with a large set $T$ of potential tokens to binary, the na\"ive way to do it is to encode each token using $\log |T|$ bits. 
Using this approach, the entropy requirement of \cite{ChrGun24} would once again be $\Omega(\log |T|)$, but the model of random errors no longer makes sense if using this naive encoding (since random errors to tokens do not correspond to random errors in bit representation).
As the authors suggest, one can overcome these issues by using a Huffman encoding of the token, according to some \emph{a priori} distribution over $T$, and simply using the first bit of the encoding to embed the binary PRC.
This, however, introduces the additional strong assumption 
that the encoder knows the a priori distribution of the LLM's output on $T$ and that the output of the LLM is distributed closely to it (so that 
the first bit of the Huffman encoding of the token has high empirical entropy).

Thus, unless we make strong distributional assumptions on the output of the LLM, neither of the above approaches can be used when the entropy of most tokens is a small constant, and in particular when the entropy is close to a single bit (or even smaller).\footnote{Note that the approach of \cite{ChrGun24} works when the entropy of each \emph{binary} token is one. %\Rnote{remove: or when the a priori distribution of the tokens has low constant entropy}. 
We here want to consider the realistic case where the token set is significantly larger, yet each token has low entropy within that larger alphabet.} Thus, in this work, we address the following question: 

\begin{quote}
\emph{Is there an undetectable and robust watermarking scheme for LLM outputs with a large alphabet of tokens that preserves robustness even when almost all tokens have low entropy (e.g., entropy 1)?}
\end{quote}

We remark that when only requiring an undetectable watermarking scheme for LLM with very limited robustness guarantees (i.e., the requirement that a sufficiently large enough block of the output is completely unchanged),
then the original scheme of \cite{cgz24} indeed supports this low-entropy regime---in fact, they handle even sub-constant entropy. However, as far as we know, no known schemes provide any non-trivial notion of robustness with constant or better entropy.

% we already have one nutshell, just calling this "our results" for now but will maybe change if i think of something better

\paragraph{Highlights of our results.} Our main results will be to provide initial constructions of such watermarking schemes. Based on essentially the same hardness assumptions as those in \cite{ChrGun24}, we construct a watermarking scheme which provides robustness guarantees against a broad class of random substitution attacks (and, given additional assumptions, random deletions) for even outputs with very low entropy; specifically, our scheme's only requirement for robustness is that a constant fraction of tokens in a sufficiently long substring of the output have a single bit of empirical entropy. An additional advantage of our approach is that, because our watermarking scheme models token generation at the token level rather than reducing it to a binary alphabet as \cite{ChrGun24} does, our robustness guarantees, similarly to those of \cite{cgz24} in the high-entropy regime, deal with a more practical and realistic class of attacks that modify the output at a token level (e.g., paraphrasing by replacing tokens with synonyms, or redacting tokens by replacing them with something deterministic); in contrast, the practical significance of attacks against a binary translation of the output tokens is far less clear.
%, especially in the case of a variable-length encoding (e.g., Huffman) where a single change in the binary translation could change the entire untranslated output or vice versa. We discuss this in further detail in section \ref{sec:comparison}.

%\TODO{
%Other works: yevgeniy, CCA security.

%Explain smiley attack: it’s ok, we do not consider adversarial %queries (they degrade the performance of the output)}

\subsection{Results}

%\TODO{yevgeniy/miranda paper deals with randomized ``emoji attack", but so do we (it falls within our model of random substitutions).}

We present, for LLMs with a large token set, a PRC-based watermarking scheme that provides similar robustness guarantees to the scheme of \cite{ChrGun24} with a much weaker entropy requirement. In particular, in contrast to prior works which require output tokens (or substrings of tokens) to have an amount of average empirical entropy that is at least logarithmic in the size of the alphabet of tokens, we show that even outputs satisfying only a very weak notion of ``minimal empirical entropy"---which requires only that a constant fraction of the tokens in a sufficiently long substring of the output have a single bit of empirical entropy, independently of the size of the token alphabet---can be robustly watermarked without additional distributional assumptions.

We first consider the setting of robustness against random substitutions. Notably, it is not immediately clear how a ``random substitution" is even defined in the case of an output that uses a large token alphabet rather than restricting to a binary setting where a substitution always implies a bit flip; we model such a transformation by allowing a theoretical adversary to a priori map each token in the alphabet to a distribution of ``replacement" tokens, and subsequently replacing randomly selected tokens with a sample from the respective replacement distribution.\footnote{This definition, as mentioned, captures several natural notions of substitution-based attacks such as paraphrasing or redaction.} We show the following:
\begin{theorem}[Informal]
If there exists a binary PRC which provides robustness against a $(1/2 - \delta)$ fraction of adversarial substitutions for any constant $\delta > 0$, then there exists a large token alphabet watermarking scheme that is computationally undetectable and, for any LLM output satisfying the minimal empirical entropy requirement, robust to a $(1/2 - \epsilon)$ fraction of random substitutions for any constant $\epsilon > 0$.
\end{theorem}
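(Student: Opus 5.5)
The plan is to build the watermarking scheme by embedding a binary PRC codeword $x \in \Bit^n$ into the token sequence through a keyed hash of tokens. Concretely, the secret key consists of a PRC key together with a key for a pseudorandom function. The PRF defines, for each position $i$, a hash $h_i : T \to \Bit$. Note that $h_i$ is a random binary labeling of the token alphabet, as opposed to a fixed binary encoding.

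To generate the $i$-th token, we take the model's next-token distribution $p_i$ and split it into the two classes $h_i^{-1}(0)$ and $h_i^{-1}(1)$. Let $q_i = p_i(h_i^{-1}(1))$. We sample the hashed bit $b_i$ so that it equals $x_i$ as often as possible while keeping its marginal distribution $\Ber(q_i)$. This is the \cite{ChrGun24} binary embedding applied to $b_i$. We then sample the token from $p_i$ conditioned on its class $b_i$. Detection recomputes $b_i = h_i(t_i)$ and runs the PRC decoder on $(b_1,\dots,b_n)$.

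\textbf{Undetectability.} This follows from a hybrid argument. First replace the PRF by a truly random function. Next replace the PRC codeword by a uniform string, using PRC pseudorandomness. Once $x$ is uniform, each $b_i$ is exactly $\Ber(q_i)$ and the token is exactly $p_i$-distributed. The output distribution therefore coincides with the unwatermarked model.

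\textbf{Robustness, step 1: embedding errors.} Fix a position where $p_i$ has empirical entropy at least one bit. Because $h_i$ is a random labeling, $q_i$ is bounded away from $\{0,1\}$ with constant probability over $h_i$. Heuristically, if no token has mass above $1/2$, then $q_i$ concentrates near $1/2$. In the worst case, one dominant token of mass $1-\eta$ still leaves mass $\eta$ in the minority class with probability $1/2$.

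I would quantify this as $\mathbb{E}_{h_i}[\min(q_i,1-q_i)] \ge c$ for an absolute constant $c>0$ whenever the empirical entropy is at least one bit. The probability that $b_i \neq x_i$ is then at most $1/2 - c'$ at such positions, and exactly $1/2$ at positions with no entropy. Under the minimal-entropy assumption, a constant fraction of positions are good. So the embedding errors have rate $1/2-\delta_0$ with $\delta_0>0$ a constant.

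\textbf{Robustness, step 2: random substitutions.} The adversary fixes a replacement distribution $D_t$ for each token $t$ in advance, and each position is replaced independently with probability $1/2-\epsilon$. A replaced token $t'\sim D_{t_i}$ is independent of the fresh random hash value $h_i(t')$, unless $t' = t_i$. Hence the new bit is uniform, except where it is unchanged.

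Each substitution therefore either leaves $b_i$ intact or replaces it with a fresh uniform bit. Per position, the probability of a correct bit is
\[
(1/2+\epsilon) + (1/2-\epsilon)\cdot 1/2 \;=\; 3/4 + \epsilon/2,
\]
before accounting for embedding errors. Combining with step 1, the overall error rate is bounded by $1/2 - \delta$ for some constant $\delta>0$.

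\textbf{Main obstacle.} The hard part is turning this into an \emph{adversarial}-substitution guarantee for the binary PRC. The two error sources are neither independent nor uniform across positions: embedding errors depend on the model's distributions, which may depend adaptively on earlier tokens. I would handle this by showing that, conditioned on all previous randomness, each position errs with probability at most $1/2-\delta$ at good positions and at most $1/2$ elsewhere. A martingale (Azuma) bound then gives a total error count below $(1/2-\delta/2)n$ with overwhelming probability. Since the PRC tolerates any error pattern of that weight, adaptivity is harmless.

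I would also have to check that the PRF labels are used consistently within a position, so that the adversary's substitution still yields fresh hash values in the real (PRF) world. This can be argued by a reduction in which the distinguisher simulates the model and the random substitutions and queries the PRF oracle.
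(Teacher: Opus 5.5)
\paragraph{Main gap: the substitution analysis in step 2.}
You argue that a substituted token $t'\sim D_{t_i}$ has a uniform label $h_i(t')$, and you then count it as correct with probability $1/2$. Marginal uniformity is not what you need. You need $h_i(t')$ to be independent of $x_i$, and it is not: the embedding chose the class of $t_i$ using both $x_i$ and the whole labeling $h_i$, including $h_i(t')$ whenever $t'$ lies in the support of $p_i$.

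Here is a concrete counterexample. Let $p_i$ be uniform on $\{a,s\}$, and let the substitution swap $a$ and $s$.
\begin{itemize}
\item With probability $1/2$ over $h_i$ the two labels differ. Then $q_i=1/2$, the coupling forces $b_i=x_i$, and the swapped token carries label $1-x_i$ with certainty.
\item Otherwise the label is independent of $x_i$, both before and after the swap.
\end{itemize}
So a substituted position errs with probability $3/4$. This position has perfect empirical entropy, yet its error rate is $(1/2+\epsilon)\cdot\frac14+(1/2-\epsilon)\cdot\frac34=1/2-\epsilon/2$. Your ``kept or fresh uniform'' accounting predicts $3/8-\epsilon/4$ instead.

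The missing idea is the paper's decomposition of each emission into two kinds.
\begin{itemize}
\item Emissions that ignore the codeword bit. In the paper this is the case $h(t_0)=h(t_1)$; in your coupling it is the branch, of probability $1-2\min(q_i,1-q_i)$, that outputs the majority class. Because the substitution depends only on the token, these stay at error exactly $1/2$ after substitution.
\item Emissions forced to match $x_i$. These can be spoiled only when substituted, so with probability at most $1/2-\epsilon$.
\end{itemize}
Hence the error is at most $1/2-\epsilon\psi$, where $\psi$ is the matched fraction. The advantage therefore scales with $\epsilon$. The PRC must be instantiated with robustness $1/2-\Theta(\epsilon\gamma)$, where $\gamma$ is the fraction of high-entropy tokens; the paper uses $\epsilon'\delta/17$. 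The hypothesis permits this choice, but your argument does not establish that it suffices.

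\paragraph{Two further repairs.}
First, empirical entropy is a property of the realized token ($p_i(t_i)\le 1/2$), not of $p_i$. So your unconditional bound $\mathbb{E}_{h_i}[\min(q_i,1-q_i)]\ge c$ is false in general. If $p_i$ puts mass $1-\eta$ on one token and the rare token is realized, that expectation is at most $\eta$.

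You must instead condition on $t_i=t^*$. In the ideal world $t_i$ is independent of $h_i$. Let $q_S$ be the mass of the class of $t^*$. Then the matched branch has conditional probability $\min(q_S,1-q_S)/q_S\ge 1-q_S$, and $\mathbb{E}[1-q_S]=(1-p_i(t^*))/2\ge 1/4$. Likewise, whether position $i$ is good is not determined by the history $\mathcal{F}_{i-1}$, so your Azuma step must use $\mathbb{E}[\mathrm{err}_i\mid\mathcal{F}_{i-1}]\le 1/2-c\Pr[p_i(t_i)\le 1/2\mid\mathcal{F}_{i-1}]$.

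Second, the minimal-entropy requirement only asks for a long \emph{substring} in which a constant fraction of tokens have high empirical entropy. A single codeword spanning the whole output may then contain only a vanishing fraction of good positions. You need many independent codewords in consecutive blocks, and a detector that tries each block, as in the paper.

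\paragraph{Relation to the paper's route.}
With these fixes, your design is a legitimate alternative: fresh labels per position, the \cite{ChrGun24} coupling on the hashed bit, and martingale concentration. The paper itself notes that per-token hashes would suffice for substitutions alone. Your route avoids the paper's per-block averaging, Markov and cross-block independence argument. The paper accepts that extra work only because it reuses $h$ within a block in order to handle deletions later.
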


Combining this with the PRC construction of \cite{ChrGun24}, we obtain:

\begin{corollary}[Informal]
Assuming subexponential hardness of LPN, there exists a large token alphabet watermarking scheme that is computationally undetectable and, for any LLM output satisfying the minimal empirical entropy requirement, robust to a $(1/2 - \epsilon)$ fraction of random substitutions for any constant $\epsilon > 0$.
\end{corollary}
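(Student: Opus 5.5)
The corollary follows by plugging a concrete binary PRC into the informal theorem above. The only thing to check is that the theorem's hypothesis is met under subexponential LPN. I will instantiate the theorem with the LDPC-based pseudorandom code of \cite{ChrGun24}. Assuming subexponential hardness of LPN, that work constructs a binary PRC that is pseudorandom and robust to any $p$-bounded adversarial channel with $p<1/2$ a constant, meaning it flips at most a $p$ fraction of the codeword bits. So I will first restate their result in the form the theorem needs. For every constant $\delta>0$, I need a binary PRC that tolerates a $(1/2-\delta)$ fraction of adversarial substitutions. This comes from taking the LDPC sparsity parameter and the LPN noise rate appropriately as functions of $\delta$, as in their analysis. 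I also need to check that their robustness notion, namely decoding succeeds with overwhelming probability over the key whenever the Hamming distance is at most $(1/2-\delta)n$, matches the notion of adversarial-substitution robustness assumed in the theorem. The same goes for soundness, meaning a fixed or unrelated string decodes to $\bot$ except with negligible probability.

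Next, I will feed this PRC into the transformation of the theorem. Undetectability is inherited directly from the PRC's pseudorandomness. Robustness to a $(1/2-\epsilon)$ fraction of random substitutions, on outputs with minimal empirical entropy, follows by choosing $\delta$ as a function of $\epsilon$ and the entropy-rate constant. The choice must ensure that the combined effect of embedding errors and random-substitution errors, after the token-to-bit mapping used in the theorem, stays below $(1/2-\delta)$ with overwhelming probability. Since the theorem is stated for all constants $\delta$, I only need to pick the $\delta$ the theorem demands for the given $\epsilon$ and supply the \cite{ChrGun24} PRC with that $\delta$.

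The main obstacle is a parameter-compatibility check rather than a new argument. The \cite{ChrGun24} construction only achieves $(1/2-\delta)$-robustness with a quasi-polynomial or subexponential security loss tied to the sparsity $\log n$. I must confirm that, for every constant $\delta$, this loss is absorbed by the \emph{subexponential} LPN assumption, so that pseudorandomness still holds against all polynomial-time distinguishers. I would address this by citing their parameter setting directly: noise rate $\eta$ bounded away from $1/2$ and parity-check weight $O(\log n)$. Under $2^{n^{c}}$-hardness of LPN, this gives negligible distinguishing advantage. Once this is confirmed, the corollary follows immediately.
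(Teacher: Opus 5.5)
Your proposal is correct and matches the paper. The paper obtains the corollary by instantiating the substitution theorem with the \cite{ChrGun24} PRC that is robust to $(1/2-\epsilon_{\mathrm{PRC}})$-bounded binary substitution channels, taking $\epsilon_{\mathrm{PRC}}=\epsilon'\delta/17$ (random-substitution margin times entropy fraction), which is the ``$\delta$ as a function of $\epsilon$ and the entropy constant'' you describe; the paper cites Assumption 1 and Theorems 1--2 of \cite{ChrGun24} as a black box, so your side discussion of sparsity and security loss is unnecessary, and the hardness rate should be whatever that assumption specifies rather than $2^{n^c}$ for an arbitrary $c$.
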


Additionally, we can extend this result to provide robustness to random substitutions and deletions, but, as in \cite{ChrGun24}, a PRC that is robust to only \emph{random} substitutions and random deletions is insufficient on its own. 
\cite{ChrGun24} leverages such a PRC but also relies on an additional heuristic assumption---to which the authors refer as the output text being ``sufficiently variable"---which requires that the distribution of the errors to the PRC codeword introduced by the watermarking algorithm is ``consistent" in that the errors can be modeled as a binary symmetric channel. We generalize this assumption to the case of a large token alphabet (which, notably, requires a generalization of the definition of a random substitution compared to the binary-alphabet case) by introducing the notion of a ``natural" prompt, or a prompt given to the LLM that, over the distribution of all possible outputs and subsequent random substitution attacks, causes the errors to the PRC codeword introduced by the composition of the watermarking algorithm and the random substitution channel to likewise be modelable by a binary symmetric channel. Given this, we show:

\iffalse
\cite{ChrGun24} circumvents this issue using a heuristic assumption that the distribution of the errors to the PRC codeword introduced by the watermarking algorithm is ``consistent" in that the errors can be modeled as a binary symmetric channel. In our case, however, since using a larger token alphabet introduces intricacies with the introduction of random substitution errors that are not present with a binary alphabet, we require a heuristic assumption that generalizes the one they use to construct deletion-robust watermarking in their work, which amounts to the errors to the PRC codeword introduced by the composition of the watermarking algorithm and the random substitution channel being ``consistent" in the same way. We show the following:
\fi

%\TODO{quantify assumption over prompts: ``for every prompt where the output distribution produces consistently distributed errors" / ``assuming nice prompts"}

\begin{theorem}[Informal]
If there exists a binary PRC which provides robustness against a $(1/2 - \delta)$ fraction of adversarial substitutions and a $(1 - \delta')$ fraction of random deletions for any constants $\delta, \delta' > 0$, then there exists a large token alphabet watermarking scheme that is computationally undetectable and, for any natural prompt and LLM output satisfying the minimal empirical entropy requirement, robust to a $(1/2 - \epsilon)$ fraction of random substitutions and a $(1 - \epsilon')$ fraction of random deletions for any constants $\epsilon, \epsilon' > 0$.
\end{theorem}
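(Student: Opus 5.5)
The plan is to reuse the construction behind the previous informal theorem and change only the analysis of robustness. That construction embeds a binary PRC codeword into the token sequence. It keys a pseudorandom function that hashes each (position-independent, e.g.\ context-keyed) token to a bit, and at each step it samples the next token so that the hashed bit is correlated with the current codeword bit. The sampling is arranged so that the token's marginal distribution is exactly the LLM's whenever the codeword bit is uniform. A step with one bit of empirical entropy then yields a hashed bit that agrees with the codeword bit with probability bounded away from $1/2$.

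We instantiate this with the PRC robust to a $(1/2-\delta)$ fraction of adversarial substitutions and a $(1-\delta')$ fraction of random deletions.

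\textbf{Undetectability.} This follows exactly as before. We replace the PRC codeword by a uniformly random string, using PRC pseudorandomness, and replace the hash by a random function, using PRF security. After both replacements the sampled tokens are distributed identically to the original model. Deletion robustness plays no role in this part.

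\textbf{Soundness.} Here the unkeyed text is independent of the key, so the recovered bit string is independent of the PRC key. PRC soundness then gives a negligible false-positive rate.

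\textbf{Robustness.} The detector hashes each surviving token and runs PRC decoding on the resulting bit string. Composing the embedding with the attack, we must show two things:
\begin{itemize}[label={}]
\item (i) random token deletions induce random deletions of codeword positions;
\item (ii) the remaining bits differ from the codeword in a pattern the PRC tolerates.
\end{itemize}
Claim (i) holds because token deletions are chosen independently of content, so a $(1-\epsilon')$ fraction of token deletions is a $(1-\epsilon')$ fraction of random deletions on the codeword, and we set $\delta' \le \epsilon'$.

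For (ii), the errors come from two sources. The first is the embedding itself, whose error rate is at most $1/2-\Omega(1)$ on high-entropy positions and $1/2$ elsewhere. The second is random substitutions: a substituted token's hash is a fresh PRF value, so it is a uniform bit, and a substitution rate of $(1/2-\epsilon)$ keeps the agreement strictly above $1/2$.

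The main obstacle is that these errors are not independent of the codeword. The embedding's errors depend on the codeword and on the LLM's distribution, and the PRC only tolerates adversarial substitutions \emph{together with random deletions}. This is exactly where the naturality assumption is invoked. For a natural prompt, the error channel formed by the composition of embedding and substitution is, over the randomness of outputs, a binary symmetric channel with some rate $p<1/2$. A binary symmetric channel is in particular dominated by, or directly covered by, an adversarial substitution pattern at rate $p+o(1)$ with high probability, by a Chernoff bound.

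It remains to show $p \le 1/2-\delta$. The minimal-entropy condition gives a constant fraction $\alpha$ of positions with embedding bias $\Omega(1)$. Averaging over all positions and applying the substitution rate gives overall agreement $1/2+\Omega(\alpha\epsilon)$. The final step checks that deletions applied before or after substitutions commute in distribution, so that the PRC decoder succeeds except with negligible probability.
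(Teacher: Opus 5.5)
Your overall route is the paper's proof of Theorem~\ref{thm:subdel1}. Naturality (Definition~\ref{def:consistent}) makes embedding followed by substitution act on each codeword as a binary symmetric channel with a single rate $p$. You bound $p$ by the average per-token error over an entropy-rich stretch, and the PRC absorbs this together with the random deletions. Your Chernoff step from a BSC to a bounded substitution pattern is the right adaptation to the adversarial hypothesis stated here. Two steps, however, are wrong as written.

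\emph{The hash.} A context-keyed hash does not survive this channel. At deletion rate $1-\epsilon'$ the predecessor of almost every surviving token is gone, and at substitution rate $1/2-\epsilon$ it is often replaced. The detector then evaluates the hash on a different input and gets a bit unrelated to $c_i$. So claim (i) fails even though the deleted \emph{positions} are random. The hash must depend on the token alone and be fixed across a whole codeword. This is exactly why the paper samples one random $h_j$ per block and reuses it for every token of that block. Relatedly, once block boundaries are destroyed you cannot simply decode ``the resulting bit string.'' The paper's $\Detect$ tries every window of length at most $\ell$ under every $h_k$, and soundness then needs a union bound over these polynomially many decoding calls.

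\emph{The substitution step.} A substituted token does not hash to a fresh uniform bit, because $f$ is arbitrary. For example, let $D$ be uniform on $\{A,B\}$ and let $f$ swap $A$ and $B$. Whenever $h(A)\neq h(B)$, every substitution of a correctly embedded token then yields hash $1-c_i$ with certainty. The correct argument is the paper's worst-case one, in the two-sample embedding:
\begin{enumerate}
\item A token emitted when $h(t_0)=h(t_1)$ was chosen independently of $c_i$. Since $f$ sees only the token, it stays independent after substitution, so it errs with probability exactly $1/2$.
\item A token emitted when $h(t_0)\neq h(t_1)$ is correct unless substituted, so it errs with probability at most $1/2-\epsilon$.
\end{enumerate}
Hence the error rate is at most $1/2-\epsilon\psi$. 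Here $\psi$ is the expected fraction of split positions, which is at least a quarter of the high-entropy fraction. This gives exactly the $1/2+\Omega(\alpha\epsilon)$ agreement you wrote, so the repair is local, but your stated justification does not give it.

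Finally, average over a block inside the entropy-rich substring, not over ``all positions.'' The output can be far longer than that substring. Since naturality makes $p$ identical across tokens, one such block suffices.
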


We can again combine this with the corresponding PRC construction of \cite{ChrGun24} to obtain:

\begin{corollary}[Informal]
Assuming subexponential hardness of LPN, there exists a large token alphabet watermarking scheme that is computationally undetectable and, for any natural prompt and LLM output satisfying the minimal empirical entropy requirement, robust to a $(1/2 - \epsilon)$ fraction of random substitutions and a $(1 - \epsilon')$ fraction of random deletions for any constants $\epsilon, \epsilon' > 0$.
\end{corollary}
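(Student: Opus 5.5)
The proof is a composition: take the deletion-and-substitution-robust binary PRC of \cite{ChrGun24}, which exists under subexponential LPN, and feed it into the informal theorem immediately above. So the whole job is to check that the PRC has exactly the robustness parameters the theorem asks for.

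\textbf{Step 1.} I will recall the relevant construction of \cite{ChrGun24}. Assuming subexponential LPN, they build a binary PRC that is pseudorandom and robust to a channel that flips each bit independently with probability $p$ and deletes each bit independently with probability $q$. Robustness holds for any constants with $p < 1/2$ and $q < 1$.

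\textbf{Step 2.} I will match this PRC to the hypothesis of the preceding theorem, namely robustness against a $(1/2-\delta)$ fraction of substitutions and a $(1-\delta')$ fraction of random deletions. The substitutions here must be handled with care. The theorem is stated for \emph{adversarial} substitutions, but under the naturalness assumption only \emph{random} (BSC-modelled) errors actually reach the PRC. I therefore expect the formal version of that theorem to need only a PRC robust to random substitutions and random deletions, with the naturalness assumption playing the role of the ``sufficiently variable'' heuristic in \cite{ChrGun24}. Given constants $\epsilon, \epsilon'$, I will pick $\delta, \delta'$ so that the composite channel's flip rate stays below $1/2$ and its deletion rate stays below $1$. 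The composite channel here is the embedding error under minimal empirical entropy, followed by a $(1/2-\epsilon)$ fraction of random substitutions, followed by a $(1-\epsilon')$ fraction of random deletions.

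\textbf{Step 3.} I will apply the theorem and read off the conclusion. Undetectability follows from the PRC's pseudorandomness. Soundness follows from the PRC's soundness. Robustness follows for every natural prompt and every output meeting the minimal empirical entropy requirement.

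\textbf{Main obstacle.} The difficulty is Step 2. The composition of the watermarking embedding errors with the token-level random substitution channel has to collapse into a binary channel whose parameters satisfy the \cite{ChrGun24} thresholds, rather than bare $\epsilon, \epsilon'$. Naturalness supplies the BSC form, but I still need the composite crossover probability to be bounded away from $1/2$ by a constant depending only on $\epsilon$ and the entropy rate. My first attempt is to bound the embedding error probability per high-entropy token: one bit of empirical entropy gives a correlation advantage of at least some constant. Random substitutions then scale the remaining bias by a factor of $2\epsilon$. Deletions are independent of this, so they simply compose with it.
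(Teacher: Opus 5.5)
Your proposal follows the paper's proof. Corollary \ref{cor:subdel1} is obtained by plugging the random-substitution-plus-random-deletion PRC of Theorem 4 of \cite{ChrGun24} into Theorem \ref{thm:subdel1}, where the ``natural'' (consistently distributed errors) assumption turns embedding followed by token-level substitution into a binary symmetric channel with crossover below $1/2 - \epsilon'_{sub}\delta/16$, which then composes with the independent deletions. You are also right that the formal theorem needs only random, not adversarial, substitution robustness from the PRC, despite the informal wording in the introduction.
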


Importantly, this notion of a ``natural" prompt rules out certain contrived or pathological examples of prompts whose outputs are known to be inherently difficult to robustly watermark, such as the well-known ``emoji attack" \cite{kgwk23} wherein an LLM is prompted to output a response to a certain sub-prompt but insert a certain (deterministic) emoji after every token of its response. Effectively, the emoji attack performs a (structured) deletion attack on an output engineered to have a large volume of entropy-less tokens; indeed, because the emoji tokens in the pre-deletion LLM output have zero entropy, they will inherently have a higher embedding error rate than tokens with greater entropy, causing the distribution of errors to be skewed away from a ``natural" distribution.\footnote{As noted by an insightful reviewer, schemes such as ours can be easily made robust against the classical emoji attack \cite{ChrGun24}; however, variants, such as inserting emoji randomly rather than predictably after each token, or inserting an overwhelming fraction of emoji rather than 1/2, add complications against which it is not necessarily clear how to maintain robustness.}

Such attacks as the ``emoji attack" are, while theoretically effective, perhaps undesirable in practice as they simply act on a pre-existing LLM output by, while not directly degrading the quality of the response, reducing its overall quality \emph{relative to the output size}. Thus, one might hope that, in contrast, prompts that utilize the full ``intelligence" of the LLM would be close to ``natural". While this may be the case, requiring the notion of a ``perfectly natural" prompt (or even something statistically close to ``perfectly natural"), as would be necessary for our result or that of \cite{ChrGun24}, is overly restrictive to the point of being rather unrealistic, as, at a high level, producing a ``consistent error distribution" from PRC-based embedding requires tokens in the output to be uncorrelated and have consistently high entropy, whereas, in realistic outputs---for instance, the vast majority of natural language---tokens tend toward being precisely the opposite.

One way to bypass the necessity for these heuristic assumptions would be to strengthen the robustness of the underlying PRC to handle \emph{adversarial} substitutions (and random deletions) as opposed to only random substitutions:

\begin{theorem}[Informal]
If there exists a binary PRC which provides robustness against a $(1/2 - \delta)$ fraction of adversarial substitutions and a $(1 - \delta')$ fraction of random deletions for any constants $\delta, \delta' > 0$, then there exists a large token alphabet watermarking scheme that is computationally undetectable and, for any LLM output satisfying the minimal empirical entropy requirement, robust to a $(1/2 - \epsilon)$ fraction of random substitutions and a $(1 - \epsilon')$ fraction of random deletions for any constants $\epsilon, \epsilon' > 0$.
\end{theorem}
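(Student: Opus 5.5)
The construction is the same token-level embedding used for the substitution-only result, with the binary PRC replaced by one robust to adversarial substitutions and random deletions. Sample a codeword $x = \PRC.\Encode(\sk) \in \Bit^n$. For each generated token $t_i$, use a keyed hash $h:T \to \Bit$ (a fresh pseudorandom function per position, or a single keyed function) to map it to a bit. At generation step $i$, look at the model's next-token distribution $p_i$, split into $p_i^0$ and $p_i^1$ according to $h$, and sample $t_i$ by rejection/coupling so that $h(t_i)$ equals $x_i$ whenever the mass allows. The rule: draw $b$ with $\Pr[b=1]=p_i(h^{-1}(1))$ but correlated maximally with $x_i$, then sample $t_i \sim p_i|_{h=b}$. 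Detection hashes each received token and runs $\PRC.\Decode$.

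Undetectability follows exactly as before. If $x$ is replaced by uniform bits, the output is distributed exactly as the model's. By pseudorandomness of the PRC, the real scheme is indistinguishable. This reduction is unchanged.

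For robustness, I would decompose the end-to-end channel from $x$ to the detector's bit string as a composition of three parts:
\begin{itemize}
\item[(i)] the embedding channel, which produces substitution errors whose positions depend on the model;
\item[(ii)] the random substitution channel acting on tokens, which after hashing becomes bit errors;
\item[(iii)] random deletions.
\end{itemize}

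Since deletions are random and independent of everything else, (iii) can be commuted to act last on the hashed string. It therefore remains to bound the total number of substitutions from (i) and (ii) among the surviving positions.

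The key point is that, because the PRC tolerates adversarial substitutions, we need only a count bound, not a distributional (BSC) assumption. This is precisely what removes the "natural prompt" heuristic. So it suffices to show that, with high probability, the fraction of positions $i$ where $h(\tilde t_i)\ne x_i$ is at most $1/2-\delta$. Here $\tilde t_i$ is the post-attack token.

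For this count I would reuse the per-token analysis from the substitution theorem. At any position where $t_i$ has one bit of empirical entropy, $h$ is (pseudo)random, so the two buckets are balanced up to constants with good probability. Then $\Pr[h(t_i)=x_i]\ge 1/2+\gamma$. Low-entropy positions still give agreement at least $1/2$ in expectation, since $x_i$ is uniform and independent of $h$. The random substitution with rate $\rho<1/2-\epsilon$ replaces $t_i$ by a sample from an a-priori replacement distribution; its hash is independent of $x_i$ and so agrees with probability $1/2$. This preserves a bias $(1-\rho)\gamma$ at good positions. Summing these biases and applying a martingale concentration bound (Azuma, over the sequential generation) gives agreement $1/2+\Omega(1)$ on a constant fraction of the string.

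The main obstacle is deletions. The surviving subsequence must still have agreement bounded away from $1/2$, and the PRC's adversarial-substitution guarantee is stated relative to the codeword after deletion alignment. I would handle this by the independence of the deletion pattern: a uniformly random surviving subset of a string with a $(1/2+\Omega(1))$-agreement profile on a constant-density set of good positions retains, by Chernoff bounds, a $(1/2+\Omega(1))$ agreement. The error pattern is then an adversarial (but bounded) substitution pattern composed with random deletions, which is exactly what the assumed PRC handles.

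Soundness is inherited directly from the PRC, since the detector only decodes a string independent of $\sk$ for unwatermarked text.
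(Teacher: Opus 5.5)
There is a genuine gap in your choice of hash function, which you leave open as ``a fresh pseudorandom function per position, or a single keyed function''; neither option works. Your step (iii), commuting the deletions past the hashing, requires the detector to hash a surviving token without knowing its original position. That rules out per-position hashes: after deletions the detector cannot tell which $h_i$ to apply, and hashing the $k$-th survivor with $h_k$ gives bits unrelated to the codeword. So you are forced to a single $h$. But then the bound $\Pr[h(t_i)=x_i]\ge 1/2+\gamma$ holds only on average over the \emph{shared} $h$. Azuma over the generation only concentrates agreement around its conditional mean given $h$, and that mean can be exactly $1/2$ with constant probability. For example, suppose every next-token distribution is uniform on the same two tokens $A,B$, so each token has exactly one bit of empirical entropy. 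With probability $1/2$ we have $h(A)=h(B)$, the hashed output is a constant string independent of $\sk$, and PRC soundness makes $\Decode$ return $\bot$. Detection then fails with probability about $1/2$, even with no attack. A related problem: one codeword decoded once over the whole output cannot use an entropy guarantee that holds only on a substring of length $L(n)$. That substring may be a vanishing fraction of the output (the paper has $m(n)=\omega(n\ell(n))$), so the global bias is $o(1)$.

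The missing idea is the paper's block structure. Each block of $\ell$ tokens carries its own codeword and its own independent random $h_j$, stored in the key, and $\Detect$ tries every window of length at most $\ell$ with every $h_k$. The argument within a block then runs as follows:
\begin{itemize}
\item In a block with a $\delta/4$ fraction of high-entropy tokens, averaging gives probability at least $1/4$ over $h_j$ that a quarter of those tokens came from split pairs $h_j(t_0)\ne h_j(t_1)$. Those tokens are correctly embedded.
\item Hoeffding over the (ideal, uniform) codeword bits handles the remaining tokens.
\item The expected error after substitution is at most $1/2-\epsilon'\delta/16$, and Markov converts this into a constant probability of staying within the PRC's $(1/2-\epsilon)$ budget.
\end{itemize}
The $h_j$ are independent across the roughly $n$ blocks of the high-entropy substring, so failure is exponentially unlikely. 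Deletions then come for free: the survivors of block $j$ form a contiguous window that $\Detect$ decodes with $h_j$. The PRC is assumed robust to bounded substitutions followed by random deletions, so your Chernoff step on the survivors is unnecessary, though harmless. Your point that adversarial-substitution robustness needs only a count bound, which is what removes the natural-prompt heuristic, is exactly right.

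One smaller slip concerns substitutions. A substituted token's hash is not independent of $x_i$, because the replacement is drawn from $f(t_i)$ and $t_i$ depends on $x_i$. If $f$ swaps two tokens that $h$ separates, substitution deterministically flips the bit. Only tokens chosen independently of $x_i$ stay unbiased after substitution. For split tokens you must assume the worst case, which leaves bias $1/2-\rho$ per split token instead of your $(1-\rho)\gamma$. This is still positive since $\rho<1/2$, and it is how the paper argues.
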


While \cite{gm24} demonstrates a PRC with the requisite robustness, their construction requires a large alphabet of tokens, making it unsuitable for our results because it would vastly increase the entropy requirement. Indeed, no known construction of binary-alphabet PRCs provides robustness against a large fraction of both adversarial substitutions and random deletions. 

However, the recent PRC construction of \cite{cggm25} allows instead for robustness against a large fraction of adversarial substitutions and a constant fraction of \emph{adversarial edits}---that is, insertions \emph{and} deletions---but with the caveat that the constant itself is very small and depends on the output's entropy\footnote{That is, it is constant only assuming that the output has a constant entropy average for a sufficiently long substring.}, and additionally that the construction itself is based on a non-standard ``permuted codes" assumption. The authors demonstrate that they can use the framework of \cite{ChrGun24} to construct binary-alphabet watermarking robust to a constant fraction of adversarial edits based on similar entropy requirements to \cite{ChrGun24}; we in turn show as our final result that we can leverage our framework with their PRC construction to construct minimal-entropy watermarking for a large token alphabet that likewise guarantees robustness to a constant fraction of adversarial edits:

\begin{theorem}[Informal]
If, for any $\delta > 0$, there exists a constant $\epsilon > 0$ and a binary PRC which provides robustness against a $(1/2 - \delta)$ fraction of adversarial substitutions and an $\epsilon$ fraction of adversarial edits, then, for any $\delta' > 0$, there exists a constant $\epsilon'$ and a large token alphabet watermarking scheme that is computationally undetectable and, for any LLM output having at least a $\delta'$ fraction of tokens with one bit of entropy, robust to an $\epsilon'$ fraction of adversarial edits.
\end{theorem}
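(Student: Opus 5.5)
We reduce to the binary PRC through a token-level embedding that turns each token carrying one bit of empirical entropy into one binary symbol, whose error rate is bounded away from $1/2$. \textbf{Embedding.} Sample a uniformly random hash $h: T \to \Bit$ as part of the secret key; a PRF keyed by the context (say, the previous few tokens) can be used instead to avoid repeated-key correlations. To embed, generate a PRC codeword $c \in \Bit^n$ and process the tokens one at a time. At step $i$, with next-token distribution $p_i$, let $q_i = p_i(h^{-1}(1))$. We need to sample $t_i \sim p_i$ while making $h(t_i)$ as correlated as possible with a bit that is uniform from the adversary's view. Following \cite{ChrGun24}, set $b_i = h(t_i)$ so that $\Pr[b_i = 1] = q_i$, coupled with $c_i$ as tightly as possible: take $b_i = c_i$ with the maximal probability consistent with the marginal $q_i$. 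Then sample $t_i$ from $p_i$ conditioned on $h(t_i) = b_i$. Since $c$ is pseudorandom, the output is computationally indistinguishable from $p$; in the ideal world $c$ is uniform, $t_i$ has exactly distribution $p_i$, and undetectability follows by a standard hybrid.

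\textbf{Error rate per token.} Detection computes $h(t_1),\dots,h(t_m)$ and runs the PRC decoder. The probability that $h(t_i) \neq c_i$ is $|q_i - 1/2|$. The key step is to show that if $p_i$ has one bit of (min-)entropy, meaning every token has probability at most $1/2$, then over the random choice of $h$, the quantity $|q_i - 1/2|$ is at most $1/2 - \gamma$ with constant probability, for some absolute $\gamma>0$. Indeed $q_i = \sum_t p_i(t)h(t)$ has variance $\frac14\sum_t p_i(t)^2 \le 1/8$. Moreover, with constant probability the two heaviest-mass groups land on opposite sides of $h$, which keeps $q_i$ away from $\{0,1\}$.

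The paper's informal statement speaks of entropy; to convert this to the collision or min-entropy bound the argument needs, we use the empirical-entropy definition. Consequently, among the $\delta' m$ high-entropy tokens, an expected $\Omega(\delta')$ fraction have error rate at most $1/2-\gamma$. All other tokens have error rate at most $1/2$ in the ideal world, and in a binary symmetric sense they are just "noise". So the overall substitution rate is $1/2 - \Omega(\gamma\delta')$.

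\textbf{Concentration and adversary.} The main obstacle is that the tokens are adaptively chosen and $h$ is shared across positions, so these events are not independent. We handle this with a martingale (Azuma) argument over positions, using the PRF keyed by context so that fresh hash randomness is effectively drawn per position. The bound holds with high probability whenever $m$ is large, giving at most a $(1/2 - \delta)$ fraction of substitution errors with $\delta = \Omega(\gamma\delta')$ before any editing. An adversarial edit of one token affects at most one symbol of $h(\tokens)$ (one substitution, insertion, or deletion). Hence an $\epsilon'$ fraction of token edits induces at most an $\epsilon'$ fraction of binary edits on the bit string, plus the embedding substitutions. Choosing $\epsilon'$ at most the $\epsilon$ guaranteed by the PRC for this $\delta$, suitably rescaled for length changes, lets the decoder succeed. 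Soundness follows directly from the PRC's soundness applied to $h(\tokens)$ for text independent of the key.
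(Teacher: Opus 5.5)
Your embedding (the Christ--Gunn maximal coupling applied to the hashed bit) and your undetectability and soundness arguments are fine, but the concentration step has a genuine gap. With one hash $h$ shared by the whole output, an Azuma argument over positions only controls the randomness of $c$ and of the sampling \emph{given} $h$. It cannot make the failure probability over $h$ itself negligible, and that probability can be constant.

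Take a model that outputs a uniform token from $\{A,B\}$ at every step, so every token has one bit of empirical entropy. With probability $1/2$ over $h$ we have $h(A)=h(B)$, so every $q_i\in\{0,1\}$ and the hashed string is constant and independent of $c$. Then $\Detect$ fails with probability about $1/2$ even with no edits.

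Keying $h$ by the previous $w$ tokens does not repair this. For constant $w$ this model has at most $2^w$ contexts, so only $O(1)$ hash functions are ever used, and all of them collapse $A,B$ with constant probability about $2^{-2^w}$. Context keying also breaks your claim that one token edit causes at most one edit in the hashed string, since an edit changes the keys of the next $w$ positions. Letting $w$ grow with $n$ to recover independence then destroys constant-fraction edit robustness.

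The missing idea is the paper's block structure. Split the output into blocks of PRC length $\ell$, embed a fresh codeword in each block with an \emph{independent} hash $h_j$, and let $\Detect$ try every window of length at most $\ell$ against every $h_k$. Inside the $\alpha m$-length substring, at least $\delta k/2$ of the $k$ complete blocks have a $\delta/4$ fraction of high-entropy tokens. Each such block, over its own $h_j$ and independently of the others, has embedding error at most $1/2-\delta/64$ with probability at least $1/4$. Hoeffding across blocks then yields at least $\delta k/16$ such recoverable blocks except with probability $e^{-\Omega(\delta^2 k)}$.

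The edit bound is then a budget argument absent from your proposal. To defeat detection, the adversary must place an $\epsilon_{edit}$ fraction of edits inside every recoverable block. This costs about an $\epsilon_{edit}\delta\alpha/16$ fraction of the output, so any $\epsilon' < \epsilon_{edit}\delta\alpha/17$ works.

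Separately, empirical entropy $\ge 1$ of the sampled token does not give $\max_t p_i(t)\le 1/2$, so your variance bound does not apply as stated. You must instead condition on the sampled token; in the ideal world the token sequence is independent of $h$. This is analogous to the paper's observation that a token $t^*$ with $D(t^*)\le 1/2$ came from a colliding pair with conditional probability $D(t^*)$.
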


This comes with the caveat that the size of the substring required for ``minimal entropy" is larger; in particular, since a theoretical adversary able to make a constant fraction of arbitrary edits could decide to target a sufficient-entropy substring of any output, we require that at least a (larger) constant fraction of the entire output have sufficiently many tokens with a bit of empirical entropy. Combined with the construction of \cite{cggm25}, which in particular is parameterized such that $\epsilon$ is $\Tilde{O}(\delta^3)$ in the above statement, this gives:

\begin{corollary}[Informal]
Assuming the ``permuted codes" assumption (Conjecture 3.1) of \cite{cggm25}, then, for any $\delta' > 0$, there exists a constant $\epsilon' = \Tilde{O}((\delta')^4)$ and a large token alphabet watermarking scheme that is computationally undetectable and, for any LLM output having at least a $\delta'$ fraction of tokens with one bit of entropy, robust to an $\epsilon'$ fraction of adversarial edits.
\end{corollary}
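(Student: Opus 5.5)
This corollary follows by plugging the PRC of \cite{cggm25} into the last informal theorem above; the plan is to make the parameter dependence explicit. Fix $\delta'>0$ and choose a target substitution tolerance $\delta$ for the binary PRC as a function of $\delta'$. The embedding loses a $(1/2-\Theta(\delta'))$ fraction of correlation per embedded bit, since only tokens with one bit of empirical entropy can be reliably correlated with the codeword bit. The natural choice is therefore $\delta=\Theta(\delta')$, so that the embedding errors, possibly adversarially placed, stay below a $(1/2-\delta)$ fraction. Under the permuted codes assumption (Conjecture 3.1 of \cite{cggm25}), their construction gives, for this $\delta$, a binary PRC robust to a $(1/2-\delta)$ fraction of adversarial substitutions and an $\epsilon=\tilde O(\delta^3)$ fraction of adversarial edits. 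Its pseudorandomness under the conjecture supplies the computational undetectability needed to invoke the theorem.

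Next, I would apply the preceding informal theorem with this PRC. I would trace how $\epsilon'$ depends on $\epsilon$ and $\delta'$. Two losses appear. First, an adversary with an $\epsilon'$ fraction of token-level edits produces at most an $O(\epsilon')$ fraction of edits to the embedded binary string, since each token carries one codeword bit. This must be at most $\epsilon$ relative to the codeword length. Second, the codeword is embedded only into the portion of the output that carries entropy, which is roughly a $\delta'$ fraction. Edits concentrated there are amplified by a factor $1/\delta'$ relative to the codeword. Hence we need $\epsilon'/\delta'\lesssim\epsilon$, giving $\epsilon'=\Theta(\delta'\epsilon)=\tilde O(\delta'\cdot(\delta')^3)=\tilde O((\delta')^4)$, as claimed.

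The main obstacle is this second step. One must check that the amplification is really only a single factor of $\delta'$ and not something worse. The worry is that token-level edits could shift the alignment between tokens and codeword positions, turning a few edits into many substitutions. I would handle this by arguing that an insertion or deletion of a token induces only $O(1)$ binary insertions or deletions in the extracted bit string, so alignment is absorbed by the PRC's edit robustness rather than its substitution budget. The remaining work is bookkeeping the constants so that the embedding substitution errors and the edit errors jointly fit within the PRC's $(1/2-\delta,\epsilon)$ guarantee. This is why $\delta$ is set a constant factor below the threshold at which embedding errors are tolerable.
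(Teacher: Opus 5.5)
Your proposal follows the paper's route: instantiate Theorem~\ref{thm:edit} with the \cite{cggm25} PRC at substitution slack $\epsilon_{sub}<\delta/64$, which gives $\epsilon_{edit}=c\,\epsilon_{sub}^3\log(1/\epsilon_{sub})=\tilde O(\delta^3)$. You then lose one more factor through $\epsilon'_{edit}<\epsilon_{edit}\delta\alpha/17$, using that each token edit induces at most one edit to the hashed string, exactly as the paper does.

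One correction to your heuristic: the codeword is not embedded only into the entropic tokens, since every block of $\ell$ tokens carries its own codeword. The extra factor of $\delta$ arises because, by independence of the per-block hash functions and Hoeffding, only about $\delta k/16$ blocks are guaranteed to be ``recoverable'', and the adversary must spend an $\epsilon_{edit}$ fraction of edits on each of them.
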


For clarity, we concisely summarize our results and provide comparison to related results in table form in Figure \ref{fig:results}.

\begin{boxfig}{Summary of our results compared to related results in terms of per-token entropy requirement (over a sufficiently long substring unless otherwise stated); robustness to (random or adversarial) substitutions, insertions, and deletions; and required assumptions.}{results}
    \begin{tabular}{|l||c||c|c|c||l|}
    \hline
     & & \multicolumn{3}{c||}{\textbf{Robustness Guarantees}} & \\
    \hline
    \textbf{Result} & \textbf{Entropy} & \textbf{Subst.} & \textbf{Edits} & \textbf{Adv.?} & \textbf{Assumptions} \\
    \hline
    \hline
    \cite{ChrGun24}    & $O(\log |T|)$      & $1/2 - \epsilon$ &  &  & subexponential LPN    \\
                         & $O(\log |T|)$      & $1/2 - \epsilon$ & $1 - \epsilon$ (del.) &  & subexponential LPN + heuristic\\
    \hline
    \cite{gm24}    & $\text{poly}(\log(\lambda))$      &  & $1 - \epsilon$ & \checkmark & local weak PRFs    \\
    \hline
    \cite{cggm25}    & $O(\log |T|)$      &  & $\epsilon$ & \checkmark & permuted codes    \\
    \hline
    \hline
    Cor. 1    & $O(1)$      & $1/2 - \epsilon$ &  &  & subexponential LPN    \\
    \hline
    Cor. 2    & $O(1)$      & $1/2 - \epsilon$ & $1 - \epsilon$ (del.) &  & subexponential LPN + heuristic\\
    \hline
    Cor. 3    & $O(1)$      & $1/2 - \epsilon$ & $1 - \epsilon$ (del.) &  & PRC with adversarial sub.+del. robustness   \\
    \hline
    Cor. 4    & $O(1)$ overall  &  & $\epsilon$ & \checkmark & permuted codes    \\
    \hline
    
\end{tabular}
\\

\end{boxfig}

\paragraph{Our approach in a nutshell.} At a high level, through using a binary PRC but allowing the generative model output to have a large token alphabet, our approach provides a natural---and advantageous---``middle ground" between the watermarking scheme of \cite{ChrGun24}, which deals with entirely binary tokens, and the watermarking scheme of \cite{gm24}, which deals entirely with a large token alphabet.

The watermarking algorithms constructed by \cite{ChrGun24} are designed for generative models which produce a binary output, and they attempt to embed the codeword of the (binary) PRC directly into the output of the model. The most direct way to transform this into watermarking for a model which outputs tokens from a larger alphabet is, as the authors propose, by simply converting the tokens into a binary representation. As we have mentioned, this comes at the cost of increasing the entropy requirement per output token proportionally to the token's binary length. However, a more concerning issue with this approach is that this transformation does not preserve the token structure of the model's output and thus dramatically changes the class of attacks against which the watermarking is robust; random substitutions or deletions of single bits in a binary encoding are incomparable with the more practical class of attacks (e.g., paraphrasing or redaction) which manipulate the output on the token level in an attempt to transform a model's watermarked output into a similar but non-watermarked output.

Our approach preserves the token structure of the output by, rather than attempting to embed the PRC codeword directly into the output, hashing the output tokens into bits and attempting to embed the codeword into the \emph{hashes} of the output tokens; as a result, the output's structure is preserved, as each token corresponds to exactly one bit of the PRC codeword. \cite{gm24} uses a similar approach, but rather than hashing the output tokens to binary and embedding a binary PRC codeword, they embed a codeword of a PRC with a polynomial-sized alphabet and hash the output tokens into that alphabet. This allows them to obtain significantly stronger robustness guarantees, but at the cost of dramatically increasing the per-token entropy requirement to be logarithmic in the security parameter, since successful embedding requires the hash of the output token distribution to be close to uniform over the PRC's alphabet. In contrast, our approach has more limited robustness, but achieves it with a far more reasonable requirement of constant entropy because we restrict the PRC to a binary alphabet.

\subsection{Technical Overview}

We construct our watermarking scheme based on a binary PRC such as those of \cite{ChrGun24}; while the large alphabet PRC of \cite{gm24} features very strong robustness guarantees, the requirement of a large alphabet makes it ill-suited to the constant-entropy case we wish to deal with. The high level idea is to reduce a large alphabet output into binary by hashing each token using a random hash function $h\colon T \to \Bit$, and then embedding the binary PRC codeword in the binary hashed output, using the same scheme as \cite{ChrGun24}. The main difficulty is to choose the hash function in a way that (1) preserves the entropy, while (2) ensuring that we still can handle deletions. Using a new hash function for each token would deal with (1), but would prevent (2). We show that using the same hash function for a sufficiently long block of tokens---specifically, for each codeword of the PRC we wish to embed---will achieve a compromise that simultaneously enables dealing with (1) and (2). 

We proceed to provide a more detailed exposition of the scheme and the analysis idea.
We start by explaining how to embed a bit from the PRC into the output of the LLM, and next discuss how to appropriately choose the hash functions.

\iffalse In more detail, given a binary PRC codeword, we watermark an LLM output by embedding that codeword is to use a hash function to hash each token to a bit, after which we use a similar approach to \cite{ChrGun24} or \cite{gm24} to attempt to correlate the hash of the token with the PRC codeword with the maximum probability with which it is possible to do so without biasing the output distribution. Detection, then, can be performed by hashing the output tokens; if enough of the correct bits of the codeword were successfully embedded to satisfy the robustness of the PRC, then the codeword can be successfully detected, indicating the presence of the watermark.
\fi

\paragraph{Embedding PRC codewords in Binary Tokens (warm-up).} To explain our approach, let us first review the approach of \cite{ChrGun24} for embedding a PRC codeword in a \emph{binary} LLM. 
Let $D$ be the next token distribution of the LLM; since the LLM is binary, $D$ is  a Bernoulli distribution; such a distribution can be viewed as with some probability $p$ (``success") sampling a uniform bit, and with probability $1-p$ (``failure") picking some fixed deterministic value (0 or 1).\footnote{For example, if $D$ is $1$ with probability $\epsilon<1/2$ and $0$ otherwise, $D$ can be written as a combination of a uniform bit with probability $2\epsilon$, with the constant $0$ with probability $1-2\epsilon$. }
 Thus, since the PRC codeword we are given is pseudorandom, it suffices, in order to preserve the distribution of the LLM's output, to select a token according to the  PRC whenever the uniform distribution over $\Bit$ is sampled (i.e., the ``success" case which happens with probability $p$). If instead the fixed value is chosen by $D$ (i.e., the ``failure" case, which happens with probability $1-p$), then we must select that fixed element. This is the case in which the watermarking scheme may create an error since that element, with probability 1/2, it will not be equal to the corresponding bit of the PRC codeword.
When the average value of $p$ is sufficiently large, this only introduces a small number of errors in expectation, meaning that the watermark will be detectable by recovering the PRC codeword with overwhelming probability, even when additional errors may be introduced.

\paragraph{Dealing with Non-Binary Tokens.} 
\cite{ChrGun24} also provides a method for reducing non-binary tokens to binary: simply take the binary representation of the non-binary token. This method, however, increases the entropy requirement; while the authors suggest using a more efficient encoding such as Huffman to mitigate this, doing so would, as previously mentioned, require additional distributional assumptions that we want to avoid.\footnote{As \cite{ChrGun24} point out, even when the binary representation of the token is non-trivial this translation is not preserved under random errors; they thus suggest simply taking the first bit of the binary representation of the token with the hope that this bit has high entropy. For the Huffman encoding, this holds when each token is distributed according to the assumed a priori distribution. But when the token distribution is significantly different, this does not necessarily hold.}

Thus, we instead opt to hash non-binary tokens into a binary token, use the above approach to encode the PRC into the output of the hash $z$, and finally obtain the non-binary codeword by sampling tokens from the distribution conditioned on the hash value being $z$; the watermark can then be detected by hashing the output tokens into the encoded PRC codeword. For our purposes, and to enable our analysis, we will rely on a variant of this approach. To sample the next token from the distribution $D$, we will first sample two independent samples $t_0,t_1$ from $D$. Next, if $h(t_0) \neq h(t_1)$ (``success"), we output token $t_b$ where $h(t_b)$ is the bit of the PRC we are trying to encode. Otherwise, $h(t_0) = h(t_1)$, and we simply output a randomly chosen $t_b$ (``failure"), noting that this $t_b$ has a probability of 1/2 to incorrectly encode the corresponding PRC bit.
As we shall argue, the probability of the failure event cannot be too large, and thus this encoding will not introduce too many errors in the PRC.

\paragraph{Selecting hash functions.} Another important consideration is how we generate random hash functions $h$ for hashing tokens. A na\"ive way is to generate an independently random function $h$ to hash each token of the output, and include the description of every function $h$ used as part of the secret key used to detect the watermark. While this is simpler to analyze and works well for the case of robustness against substitutions, it is easy to see that this falls apart as soon as insertions or deletions are considered, since there is no longer a way to link an output token to the respective hash function. Other works such as \cite{kgwk23} consider alternative approaches such as using the previous token or a consecutive string of previous tokens as randomness, but these are similarly vulnerable to insertions, deletions, or other attacks.

We opt to split the output of the LLM into \emph{blocks}, each block corresponding with a different randomly generated PRC codeword\footnote{Splitting the output in this way is a common idea in watermarking schemes since the redundancy allows a watermark to be detected as long as a single PRC codeword can be recovered.}, and generate a new random hash function to be applied to \emph{every} token in a single block---in other words, we \emph{reuse} the \emph{same} hash function for every token in the block. With this reuse, we are able to deal with deletions by simply using the respective hash function to attempt to decode each specific block of the output. The problem, of course, is that it creates additional complexity in the analysis of robustness. 

Namely, if the hash function were generated independently for each token, it is easy to assert that the probability of each token of the watermarking algorithm generating an error is likewise independent from other tokens. But, with the necessary reuse of hash functions, that is no longer the case---indeed, a pathological model can attempt to create correlated errors by reusing certain tokens or distributions of tokens (e.g., distributions with a significant probability of hashing to a single value and giving $p=0$), or an adversary against robustness can abuse the fact that every instance of a commonly used token will hash to the same value by replacing every instance of that token with a different token to attempt to influence the recoverability of the PRC codewords with significant probability.

It is not clear how to circumvent the potential for this kind of adversarial behavior while still preserving robustness against deletions in the setting of a binary PRC, which is why we (and \cite{ChrGun24} before us) require either leveraging a PRC that is robust against specifically \emph{adversarial} substitutions or making an additional assumption on the model requiring the errors created to be ``random enough" to circumvent these pathological examples---the authors of \cite{ChrGun24} refer to this assumption as the output being ``sufficiently variable" or ``having few repeated words". (The strong robustness property of the large alphabet PRC of \cite{gm24} circumvents this issue by being robust to adversarial errors, but leveraging a PRC with such a large alphabet innately introduces a very strong entropy requirement.)

\paragraph{Key Insights for the Analysis.}
We proceed to provide the key ideas required for the analysis of the above scheme. Roughly speaking, our goal is to show that the above embedding procedure will introduce a small number of errors into the PRC codeword. The above assumption on the LLM, together with the fact that the remainder of the errors introduced are random, will provide that, with high probability, at least one of the PRC codewords embedded in the output will still be recoverable by hashing the (edited) output text.
Towards this, we proceed in two steps:

\paragraph{Analyzing high-entropy tokens.} First, notice that if a token $t_i$ is drawn from the output distribution $D$ and has empirical entropy at least 1, then $D(t_i) \leq 1/2$, which in particular means that the probability that a pair $(t_0, t_1)$ was drawn with $t_0 = t_1=t_i$ is likewise at most 1/2. If $t_0 = t_1$, then of course $h(t_0) = h(t_1)$, and this is the aforementioned ``failure" case where there is a 1/2 probability for the respective bit of the PRC to not be successfully encoded. On the other hand, if $t_0 \neq t_1$, then, over a randomly selected hash function $h$, there is a 1/2 probability for $h(t_0) = h(t_1)$, which once again carries a 1/2 probability over the PRC codeword for the encoding to be unsuccessful; however, if $h(t_0) \neq h(t_1)$, then we arrive in the aforementioned ``success" case where the encoding will always succeed. 

Putting this all together, we see that, given a token $t_i$ with empirical entropy of at least 1, there is at most a 3/4 probability (over randomly selected $h$) of arriving in the ``failure" case and at least a 1/4 probability of the ``success" case, meaning that the encoding has at least a 5/8 probability of success. Indeed, in the simplified case where a different hash function $h$ is drawn for every token and these probabilities are thus independent, this is sufficient to conclude (by a concentration bound such as Hoeffding's inequality) that close to 5/8 of the tokens drawn with empirical entropy at least 1 correspond to successfully embedded bits of the PRC codeword; thus, if there exist even a constant fraction $\delta$ of such tokens, the error rate of the embedding process is approximately $(1/2 - \delta / 8)$.

\paragraph{Random substitutions with a large token alphabet.} At this point, if one can show that the expected error rate remains a constant less than $1/2$ after even a $1/2 - \epsilon$ rate of random substitutions, it is straightforward to show that, leveraging a PRC with robustness against a $1/2 - O(1)$ fraction of (adversarial) substitutions, some block would be recoverable with overwhelming probability. In the model of \cite{ChrGun24} with a binary alphabet where random substitutions simply reverse each bit with some independent constant probability $p < 1/2$, this is very easy to show, as each error bit simply becomes a non-error (and vice versa) with probability $p$.

However, in our model, since the bit we embed is the hash of a token rather than the token itself, there is no guarantee that substituting a token for a different one will change its hash value, and thus we require slightly different analysis. At a high level, we leverage the fact that, because the PRC codeword is pseudorandom and each bit is 1 or 0 with probability $1/2$, any token that was drawn independently of the respective PRC bit must be an error with probability $1/2$ as well. In fact, the only time when a token is \emph{not} drawn in such a way is in the ``success" case above where $h(t_0) \neq h(t_1)$, which, if it occurs with some probability $c \geq \delta/4$, bounds the expected error rate by $(1/2 - c / 2)$ according to the argument above.

The key insight to our approach is then the observation that, as long as a ``random substitution" replaces (with a given independently drawn probability) a token with some (potentially arbitrary) distribution over tokens that is determined only by the replaced token itself and no additional context, randomly substituting a token drawn independently of the PRC codeword will always result in a token that is likewise independent of the PRC codeword, and thus must once again have exactly a $1/2$ probability to be an error---in other words, the expected number of errors generated from tokens outside of the ``success" case is always zero! And so, even in the worst possible case where every substitution on a token in the ``success" case changes it from a non-error to an error, the expected number of errors generated is still only $(1/2 - \epsilon)c < c/2$, meaning that the expected error rate remains under 1/2 as desired.

\paragraph{Dealing with reused hash functions.} In fact, the analysis above works almost directly if we select a new function $h$ for every token; however, as mentioned, while this works for the case of robustness strictly against substitutions, it fails to provide robustness against even a single insertion or deletion; thus, we turn to the alternative approach of using the same $h$ to embed an entire PRC codeword. Compared to using the same hash function for the entire output (as in \cite{gm24}), this carries the advantage that different blocks have independent probabilities of the PRC codeword being sufficiently embedded (which provides redundancy, as we only need a single block to be recoverable to detect the watermark); however, we can no longer rely on independence between the embedding probabilities of different tokens within the same block to bound the number of tokens that are able to be successfully embedded.\footnote{For example, if the model were to generate every token within a block from a uniform distribution over the same two elements, there would only be two possible outcomes, each with probability 1/2: either $h$ ``splits" the two tokens and the entire PRC codeword would be successfully embedded, or that $h$ would map both tokens to the same bit and the hash of the output would consist of only that bit.} 

Instead, we need to rely on an averaging argument to assert that, with probability 1/4 over the choice of $h$, at least 1/4 (the expected fraction) of the tokens with sufficient entropy result in $h$ being ``split" and thus the respective bit of the PRC codeword being successfully embedded; in this case, the tokens where $h$ is not ``split" still have an independent 1/2 probability for the embedding to succeed (since irrespective of what the selected token hashes to, the PRC codeword bit is pseudorandom), and so, as long as the respective block has at least a constant fraction of tokens with sufficient entropy, we again can assert that the embedding process has at most $(1/2 - \epsilon)$ error rate for that block. And since the 1/4 probability of this occurring is indeed independent between blocks (since we redraw $h$ for each block), there will with overwhelming probability still be a constant fraction of blocks with low enough error rate from embedding that the embedded code will still be recoverable even after random substitutions and deletions in the output, even if only a polynomial number of blocks (e.g., a substring containing $n$ consecutive blocks) have sufficiently many high-entropy tokens.

\paragraph{Adding deletion and edit robustness.} Given a PRC-based watermarking scheme robust against random substitutions, one could intuitively add robustness to even as much as a $1 - O(1)$ fraction of random deletions simply by leveraging an underlying PRC which is additionally robust against a similar $1 - O(1)$ rate of random deletions; the case of random substitutions is the only case where the errors introduced by the embedding process need to be added since, naturally, the errors introduced by the embedding process are strictly limited to substitutions. However, recalling that, for the case of random substitutions, we needed a PRC that was robust against specifically \emph{adversarial} substitutions (since, indeed, the errors introduced by embedding are not necessarily independently random), this would mean that robust watermarking against random substitutions and deletions would require the PRC to be robust against adversarial substitutions and random deletions; while \cite{ChrGun24} gives a construction of a PRC robust against \emph{random} substitutions and random deletions, there is no known construction of a binary \footnote{The large alphabet PRC of \cite{gm24} does provide the necessary robustness properties, but, as we have mentioned, leveraging a PRC with a large alphabet would dramatically increase the entropy requirement for the output.} PRC with this stronger notion of robustness against a large fraction of random substitutions and deletions.

\cite{ChrGun24}, in their construction of watermarking robust against random substitutions and deletions, circumvents this issue by using their PRC robust against \emph{random} substitutions and random deletions, but adding an additional assumption that the errors introduced by the embedding process are modelable as a binary symmetric channel (i.e., an independent and constant probability $p$ for each token to be incorrectly embedded into the watermarked output), an assumption that they refer to as the output tokens being ``sufficiently variable". This way, since \cite{ChrGun24} considers binary tokens and random substitutions made to the watermarked output are thus also given by a binary symmetric channel, the resulting output will be subject to the robustness guarantees of the underlying PRC since the composition of two binary symmetric channels is likewise a binary symmetric channel. 

Trying to apply this approach to our model with a large alphabet of tokens, however, immediately fails because random substitutions are defined differently; it is unclear, and additionally dependent on the hash function $h$, whether replacing a given token by a different token will even change the respective hash bit. In order to prove a similar result in our regime, we would require the assumption that, not only does the embedding process create errors in the pattern of a binary symmetric channel, but also the random substitutions made to the output tokens introduce hash bit flips in the pattern of a binary symmetric channel (with respect to the given hash function $h$ for any specific block)---alternatively, the weaker assumption that the same is true of the composition of the embedding process and the random substitutions also suffices.

The primary issue with these heuristic assumptions is that even the weaker assumption of \cite{ChrGun24}---that the embedding errors resemble a binary symmetric channel---seems unrealistic in practice. Recalling that the success rate of embedding is highly dependent on the empirical entropy of the respective token, a straightforward counterexample for this is the ``emoji attack" where a zero-entropy token (some deterministic emoji) is introduced between every pair of ``real" tokens in the output, leading to those tokens having significantly higher embedding error rates than the tokens with non-zero entropy. However, even setting aside such pathological examples, there are ubiquitous patterns in natural language (e.g., common phrases or $n$-grams, or common prefixes in LLM responses given a certain prompt) where the entropy of tokens is significantly lower in certain blocks of tokens than in others, not to mention correlated between different tokens, making it dubious that one could find sufficiently ``variable" outputs for the probability of an embedding error to be constant and independent across all tokens in even a significant portion of an output.

Thus, we opt to prove our result in two ways; the first is by using the heuristic assumption discussed above, whereas the second circumvents the need for such an assumption by instead leveraging a PRC with robustness against \emph{adversarial} substitutions and random deletions, albeit with the caveat that there is no currently known construction of such a PRC robust to a large fraction of random deletions. Once we apply either of these approaches, demonstrating robustness against random deletions in our existing watermarking algorithm becomes relatively straightforward as discussed above, since random deletions to the watermarked output translate directly to random deletions in the PRC codeword and are thus immediately subject to the corresponding definition of robustness for the PRC.

Interestingly, the recent PRC construction of \cite{cggm25} can, as the authors show, be parameterized to be robust against a $1/2 - \epsilon$ fraction of adversarial substitutions and a small ($\Tilde{O}(\epsilon^3)$) fraction of \emph{adversarial edits} (insertions or deletions), which provides a solution to the above issue without requiring additional heuristic assumptions. While we leverage this in our final result to construct similarly edit-robust watermarking, we note that this does not supersede our discussion of or results on random deletions due to the bound on the robustness parameters being quite small (and additionally the requirement on the minimal-entropy substring length increasing to a constant fraction of the output, as opposed to merely $n$ blocks of the PRC for our other results). Obtaining low-entropy large-alphabet watermarking with robustness against a significant, let alone $1-\epsilon$, constant fraction of even random deletions without the need for such heuristic assumptions remains an open question.

\subsection{Comparison with Related Work}
\label{sec:comparison}

Our work builds upon the previous undetectable watermarking schemes of \cite{ChrGun24}, \cite{cgz24}, and \cite{gm24} by providing a scheme with both a low entropy requirement for the tokens generated by the model and strong robustness guarantees against attacks that, notably, manipulate tokens in the natural output directly rather than relying on a reduction to a binary alphabet. To explain what we mean by this and why it is advantageous, recall that, in the results of \cite{ChrGun24}, which focus on watermarking over a binary alphabet (i.e., every token is 0 or 1), one can simply model a $p$ fraction of random substitutions as a binary symmetric channel, i.e., flipping each bit independently with probability $p$. 

To extend their results to an output drawing from a large alphabet of tokens, the authors propose translating each token to a binary encoding, either by using a fixed-length encoding or a more efficient variable-length (e.g., Huffman) encoding, and generating a watermarked binary (translated) output by using a modified generative model that generates the binary translations of its output tokens. While they indeed prove robustness against random \emph{binary} substitutions to bits in the translated output, this definition of robustness loses its meaning entirely when one considers the more natural definition of substitutions made to tokens in the untranslated output, as mentioned in \cite{ChrGun24}. For instance, even using a fixed-length encoding, random binary substitutions change the corresponding untranslated tokens in ways that are not necessarily random and depend on which bits are affected; worse still, using a variable-length encoding means that a single random substitution to the translated binary output can potentially alter \emph{every} untranslated token in the entire output. In the other direction, since each token in the untranslated output corresponds to multiple bits in the binary translation, changing any token will correspond to modifying a correlated block of bits in a way that is not necessarily even independent between the bits in the block.

In contrast to \cite{ChrGun24}, we show robustness against modifications specifically to tokens in the output, which we remark is a far more natural definition as it preserves the structural features of the output that would be discarded by manipulating a translation of the output to a binary encoding. While it is not initially clear how one would even define a ``random substitution" for the case of a large set of tokens---in particular, while it is fairly straightforward to start with a similar definition to a binary symmetric channel by modifying each token with some independent probability $p$, it is less clear how each token would be modified given a large alphabet of tokens---we show that we can achieve robustness against even a class of random substitutions where an adversary may, a priori (i.e., before seeing the output), choose, for each token in the alphabet, an arbitrary distribution of tokens to replace an occurrence of that token if it is selected randomly to be substituted. This captures some basic semantic attacks such as limited forms of substitution or paraphrasing where an adversary may opt to replace some randomly chosen words with synonyms in an attempt to remove a watermark, giving robustness to a variety of practical attacks.% whereas the class of attacks on a binary translation of the model's output (as considered by \cite{ChrGun24}) is far more bizarre and less applicable when considering the setting of an adversary who wishes to transform a watermarked output into one that is still coherent and intelligible but does not possess a watermark.

The approaches of \cite{cgz24} and \cite{gm24} also consider outputs over a large token alphabet as we do. \cite{cgz24} considers a weaker form of robustness, \emph{substring completeness}, which requires that the watermark be detectable if the detection algorithm is given any unmodified substring of sufficient empirical entropy from the output; while they also consider reducing to a binary alphabet via fixed-length encoding in their construction of substring-complete watermarking, we note that this notion, unlike the stronger notion of robustness, is preserved in the translation to a larger token alphabet (as, clearly, an unmodified substring in the binary representation is likewise unmodified when translated to tokens). That said, a substring-complete watermarking scheme requires that a contiguous substring of the output be unmodified to be detectable, and, as the entropy requirement for such a substring is dependent on the security parameter, such a scheme cannot even provide robustness against any constant fraction of random substitutions.

Meanwhile, \cite{gm24} provides a very strong notion of \emph{edit distance robustness} against any constant fraction of even adversarial substitutions, deletions, or insertions; they do so by transforming a binary PRC robust to adversarial substitutions into a large-alphabet PRC with this improved notion of edit distance robustness, and applying a similar approach to that of \cite{ChrGun24} and our work to the transformed PRC. Specifically, their construction of an edit-distance-robust PRC works by \emph{indexing} the original PRC---roughly speaking, using an alphabet whose size is equal to the \emph{block length} of the original PRC, and letting each token in the expanded alphabet represent the presence of a 1 in its corresponding position in the respective block of the underlying PRC, so that insertions or deletions in the transformed PRC effectively become substitutions in the original. Of course, the drawback of this approach, as we have mentioned, is that the extremely large required alphabet size of the PRC leads to a correspondingly large entropy requirement (i.e., logarithmic in the security parameter) for the generative model's output, something which we explicitly seek to avoid in our results because natural language generally possesses large amounts of low-entropy tokens.

\paragraph{PRC constructions.} As mentioned above, \cite{ChrGun24} construct PRCs robust against adversarial substitutions or random substitutions and deletions, assuming the hardness of LPN. Under the same assumption, \cite{alrabiah2025ideal} showed the existence of PRC with stronger security guarantees, i.e., CCA security and adaptive robustness. \cite{ghentiyala2024new} construct PRCs against adversarial substitutions from weaker assumptions, as well as unconditionally-secure PRC with pseudorandomness against space-bounded adversaries. Most recently, \cite{cggm25} construct PRCs featuring subexponential security, a stronger notion of ``strongly adaptive robustness" which guarantees adaptive robustness even against an adversary with full knowledge of the watermarking key, and robustness to a constant fraction of (even arbitrary or adversarial) edits without the prohibitive entropy requirement of \cite{gm24}, albeit based on a non-standard ``permuted codes" assumption. In the realm of negative results, \cite{garg2025black,dackttling2025separating} recently showed that PRCs cannot be constructed from a wide-range of cryptographic primitives in a black-box manner.

\section{Definitions}

Let $\Nat = \braces{1, 2, 3, \ldots}$ and $[n] = \braces{m \in \Nat \mid m \leq n} = \braces{1, 2, \ldots, n}$. Take $1^n$ (given as an input to certain algorithms to represent the security parameter) to be the string of $n$ ones. Let $\textbf{1}_{P}$ be the \emph{indicator function} of some equality or inequality statement $P$, defined as being 1 when $P$ is true and 0 otherwise. Given a set $X$, we define $\Delta(X)$ to be the set of probability distributions over elements in $X$.

We say a statement is true for ``sufficiently large $n \in \Nat$" if there exists $N \in \Nat$ such that the statement holds for all $n \geq N$. We shall call a function $\epsilon(\cdot)$ \emph{negligible} if, for any polynomial $p(\cdot)$, $\epsilon(n) < 1/p(n)$ for sufficiently large $n \in N$; conversely, a function is \emph{non-negligible} or \emph{polynomial} if the above is not true. We refer to an algorithm as \emph{efficient} if its running time is polynomial in its input size.

Given a discrete distribution $D$, we define the (binary) \emph{entropy} of $D$ to be the quantity
\[H(D) = \sum_{x \in D} -D(x)\log_2 D(x)\]

Furthermore, given a selection $x \leftarrow D$ from a distribution, we define the \emph{empirical entropy} of $x$ in $D$ as the quantity $H_\text{emp}(x, D) = -\log_2 D(x)$.

\subsection{Error Channels}

Towards defining robustness for watermarking generative models whose outputs are large alphabets, we first must define generalized notions of error channels over families of large token alphabets. Formally:

\begin{definition}
Given some vocabulary of tokens $\V$, a \textbf{channel} is a (potentially randomized) algorithm $\E: \V^* \rightarrow \V^*$ that perturbs an output using tokens from $\V$.

Additionally, given a family of vocabularies $\V(n)$ parameterized by a security parameter $n \in \Nat$, we may consider a \textbf{family of channels} $\E = \braces{\E(n)}_{n \in \Nat}$ where, for each $n$, $\E(n)$ is a channel over vocabulary $\V(n)$.
\end{definition}

We next define standard notions of random substitution and deletion channels:

\begin{definition}
Given a family of vocabularies $\V(n)$ parameterized by a security parameter $n \in \Nat$, we define the \textbf{random deletion channel family} $\E_{\text{del}}(p) = \braces{\E(p, n)}_{n \in \Nat}$ where $\E(p, n)$, given constant $p \in [0,1]$ and security parameter $n$, takes as input a series of tokens $\tokens \in \V(n)^*$ and, for each $t \in \tokens$ in order, appends it to $\tokens'$ with probability $1-p$, then outputs $\tokens'$.
\end{definition}

\begin{definition}
Given a family of vocabularies $\V(n)$ parameterized by a security parameter $n \in \Nat$ and a family of functions $f = \braces{f_n: \V(n) \rightarrow \Delta(\V(n))}_{n \in \Nat}$, we define the \textbf{random substitution channel family} $\E_{\text{sub}}(f, p) = \braces{\E(f, p, n)}_{n \in \Nat}$ where $\E(f, p, n)$, given constant $p \in [0,1]$ and security parameter $n$, takes as input a series of tokens $\tokens \in \V(n)^*$ and outputs $\tokens' \in \V(n)^*$ where $t'_i = \braces{x :x \leftarrow f_n(t_i)}$ with probability $p$ and $t'_i = t_i$ otherwise.

For the case of a binary alphabet $\V = \Bit$, we also define the analogous \textbf{binary symmetric channel} $\E_{\text{bin}}(p)$, which takes as input a series of tokens $\tokens \in \Bit^*$ and outputs $\tokens' \in \Bit^*$ where $t'_i = 1 - t_i$ with probability $p$ and $t'_i = t_i$ otherwise.
\end{definition}

Notably, the definition of a random substitution channel for a non-binary vocabulary of tokens allows for instances of any token to be substituted with a (potentially arbitrarily determined) distribution over other tokens, capturing a variety of attacks such as simple paraphrasing attacks or randomized censorship that are practical when considering a large token alphabet but cannot be captured by random substitutions in a binary setting such as that of \cite{ChrGun24}.

Lastly, we define notions of arbitrary (potentially adversarial) channels in the binary and large-alphabet settings:

\begin{definition}
We refer to a channel $\E$ as a \textbf{$p$-bounded substitution channel} over a vocabulary $\V$ if, given some input $\tokens \in \V^*$, $\E$ always outputs $\tokens' \in \V^*$ where $\text{len}(\tokens') = \text{len}(\tokens)$ and, letting $\ell = \text{len}(\tokens)$, $t'_i \neq t_i$ for at most $p \ell$ indices $i \in [\ell]$.

If $\V = \Bit$, we refer to such a channel as a \textbf{$p$-bounded binary substitution channel}.
\end{definition}

\begin{definition}
Given strings $x$ and $x'$, let $D_\text{edit}(x, x')$ be the \emph{edit distance} between $x$ and $x'$, or the minimum number of insertions and deletions required to transform $x$ into $x'$ (or, symmetrically, vice versa). We refer to a channel $\E$ as a \textbf{$p$-bounded binary edit channel} over a vocabulary $\V$ if, given some input $\tokens \in \V^*$, $\E$ always outputs $\tokens' \in \V^*$ where $D_\text{edit}(\tokens, \tokens') \leq p \cdot \text{len}(\tokens)$.

If $\V = \Bit$, we refer to such a channel as a \textbf{$p$-bounded binary edit channel}.
\end{definition}

That is, a substitution channel is $p$-bounded if it makes at most a $p$ fraction of substitution edits to its input, and, respectively, a $p$-bounded edit channel makes at most a $p$ fraction of insertion or deletion edits, though these edits may be determined adversarially or arbitrarily.

\subsection{Generative Models and Watermarking}

We define generative models over large alphabets following the framework of \cite{gm24}, as computationally bounded algorithms which, given some input (prompt) and a sequence of tokens in a vocabulary $\V$ generated thus far, returns a probability distribution over $\V$ representing the next token to be generated.

\begin{definition}[\cite{gm24}]
A \textbf{generative model} is an efficient algorithm $\Generate$ over some vocabulary $\V$ that, given a prompt and sequence of tokens $\prompt, \vec{t} \in \V^*$, outputs a probability distribution $D \in \Delta(\V)$.

A \textbf{generative model family} is a family of generative models $\braces{\Generate(n)}_{n \in \Nat}$ parameterized by a security parameter $n$, such that $\Generate(n)$ operates on vocabulary $\V(n)$ where $|\V(n)|$ is bounded above by some polynomial $p(n)$, and so that there exists symbol $\End$ belonging to $\V(n)$ for every $n \in \Nat$ and polynomial $m(n)$ (the \textbf{maximum output length}) such that if $|\vec{t}| \geq m(n) - 1$ then $D \leftarrow \Generate(n)(\prompt, \vec{t})$ must have $D(\End) = 1$.
\end{definition}

To \emph{watermark} the output of a generative model, then, is to generate an output in such a way that it can be determined, using a secret key, that the output came from the watermarked model. Ideally, such a watermark preserves the distribution of the output from the model, is undetectable to parties not possessing the secret key, and is also robust to manipulation by an adversary attempting to remove the watermark. Formally:

\begin{definition}[\cite{gm24}]
Assume a family of generative models $\M = \braces{\Generate(n)}_{n \in \Nat}$ with vocabulary $\V(n)$ and maximum output length $m(n)$, and consider a tuple of efficient algorithms $(\Gen, \Watermark, \Detect)$, where:
\begin{itemize}
\item $\Gen(1^n) \rightarrow \sk$ takes as input a security parameter $n$ and outputs a secret key $\sk$,
\item $\Watermark_\sk(1^n, \prompt) \rightarrow \tokens$ takes as input the security parameter $n$, secret key $\sk$, and prompt $\prompt$, and outputs a sequence of tokens $\tokens \in \V(n)^{m(n)}$.\footnote{As in \cite{gm24}, if the model's output is shorter than the maximum length $m(n)$, we assume that the remainder of $\tokens$ is padded by repeating a special ``ending" token $\End$.}
\item $\Detect_\sk(1^n, \tokens) \rightarrow b$ takes as input the security parameter $n$, a secret key $\sk$, and a sequence of tokens $\tokens \in \V(n)^{m(n)}$, and outputs boolean $b \in \braces{\true, \false}$ corresponding to whether a watermark was detected in $\tokens$.
\end{itemize}

We refer to $(\Gen, \Watermark, \Detect)$ as a \textbf{secure watermarking scheme} for $\M$ if the following properties hold:
\begin{itemize}
\item \textbf{Soundness:} For any $\tokens \in \V(n)^{m(n)}$, there is negligible $\epsilon(\cdot)$ such that:
\[\Pr[\sk \leftarrow \Gen(1^n) : \Detect_\sk(1^n, \tokens) = \true] \leq \epsilon(n)\]
\item \textbf{Undetectability:} For any probabilistic polynomial-time oracle-aided adversary $\A(\cdot)$, there is negligible $\epsilon(\cdot)$ such that:
\[|\Pr[\sk \leftarrow \Gen(1^n) : \A^{\Watermark_\sk(1^n, \cdot)}(1^n) = 1] - \Pr[\A^{\O_\Generate(1^n, \cdot)}(1^n) = 1]| \leq \epsilon(n)\]

where $\Watermark$ is simply an oracle that outputs the result of $\Watermark$ given a prompt $\prompt$, and $\O_\Generate$ is an oracle that on input $\prompt$, generates a complete output from $\M$ by outputting the result of the following procedure:
\begin{enumerate}
\item Initialize $\tokens = ()$.
\item For $i \in [m(n)]$, generate $t_i \leftarrow \Generate(\prompt, \tokens)$ by drawing from the respective distribution output from $\Generate$, and append $t_i$ to $\tokens$. If $t_i = \End$ then pad the remainder of $\tokens$ with $\End$ and terminate, returning $\tokens$.
\end{enumerate}

\end{itemize}
Furthermore, for channel family $\E = \braces{\E(n)}_{n \in \Nat}$ where $\E(n) : \V(n)^* \rightarrow \V(n)^*$, we refer to $(\Gen, \Watermark, \\ \Detect)$ as \textbf{robust with respect to $\E$} if there is negligible $\epsilon(\cdot)$ such that for any prompt $\prompt$:
\[\Pr[\sk \leftarrow \Gen(1^n), \vec{t} \leftarrow \Watermark_\sk(1^n, \prompt) : \Detect_\sk(1^n, \E(n)(\vec{t})) = \false ] \leq \epsilon(n)\]

We may additionally restrict this definition to a set of prompts $\Pi$ (where robustness holds if the above is true for prompts $\prompt \in \Pi$).
\end{definition}

In order for an output to be undetectably and robustly watermarkable, it also must have some amount of entropy.\footnote{Consider for instance a prompt which instructs an LLM to output a determinstic word or string; it is trivial to see that such an output could not possibly be watermarked satisfactorily, since changing the output would violate undetectability while keeping it the same would violate robustness against any sort of editing.} While restricting the set of possible prompts is a way to achieve this for specific generative models, we consider for the general case a notion of ``entropy-restricted substring robustness" where we only require robustness to hold for outputs which have a sufficiently long substring containing a certain fraction of tokens with sufficiently high empirical entropy:

\begin{definition}
Assume a family of generative models $\M = \braces{\Generate(n)}_{n \in \Nat}$ with vocabulary $\V(n)$ and maximum output length $m(n)$, and let $(\Gen, \Watermark, \Detect)$ be a secure watermarking scheme for $\M$. Then, for a channel family $\E = \braces{\E(n)}_{n \in \Nat}$ where $\E(n) : \V(n)^* \rightarrow \V(n)^*$, we refer to $(\Gen, \Watermark, \Detect)$ as satisfying \textbf{$(L(n), \delta, h)$-entropy-restricted substring robustness with respect to $\E$} if there is negligible $\epsilon(\cdot)$ such that for any prompt $\prompt$ (which may be specified to be restricted to a set $\Pi$):
\begin{align*}
\Pr[\sk \leftarrow \Gen(1^n), \vec{t} \leftarrow \Watermark_\sk(1^n, \prompt) : & \exists j, k \in [\text{len}(\tokens)] : k - j \geq L(n)  \\ 
\text{and } \#\braces{i \in [j, k] : H_\text{emp}(t_i, \Generate&(\prompt, \tokens_{\leq i-1})) \geq h} \geq \delta (k - j)  \\
\text{ and } & \Detect_\sk(1^n, \E(\vec{t})) = \false  ]\leq \epsilon(n)
\end{align*}
where $\tokens_{\leq i-1}$ denotes the substring consisting of the first $i-1$ tokens in $\tokens$.
\end{definition}

That is, this definition requires that robustness holds only if there exists a substring of the output with length at least $L(n)$ where at least a $\delta$ fraction of the output tokens were generated with at least $h$ bits of empirical entropy with respect to the generative model's output distribution.

\subsection{Pseudorandom Error-Correcting Codes}
Closely related to watermarking generative algorithms is the concept of \emph{pseudorandom error-correcting codes}, introduced in \cite{ChrGun24}. These are a strengthening of the classical notion of error-correcting codes with the added requirement that the code for a given message should be \emph{pseudorandom}, or indistinguishable from uniform randomness by a computationally bounded adversary.

\begin{definition}[\cite{ChrGun24}]
Consider a tuple of efficient algorithms $\PRC = (\Gen, \Encode, \Decode)$, where:
\begin{itemize}
\item $\Gen(1^n) \rightarrow \sk$ takes as input a security parameter $n$ and outputs a secret key $\sk$,
\item $\Encode_\sk(m) \rightarrow c$ takes as input a secret key $\sk$ and message $m \in \V^k$ for some vocabulary $\V = \V(n)$ and input length $k = k(n)$, and outputs a code $c \in \V^\ell$ for some code length $\ell = \ell(n)$.
\item $\Decode_\sk(c) \rightarrow m$ takes as input a secret key $\sk$ and code $c \in \V^\ell$, and outputs either a message $m \in \V^k$ or $\bot$ to indicate that there is no valid decoding.
\end{itemize}
Given a channel $\E : \V^* \rightarrow \V^*$ that introduces some error into the code, we call $\PRC$  a \textbf{pseudorandom error-correcting code} (PRC) with robustness to $\E$ if the following properties are satisfied for any $n \in \Nat$ and the respective $\V(n)$, $k(n)$, $\ell(n)$:
\begin{itemize}
\item \textbf{Robustness (with respect to $\E$):} For any $m \in \V^k$, there is negligible $\epsilon(\cdot)$ such that:
\[\Pr[\sk \leftarrow \Gen(1^n), c \leftarrow \Encode_\sk(m) : \Decode_\sk(\E(c)) \neq m] \leq \epsilon(n)\]
\item \textbf{Soundness:} For any $c \in \V^*$, there is negligible $\epsilon(\cdot)$ such that:
\[\Pr[\sk \leftarrow \Gen(1^n) : \Decode_\sk(c) \neq \bot] \leq \epsilon(n)\]
\item \textbf{Pseudorandomness:} For any probabilistic polynomial-time oracle-aided adversary $\A(\cdot)$, there is negligible $\epsilon(\cdot)$ such that, letting $\O_\Encode(\sk)$ be an oracle that returns $\Encode_\sk(m)$ given message $m$ and $\O_\Unif$ be an oracle that implements a uniformly randomly selected function $r: \V^k \rightarrow\V^\ell$:
\[|\Pr[\sk \leftarrow \Gen(1^n) : \A^{\O_\Encode(\sk)}(1^n) = 1] - \Pr[\A^{\O_\Unif}(1^n) = 1]| \leq \epsilon(n)\]
\end{itemize}
\end{definition}

In this work we consider only \emph{zero-bit binary} PRCs, where $\V(n) = \Bit$ and $k = 0$---that is, there is only a single possible input (which by convention we label as $1$), and the code is a binary string of length $\ell(n)$.

%\TODO{do we want a different notion of hash functions than just a randomly selected function? pairwise-independent hash functions?}

\section{Protocol and Results}

Here, we formally restate the theorems presented in the introduction and present the protocol with which we prove them. To begin, we demonstrate a watermarking protocol which provides soundness, undetectability, and (given any output with ``minimal empirical entropy", or at least a constant fraction of tokens in a sufficiently long substring having a single bit of empirical entropy) robustness against any random substitution channel $\E_\text{sub} (f, 1/2 - \epsilon')$ for constant $\epsilon' > 0$, given the existence of a binary PRC with robustness to any (arbitrary) $(1/2 - \epsilon)$-bounded substitution channel for $\epsilon > 0$. Formally:

\begin{theorem}
\label{thm:substitution}
If, for any constant $\epsilon \in (0, 1/2]$, there exists a pseudorandom code $\PRC$ (having block length $\ell(\epsilon, n)$) with robustness against any $(1/2 - \epsilon)$-bounded binary substitution channel, then, for any generative model family $\M$ with token alphabet family $\V = \braces{\V(n)}_{n \in \Nat}$, family of functions $f = \braces{f_n: \V(n) \rightarrow \Delta(\V(n))}_{n \in \Nat}$, and constants $\epsilon' \in (0, 1/2]$ and $\delta \in (0, 1]$, there exists a secure watermarking scheme for $\M$, $(\Gen, \Watermark, \Detect)$, that satisfies $(n \ell(\epsilon'\delta/17, n), \delta, 1)$-entropy-restricted substring robustness with respect to the random substitution channel $\E_{\text{sub}}(f, 1/2 - \epsilon')$.
\end{theorem}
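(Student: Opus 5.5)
We instantiate the scheme from the technical overview with the PRC for $\epsilon_0 = \epsilon'\delta/17$ and block length $\ell = \ell(\epsilon_0,n)$.

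\textbf{Construction.} $\Gen(1^n)$ samples PRC keys $\sk_1,\dots,\sk_B$ for $B = \lceil m(n)/\ell\rceil$ blocks, together with independent uniformly random hash functions $h_1,\dots,h_B\colon \V(n)\to\Bit$. (Truly random tables are fine since $|\V(n)|\le p(n)$.) Alternatively, one PRC key and one codeword per block may be used; fresh keys per block keep the analysis simplest.

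$\Watermark$ draws a codeword $c^{(b)}\leftarrow\Encode_{\sk_b}(1)$ for each block. To generate token $i$ (block $b$, position $j$), it computes $D=\Generate(\prompt,\tokens_{\le i-1})$, samples $t_0,t_1\leftarrow D$ independently, and acts as follows:
\begin{itemize}
\item if $h_b(t_0)\ne h_b(t_1)$, output the $t_\beta$ with $h_b(t_\beta)=c^{(b)}_j$;
\item otherwise, output a uniformly random one of $t_0,t_1$.
\end{itemize}

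$\Detect$ hashes the $b$-th length-$\ell$ window of the received text with $h_b$. It outputs $\true$ iff $\Decode_{\sk_b}$ returns $1$ for some $b$. No deletions are involved here, so positions are aligned.

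\textbf{Soundness.} Soundness follows from PRC soundness plus a union bound over the polynomially many blocks.

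\textbf{Undetectability.} Hybrid over blocks, replacing each $c^{(b)}$ by a uniform string using PRC pseudorandomness. The hash functions are part of the key and can be sampled by the reduction. With a uniform bit $c_j$ independent of everything else, the output token is distributed exactly as $D$. The reason is that $(t_0,t_1)$ is exchangeable, and choosing the index by $c_j$ in the split case, or uniformly in the non-split case, yields a uniformly random index independent of the pair.

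\textbf{Robustness.} Fix a window $[j,k]$ with $k-j\ge n\ell$ and at least a $\delta$ fraction of tokens having empirical entropy $\ge 1$. By averaging, it contains at least $\sim n\delta/2$ full blocks that each have at least a $\delta/2$ fraction of high-entropy positions. Using PRC robustness against $(1/2-\epsilon_0)$-bounded substitutions, it suffices to show that with overwhelming probability some such block ends with fewer than $(1/2-\epsilon_0)\ell$ mismatches between $h_b(\text{received token})$ and $c^{(b)}$. (With the codeword uniform, any fixed error pattern analysis transfers back, since the event ``few mismatches'' is efficiently checkable given keys.)

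Within a block, define the split indicator $S_i=\mathbf 1[h_b(t_0)\ne h_b(t_1)]$ for the sampled pair. For a high-entropy realized token, $\Pr[S_i]\ge 1/4$ over $h_b$ and the pair. The key claim is that, conditioned on everything determining the non-split tokens, each non-split position is independent of $c^{(b)}_j$. A random substitution $f_n$ applied to it keeps it independent, so it mismatches with probability exactly $1/2$. A split position is correct before substitution and becomes wrong with probability at most $1/2-\epsilon'$. Hence, given $\sigma$ split positions, the mismatch count is dominated by a sum of independent-given-the-split-pattern bits with mean at most $(\ell-\sigma)/2+\sigma(1/2-\epsilon') = \ell/2-\epsilon'\sigma$. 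This needs care: the codeword bits drive future prompts, so I would reveal $c^{(b)}$ bit by bit and use a martingale (Azuma) bound over the block.

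The main obstacle is that $h_b$ is reused across the block, so splits are correlated. Conditioning on empirical entropy is also subtle, because it depends on the realized token. The plan is an averaging argument. The expected number of split high-entropy positions is at least $\delta\ell/8$, and $\sigma\le\ell$, so $\Pr[\sigma\ge\delta\ell/16]\ge\delta/16$. I would phrase this through a martingale over the sequential generation to handle adaptivity of the model.

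On that event, the expected mismatch count is at most $\ell/2-\epsilon'\delta\ell/16$. Azuma then gives fewer than $(1/2-\epsilon'\delta/17)\ell$ mismatches except with probability $e^{-\Omega(\ell)}$. Blocks use independent $h_b$ and fresh samples, so the good events across the $\Omega(n)$ candidate blocks fail jointly with probability at most $(1-\delta/16)^{\Omega(n)}$, which is negligible. The error from excluding bad windows is absorbed into the union bound over $O(m(n)^2)$ windows.
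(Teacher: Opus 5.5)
Your proposal is correct and follows essentially the same route as the paper. The shared elements are the two-sample embedding with one hash function per block, averaging over the reused $h_b$ to get a constant-probability ``split-rich'' event per block, the observation that non-split positions remain fair coins under $\E_{\text{sub}}(f,1/2-\epsilon')$ while split positions become wrong with probability at most $1/2-\epsilon'$ (expected mismatch count $\le \ell/2-\epsilon'\sigma$), and independence of these events across blocks. Your departures are refinements rather than a new approach: per-block PRC keys with aligned decoding windows, the reverse-Markov bound $\Pr[\sigma\ge\delta\ell/16]\ge\delta/16$ in place of the paper's ``probability $1/4$ that a $1/4$ fraction split'' step, and Azuma concentration on the post-substitution mismatches in place of the paper's Markov bound. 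These refinements are sound, since in the ideal world $h_b$ is independent of the sampled pairs, which justifies both your per-position $1/4$ bound and the cross-block independence given the realized tokens.
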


Combining this with the respective PRC construction from \cite{ChrGun24} gives:

\begin{corollary}
\label{cor:substitution}
If Assumption 1 of \cite{ChrGun24} is true\footnote{This assumption, at a high level, requires either subexponential hardness of LPN or polynomial hardness of both LPN and a low-density planted XOR problem.}, then, letting $\ell(\epsilon, n)$ be the block length of the respective PRC (in Theorem 1 and Theorem 2 of \cite{ChrGun24}) required to achieve robustness against any $(1/2 - \epsilon)$-bounded binary substitution channel, for any generative model family $\M$ with token alphabet family $\V = \braces{\V(n)}_{n \in \Nat}$, family of functions $f = \braces{f_n: \V(n) \rightarrow \Delta(\V(n))}_{n \in \Nat}$, and constants $\epsilon' \in (0, 1/2]$ and $\delta \in (0, 1]$, there exists a secure watermarking scheme $(\Gen, \Watermark, \Detect)$ for $\M$ that satisfies $(n \ell(\epsilon'\delta/17, n), \delta, 1)$-entropy-restricted substring robustness with respect to the random substitution channel $\E_{\text{sub}}(f, 1/2 - \epsilon')$.
\end{corollary}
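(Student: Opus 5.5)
The corollary follows by instantiating Theorem~\ref{thm:substitution} with the PRC of \cite{ChrGun24}; the only real work is checking that the hypothesis of the theorem is met. First, invoke Theorem 1 and Theorem 2 of \cite{ChrGun24}. Under Assumption 1, they give, for every constant $\epsilon \in (0, 1/2]$, a zero-bit binary PRC with block length $\ell(\epsilon, n)$, polynomial in $n$. This PRC is sound, pseudorandom, and robust against every $(1/2 - \epsilon)$-bounded binary substitution channel. Subexponential LPN yields the construction of Theorem 1, and polynomial LPN together with the planted XOR assumption yields the construction of Theorem 2. I would verify that their notion of robustness, namely decoding succeeds with all but negligible probability for any fixed channel that flips at most a $(1/2-\epsilon)$ fraction of bits, matches the PRC definition stated above specialized to $k=0$ and $\V(n)=\Bit$. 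I would also check that their pseudorandomness against polynomially many oracle queries matches the oracle-based definition used here.

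Second, having confirmed the hypothesis holds for every constant $\epsilon$, apply Theorem~\ref{thm:substitution} directly. For the given $\M$, $f$, $\epsilon'$, and $\delta$, instantiate it with the PRC for $\epsilon = \epsilon'\delta/17$. This is a constant in $(0,1/2]$, since $\epsilon' \le 1/2$ and $\delta \le 1$. The theorem then yields a secure watermarking scheme satisfying $(n\ell(\epsilon'\delta/17,n),\delta,1)$-entropy-restricted substring robustness with respect to $\E_{\text{sub}}(f,1/2-\epsilon')$. This is exactly the conclusion of the corollary, where $\ell$ is the block length from \cite{ChrGun24}.

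The main (mild) obstacle is definitional alignment. I must confirm that the robustness of \cite{ChrGun24} holds against \emph{arbitrary}, possibly adversarial, bounded substitution channels, and not merely random ones. Their adversarial-substitution construction gives this. I must also confirm that $\ell(\epsilon,n)$ is polynomial for each fixed $\epsilon$, so that the resulting scheme is efficient. Beyond these checks, no further argument is needed.
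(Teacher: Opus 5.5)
Your proposal is correct and matches the paper's argument: the corollary is Theorem~\ref{thm:substitution} instantiated with the \cite{ChrGun24} PRC at $\epsilon = \epsilon'\delta/17$, which is a constant in $(0,1/2]$ and also satisfies the strict inequality $\epsilon < \epsilon'\delta/16$ required by the underlying robustness claim. Your alignment checks (robustness against arbitrary rather than random substitutions, polynomial block length) are the right ones; you need not assign the two branches of Assumption 1 to the two cited theorems separately, since the argument only uses that Assumption 1 yields the required PRC for every constant $\epsilon$.
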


Next, we would like to extend the above result to demonstrate that $\Watermark$ is additionally robust to a large fraction of random deletions if the same also holds for $\PRC$. However, the construction from \cite{ChrGun24} of a PRC robust to  substitutions and deletions only provides robustness against \emph{random} substitutions---specifically, substitutions made in the pattern of a binary symmetric channel---whereas we recall from the above that the errors introduced by $\Watermark$, while bounded by Claim \ref{clm:expect}, are not necessarily random. 

Thus, in order to leverage this construction, we require an additional assumption that the errors introduced by $\Watermark$ are random (described in \cite{ChrGun24} as the text being ``sufficiently variable"), to demonstrate that $\Watermark$ is robust to a large fraction of random substitutions and deletions. This is somewhat more involved in our case, as the definition of a random substitution is different from that given in the binary model of \cite{ChrGun24}. Random substitutions are equivalent to a binary symmetric channel in the case they consider, since therein the generative model's output is binary and directly corresponds to the PRC codeword they attempt to embed; however, because we require the PRC codeword $c$ to instead correspond to the \emph{image} of the output under the hash function $h$, random substitutions made to the output do not necessarily result in substitutions to the PRC codeword, and, in fact, whether they do may be dependent on the function $h$ chosen by $\Watermark$.

So, while in the case of \cite{ChrGun24} it is sufficient to assume that the embedding process itself produces errors whose distribution is given by a binary symmetric channel in order to leverage robustness against random substitutions (as the composition of that with the binary symmetric channel of random substitutions is itself a binary symmetric channel), we need to assume that the composition of the embedding process and the random substitution channel produces errors in a distribution modelable by a binary symmetric channel, as given in Definition \ref{def:consistent}. Formally:

\begin{definition}
\label{def:consistent}
We say that a prompt $\prompt$ \textbf{produces consistently distributed errors} for a watermarking scheme $(\Gen, \Watermark, \Detect)$ with respect to a substitution channel family $\E$ if there exists some constant $p$ such that, taken over the randomness of $\Watermark$ and the channel $\E$, each token $t$ of the output, after being generated by $\Watermark$ (given prompt $\prompt$) according to some function $h$ and bit $c_i$ of codeword $c$ from $\PRC$ and then modified according to $\E$, has an identical probability $p$, independent from that of other tokens, of having $h(t) \neq c_i$.
\end{definition}

Given this, we can present our following result that leverages this definition:

\begin{theorem}
\label{thm:subdel1}
If, for any constants $\epsilon_{sub} \in (0, 1/2]$ and $\epsilon_{del} \in (0, 1]$, there exists a pseudorandom code $\PRC$ (having block length $\ell(\epsilon_{sub}, \epsilon_{del}, n)$) with robustness against the composition of channels $\E_\text{bin}(\frac{1}{2} - \epsilon_{sub}) \circ \E_\text{del} (1 - \epsilon_{del})$, then, for any generative model family $\M$ with token alphabet family $\V = \braces{\V(n)}_{n \in \Nat}$, family of functions $f = \braces{f_n: \V(n) \rightarrow \Delta(\V(n))}_{n \in \Nat}$, and constants $\epsilon'_{sub} \in (0, 1/2]$, $\epsilon'_{del} \in (0, 1]$, and $\delta \in (0, 1]$, there exists a secure watermarking scheme $(\Gen, \Watermark, \Detect)$ for $\M$ that, for any family $\Pi$ of prompts where any $\prompt \in \Pi$ produces consistently generated errors for $(\Gen, \Watermark, \Detect)$ with respect to the substitution channel $\E_{\text{sub}}(f, 1/2 - \epsilon')$, satisfies $(n \ell(\epsilon'_{sub} \delta / 17, \epsilon'_{del}, n), \delta, 1)$-entropy-restricted substring robustness over $\Pi$ with respect to the composition of channels $\E_{\text{sub}}(f, 1/2 - \epsilon'_{sub}) \circ \E_\text{del} (1 - \epsilon'_{del})$.
\end{theorem}

Again, combining this with the respective PRC construction from \cite{ChrGun24} gives:

\begin{corollary}
\label{cor:subdel1}
If Assumption 1 of \cite{ChrGun24} is true, then, letting $\ell(\epsilon_{sub}, \epsilon_{del}, n)$ be the block length of the PRC given by Theorem 4 of \cite{ChrGun24} required to achieve robustness against the composition of channels $\E_\text{bin}(\frac{1}{2} - \epsilon_{sub}) \circ \E_\text{del} (1 - \epsilon_{del})$, for any generative model family $\M$ with token alphabet family $\V = \braces{\V(n)}_{n \in \Nat}$, family of functions $f = \braces{f_n: \V(n) \rightarrow \Delta(\V(n))}_{n \in \Nat}$, and constants $\epsilon'_{sub} \in (0, 1/2]$, $\epsilon'_{del} \in (0, 1]$, and $\delta \in (0, 1]$, there exists a secure watermarking scheme $(\Gen, \Watermark, \Detect)$ for $\M$ that, for any family $\Pi$ of prompts where any $\prompt \in \Pi$ produces consistently generated errors for $(\Gen, \Watermark, \Detect)$ with respect to the substitution channel $\E_{\text{sub}}(f, 1/2 - \epsilon')$, satisfies $(n \ell(\epsilon'_{sub} \delta / 17, \epsilon'_{del}, n), \delta, 1)$-entropy-restricted substring robustness over $\Pi$ with respect to the composition of channels $\E_{\text{sub}}(f, 1/2 - \epsilon'_{sub}) \circ \E_\text{del} (1 - \epsilon'_{del})$.
\end{corollary}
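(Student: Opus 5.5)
The corollary is a direct instantiation of Theorem~\ref{thm:subdel1}, so the plan is to verify that the PRC construction of \cite{ChrGun24} meets that theorem's hypothesis and then read off the conclusion.

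\textbf{Step 1.} Fix arbitrary constants $\epsilon_{sub} \in (0,1/2]$ and $\epsilon_{del} \in (0,1]$. Under Assumption~1 of \cite{ChrGun24}, Theorem~4 of \cite{ChrGun24} provides a zero-bit binary PRC $(\Gen, \Encode, \Decode)$ with block length $\ell(\epsilon_{sub}, \epsilon_{del}, n)$. It satisfies pseudorandomness and soundness, and it is robust against a binary symmetric channel with flip rate $\frac12 - \epsilon_{sub}$ composed with a random deletion channel with deletion rate $1-\epsilon_{del}$.

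I would check that their channel model coincides with $\E_\text{bin}(\frac12 - \epsilon_{sub}) \circ \E_\text{del}(1-\epsilon_{del})$ as defined in our Error Channels subsection. This means checking three things:
\begin{itemize}
\item bits are flipped independently;
\item bits are deleted independently with the stated probability;
\item the order of composition does not matter.
\end{itemize}
Since both channels act independently per symbol, applying them in either order yields the same output distribution. Hence the robustness guarantee transfers regardless of the composition convention. I would also confirm that their PRC syntax and security notions match our PRC definition, taking $k=0$ and $\V=\Bit$.

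\textbf{Step 2.} The PRC from Step~1 exists for every pair of constants $(\epsilon_{sub}, \epsilon_{del})$. This is exactly the universally quantified hypothesis of Theorem~\ref{thm:subdel1}, so we may apply that theorem.

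For any $\M$, $f$, $\epsilon'_{sub}$, $\epsilon'_{del}$, and $\delta$, Theorem~\ref{thm:subdel1} yields a secure watermarking scheme. For every prompt family $\Pi$ producing consistently distributed errors, this scheme achieves $(n\,\ell(\epsilon'_{sub}\delta/17, \epsilon'_{del}, n), \delta, 1)$-entropy-restricted substring robustness over $\Pi$ with respect to $\E_{\text{sub}}(f, 1/2 - \epsilon'_{sub}) \circ \E_\text{del}(1-\epsilon'_{del})$.

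The block length in this bound is precisely that of the \cite{ChrGun24} PRC instantiated at parameters $(\epsilon'_{sub}\delta/17, \epsilon'_{del})$. Those parameters are again constants in the allowed ranges, since $\epsilon'_{sub}\delta/17 \in (0,1/2]$. This matches the corollary's statement.

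\textbf{Main obstacle.} The only real point of care is Step~1: aligning parameter conventions with \cite{ChrGun24}. I would check two things there:
\begin{itemize}
\item that their ``deletion rate'' parameter corresponds to our retention probability $\epsilon_{del}$;
\item that $\ell$ is polynomial in $n$ for fixed constants, so the resulting $\Watermark$ and $\Detect$ remain efficient.
\end{itemize}
If their theorem is stated with fixed rather than arbitrary constants, I would invoke it separately for each required pair of constants, which suffices since the corollary only needs one pair at a time. Everything else follows verbatim from Theorem~\ref{thm:subdel1}.
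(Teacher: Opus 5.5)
Your proposal is correct and matches the paper's approach. The paper also obtains this corollary in a single sentence, by instantiating Theorem~\ref{thm:subdel1} with the substitution-and-deletion-robust PRC of Theorem~4 of \cite{ChrGun24}; your extra checks (the two memoryless channels commute, the deletion-rate conventions align, and $\epsilon'_{sub}\delta/17$ lies in the allowed range) are sound bookkeeping that the paper leaves implicit.
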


Alternatively, though none is currently known, if there were a construction of $\PRC$ that guaranteed robustness against \emph{adversarial} substitutions and random deletions, one could bypass the need for heuristic assumptions such as that given by Definition \ref{def:consistent} and instead leverage Claim \ref{clm:robust} directly to demonstrate that $\Watermark$ likewise satisfies robustness to random deletions, as follows:

\begin{theorem}
\label{thm:subdel2}
If, for any constants $\epsilon_{sub} \in (0, 1/2]$ and $\epsilon_{del} \in (0, 1]$, there exists a pseudorandom code $\PRC$ (having block length $\ell(\epsilon_{sub}, \epsilon_{del}, n)$) with robustness against the composition of channels $\E \circ \E_\text{del} (1 - \epsilon_{del})$ for any $(1/2 - \epsilon_{sub})$-bounded binary substitution channel $\E$, then, for any generative model family $\M$ with token alphabet family $\V = \braces{\V(n)}_{n \in \Nat}$, family of functions $f = \braces{f_n: \V(n) \rightarrow \Delta(\V(n))}_{n \in \Nat}$, and constants $\epsilon'_{sub} \in (0, 1/2]$, $\epsilon'_{del} \in (0, 1]$, and $\delta \in (0, 1]$, there exists a secure watermarking scheme $(\Gen, \Watermark, \Detect)$ for $\M$ that satisfies $(n \ell(\epsilon'_{sub} \delta / 17, \epsilon'_{del}, n), \delta, 1)$-entropy-restricted substring robustness over $\Pi$ with respect to the composition of channels $\E_{\text{sub}}(f, 1/2 - \epsilon'_{sub}) \circ \E_\text{del} (1 - \epsilon'_{del})$.
\end{theorem}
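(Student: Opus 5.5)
We reuse the scheme from the proof of Theorem \ref{thm:substitution}, instantiated with the stronger $\PRC$ of the hypothesis (parameters $\epsilon_{sub} = \epsilon'_{sub}\delta/17$ and $\epsilon_{del} = \epsilon'_{del}$). $\Gen$ samples a $\PRC$ key. $\Watermark$ splits the output into consecutive blocks of length $\ell$, and for each block draws a fresh PRC codeword $c$ and a fresh random hash $h\colon \V(n)\to\Bit$. At each step it samples $t_0,t_1$ from the model's next-token distribution and outputs $t_b$ with $h(t_b)=c_i$ when $h(t_0)\neq h(t_1)$, and a random $t_b$ otherwise. $\Detect$ tries to decode, for every block key $(h,\sk)$, the hashed image of the corresponding window of the received text, and accepts if any decoding succeeds.

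Soundness and undetectability follow exactly as in Theorem \ref{thm:substitution}. Soundness comes from PRC soundness plus a union bound over polynomially many decoding attempts. Undetectability comes from replacing codewords by uniform strings via pseudorandomness, after which each token is an exact sample of $D$ by the symmetry of the two-sample procedure. For robustness we begin with Claim \ref{clm:robust} (and Claim \ref{clm:expect}). Suppose the output contains a substring of length $n\ell$ in which a $\delta$ fraction of tokens have empirical entropy at least $1$. Then with overwhelming probability some block $B$ satisfies the following: the hashed, randomly substituted block $h(\E_{\text{sub}}(\tokens_B))$ disagrees with its codeword $c$ in at most a $(1/2-\epsilon'_{sub}\delta/17)$ fraction of positions. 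The claim is driven by independence of $h$ across blocks, the averaging argument giving a good $h$ with probability $\geq 1/4$, and the fact that substitutions of tokens drawn independently of $c$ stay unbiased. This lets us view $h(\E_{\text{sub}}(\tokens_B))$ as $\E(c)$ for some $(1/2-\epsilon_{sub})$-bounded binary substitution channel $\E$, which may depend arbitrarily on $c$ and the randomness.

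The remaining step is to handle deletions. The random deletion channel acts independently of token content, so deleting tokens of the substituted output and then hashing equals hashing and then applying $\E_{\text{del}}(1-\epsilon'_{del})$ to the bit string. Restricted to block $B$, the detector therefore sees $\E_{\text{del}}(1-\epsilon'_{del})(\E(c))$, which is exactly the composed channel the hypothesized $\PRC$ tolerates. Applying a channel with substitutions first and deletions second matches the paper's composition convention.

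The main obstacle is locating block $B$ after deletions, since block boundaries shift. We plan to resolve it as follows. With overwhelming probability (Chernoff) each block retains about $\epsilon'_{del}\ell$ tokens. The detector can therefore try all polynomially many start/end positions $(j,k)$ in the received string for each key, and union-bound soundness over these attempts. At the correct window, robustness of $\PRC$ gives decoding failure with only negligible probability.

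One subtlety needs care: PRC robustness is stated for a fixed channel, while our $\E$ is chosen depending on $c$. We handle this as in Theorem \ref{thm:substitution}, by invoking robustness against every bounded substitution channel. Concretely, we take the hypothesis to cover channels that may inspect the codeword, or we condition on the embedding randomness and note that the bound holds for each such channel. A final union bound over blocks gives negligible failure probability.
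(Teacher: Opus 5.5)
Your proposal is correct and follows the paper's proof. You take the block-level guarantee from the proof of Claim~\ref{clm:robust}: with overwhelming probability some block's hashed, substituted image is a $(1/2-\epsilon_{sub})$-bounded substitution of its codeword. You then note that random deletions commute with hashing, so that block's surviving image is exactly $\E \circ \E_{\text{del}}(1-\epsilon'_{del})$ applied to $c$, and $\Detect$'s exhaustive search over windows of length at most $\ell$ finds it. Your explicit union bound over blocks and your remark about the channel depending on $c$ add a little rigor the paper's sketch omits. One small correction to your description: as in the paper, the per-block hashes $\vec{h}$ must be sampled in $\Gen$ as part of the key, not freshly inside $\Watermark$, or $\Detect$ cannot use them.
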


Lastly, we show that we can leverage a PRC such as the new construction of \cite{cggm25} based on a ``permuted codes conjecture" applied to folded Reed-Solomon codes, which features robustness to both a large (close to $1/2$) fraction of (potentially arbitrary) substitutions and a very small but constant fraction of (potentially arbitrary) edits, to construct low-entropy watermarking robust to a constant fraction of arbitrary edits. However, this comes at a cost to the length of the minimal-entropy substring required to prove robustness; in particular, for the case where we allow adversarial or arbitrary edits, we require \emph{a constant fraction of the entire output} to have sufficient entropy, as otherwise it is fairly straightforward to see that a constant fraction of arbitrary edits could specifically target the minimal-entropy substring for corruption and thus break robustness. Thus, we conclude:

\begin{theorem}
\label{thm:edit}
If, for any constant $\epsilon_{sub} \in (0, 1/2]$, there exists a pseudorandom code $\PRC$ and a constant $\epsilon_{edit} \in (0, 1]$ such that $\PRC$ has robustness against the composition of channels $\E \circ \E'$ for any $(1/2 - \epsilon_{sub})$-bounded binary substitution channel $\E$ and any $\epsilon_{edit}$-bounded binary edit channel $\E'$, then, for any generative model family $\M$ (having maximum output length $m(n)$) and constants $\delta, \alpha \in (0, 1]$, there exists a secure watermarking scheme $(\Gen, \Watermark, \Detect)$ for $\M$ and constant $\epsilon'_{edit} \in (0, 1]$ such that $(\Gen, \Watermark, \Detect)$ satisfies $(\alpha m(n), \delta, 1)$-entropy-restricted substring robustness with respect to any $\epsilon'_{edit}$-bounded edit channel over the vocabulary of $\M$.
\end{theorem}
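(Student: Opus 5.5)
We instantiate the same block-based scheme used for Theorem \ref{thm:substitution}, changing only the detector so that it can cope with edits. $\Gen$ samples, for each of $N = m(n)/\ell$ blocks, an independent PRC key $\sk_j$ and an independent uniformly random hash $h_j\colon \V(n)\to\Bit$. We take $\ell = \ell(\epsilon_{sub},n)$ with $\epsilon_{sub} = \delta/17$ (or any constant below the guaranteed $\delta/16$ advantage). $\Watermark$ splits the output into consecutive blocks of $\ell$ tokens. In block $j$ it embeds a fresh codeword $c^{(j)}\leftarrow \Encode_{\sk_j}(1)$ by the two-sample procedure: draw $t_0,t_1\leftarrow D$; if $h_j(t_0)\neq h_j(t_1)$, output the $t_b$ with $h_j(t_b)=c^{(j)}_i$, and otherwise output a uniformly random one of the two.

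Soundness, undetectability and the bound on embedding errors are inherited verbatim from the earlier analysis. For soundness we union-bound over polynomially many decoding attempts, using PRC soundness. Undetectability follows by a hybrid replacing each codeword with uniform bits, under which the two-sample procedure outputs exactly a sample of $D$. For the embedding errors we use the averaging argument (Claim \ref{clm:expect} and Claim \ref{clm:robust}): with probability at least $1/4$ over $h_j$, a block in which a $\delta$ fraction of tokens have one bit of empirical entropy has hashed error rate at most $1/2-\delta/16$ plus a negligible deviation, by pseudorandomness and Hoeffding on the remaining bits. This holds even though the errors are not random, which is why the PRC must be robust to adversarial substitutions.

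The new ingredients are the detector and the accounting of edits.

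\textbf{Detector.} $\Detect$, for every block index $j$ and every start position $s$ and window length $w\in[(1-\epsilon_{edit}/2)\ell,(1+\epsilon_{edit}/2)\ell]$ in the received string, runs $\Decode_{\sk_j}(h_j(\tokens'_{s..s+w}))$. It accepts if any decoding succeeds. This costs polynomially many calls.

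\textbf{Locating good blocks.} If at least $\alpha m(n)$ tokens lie in a substring with a $\delta$ fraction of high-entropy tokens, then by averaging at least an $\alpha\delta/2$ fraction of all blocks have at least a $\delta/2$ fraction of high-entropy tokens. Hence, rather than the $\delta$ in the statement, we should set $\epsilon_{sub}$ from $\delta/2$. Because the $h_j$ are independent, a Chernoff bound shows that, with overwhelming probability, at least $\alpha\delta N/16$ blocks are \emph{good}, i.e.\ both entropic and with embedding error rate below $1/2-\epsilon_{sub}$.

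\textbf{Charging the edits.} An $\epsilon'_{edit}$-bounded edit channel makes at most $\epsilon'_{edit} m(n)$ insertions and deletions in total. We fix an optimal alignment and charge each edit to the block it touches. By Markov, at most $\epsilon'_{edit} m(n)/(\epsilon_{edit}\ell/2) = 2\epsilon'_{edit}N/\epsilon_{edit}$ blocks receive more than $\epsilon_{edit}\ell/2$ edits. Choosing $\epsilon'_{edit} < \alpha\delta\epsilon_{edit}/64$ therefore leaves some good block $j$ with few edits. Its image in $\tokens'$ is a contiguous window that the detector enumerates. That window equals, after hashing with $h_j$, the embedded codeword passed through a $(1/2-\epsilon_{sub})$-bounded substitution channel followed by an $\epsilon_{edit}$-bounded edit channel, with boundary misalignment absorbed into the edit budget. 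PRC robustness then gives $\Decode_{\sk_j}\neq\bot$.

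\textbf{Main obstacle.} The hardest step is justifying the application of PRC robustness. That robustness is stated for a fixed channel, whereas here the substitution pattern depends on $h_j$, the LLM's choices, and the adversary's edits, which see the whole output. My approach is to argue that the PRC's robustness against \emph{any} bounded channel composition applies pointwise, since the adversary's channel is an arbitrary function of the codeword's hash image. Where the channel also depends on other blocks, I would condition on everything outside block $j$, using the independence of $\sk_j$, and handle adaptivity via a union bound over the $N$ blocks. If plain robustness does not suffice for adaptively chosen channels, I would fall back on the strongly adaptive robustness of \cite{cggm25}.
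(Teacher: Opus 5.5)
Your proposal is correct and follows essentially the paper's route: Claim \ref{clm:expect} gives the per-block embedding-error bound, independence of the $h_j$ with a Hoeffding/Chernoff bound yields a constant fraction of low-error blocks, and a counting argument on the total edit budget finds one such block with few edits, to which $\PRC$ robustness is applied. Your changes are refinements that make the final step more rigorous than in the paper, not a different approach: independent keys $\sk_j$, decoding windows of length up to $(1+\epsilon_{edit}/2)\ell$ instead of the paper's windows of length at most $\ell$, and a union bound over $j$ (where your ``conditioning on everything outside block $j$'' should really be a randomized per-block channel that simulates the rest of the experiment, since later blocks and the edits depend on block $j$'s tokens).
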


Combining this with the respective PRC construction from \cite{cggm25} gives:

\begin{corollary}
\label{cor:edit}
If Conjecture 3.1 of \cite{cggm25} is true, then, for any generative model family $\M$ (having maximum output length $m(n)$) and constants $\delta, \alpha \in (0, 1]$, there exists a secure watermarking scheme $(\Gen, \Watermark, \Detect)$ for $\M$ and a constant $\epsilon'_{edit} \in (0, 1]$ such that $(\Gen, \Watermark, \Detect)$ satisfies $(\alpha m(n), \delta, 1)$-entropy-restricted substring robustness with respect to any $\epsilon'_{edit}$-bounded edit channel over the vocabulary of $\M$.
\end{corollary}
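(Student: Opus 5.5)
We prove Corollary~\ref{cor:edit} by plugging the \cite{cggm25} code into Theorem~\ref{thm:edit}. The only work is to check that, under Conjecture~3.1 of \cite{cggm25}, this code satisfies the hypothesis of Theorem~\ref{thm:edit} in exactly the form stated there.

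The plan has three steps.

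\textbf{Step 1: the hypothesis of Theorem~\ref{thm:edit}.} Fix an arbitrary constant $\epsilon_{sub} \in (0, 1/2]$. We must exhibit a zero-bit binary $\PRC$ and a constant $\epsilon_{edit} \in (0,1]$ such that $\PRC$ is robust against every composition $\E \circ \E'$, where $\E$ is any $(1/2 - \epsilon_{sub})$-bounded binary substitution channel and $\E'$ is any $\epsilon_{edit}$-bounded binary edit channel.

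\textbf{Step 2: instantiate \cite{cggm25}.} Assuming Conjecture~3.1 of \cite{cggm25} (permuted folded Reed--Solomon codes), their construction can be parameterized by the target substitution slack. For $\epsilon_{sub}$ it tolerates a $(1/2-\epsilon_{sub})$ fraction of adversarial substitutions together with an $\tilde{O}(\epsilon_{sub}^3)$ fraction of adversarial insertions and deletions, and this is a positive constant depending only on $\epsilon_{sub}$.

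We set $\epsilon_{edit}$ to that constant. Three items need checking:
\begin{itemize}
\item their code is zero-bit and binary, or can be used as such by fixing the message;
\item their pseudorandomness and soundness match our definition, since theirs is stronger (subexponential, strongly adaptive);
\item their error model covers our composed channel.
\end{itemize}

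\textbf{Step 3: the error model (main obstacle).} This is the one step needing care. Their robustness is typically stated for a channel bounded in a combined edit-plus-Hamming sense, or as edits applied after substitutions. Our hypothesis is stated as $\E \circ \E'$ with the bounds measured against the lengths at each stage.

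I would first check that their decoding guarantee is phrased in terms of the final received word's distance from the codeword. I would then bound the total distance under $\E \circ \E'$:
\begin{itemize}
\item $\E'$ changes the length by at most $\epsilon_{edit}\ell$, so the intermediate word has length at most $(1+\epsilon_{edit})\ell$;
\item $\E$ then substitutes at most a $(1/2-\epsilon_{sub})$ fraction of that length, i.e.\ at most $(1/2-\epsilon_{sub})(1+\epsilon_{edit})\ell$ symbols.
\end{itemize}
So the total corruption is at most roughly $(1/2-\epsilon_{sub}+\epsilon_{edit})\ell$ substitutions plus $\epsilon_{edit}\ell$ edits.

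If needed, I absorb the slack by invoking \cite{cggm25} with the smaller parameter $\epsilon_{sub}/2$ and shrinking $\epsilon_{edit}$ until $\epsilon_{edit}\le \epsilon_{sub}/4$. Also, if their substitutions are counted in a different order relative to the edits, each substitution can be simulated by one deletion and one insertion. In either case the adjustment only changes constants, so $\epsilon_{edit}$ remains a positive constant.

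Once the hypothesis is verified, Theorem~\ref{thm:edit} directly yields the following: for every $\M$ and every $\delta,\alpha$, there is a secure watermarking scheme and a constant $\epsilon'_{edit}$ giving $(\alpha m(n),\delta,1)$-entropy-restricted substring robustness against any $\epsilon'_{edit}$-bounded edit channel, which is the claim.
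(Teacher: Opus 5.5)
Your proposal is correct and follows the paper's route: the corollary is just Theorem~\ref{thm:edit} instantiated with the \cite{cggm25} PRC, taking $\epsilon_{edit} = c\,\epsilon_{sub}^3\log(1/\epsilon_{sub})$. Your checks on code format and channel order are more careful than the paper, which leaves them implicit.

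One caution: drop the fallback of simulating each substitution by a deletion plus an insertion. That would turn a nearly $1/2$ fraction of substitutions into far more than $\epsilon_{edit}\ell$ edits. The correct fix is that substitutions commute with insertions and deletions: move those on surviving symbols across the edits, and absorb those on inserted symbols into the insertions. This costs only a $(1\pm\epsilon_{edit})$ factor in the substitution fraction, which your slack adjustment already covers.
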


We provide greater specificity on the constant $\epsilon'_{edit}$ alongside the proof in Section \ref{sec:edit}.
The watermarking scheme $(\Gen, \Watermark, \Detect)$ we use to prove the above results is presented in Figures \ref{fig:gen} ($\Gen$), \ref{fig:watermark} ($\Watermark$), and \ref{fig:detect} ($\Detect$); the proofs are presented in the subsequent section.

\protocol{}{Protocol $\Gen$ for generating the secret information used in the watermarking protocol.}{fig:gen}{
{\bf Input:} Security parameter $n$, generative model $\Generate = \Generate(n)$ over token vocabulary $\V = \V(n)$ with maximum output length $m = m(n)$ (belonging to family $\braces{\Generate(n)}_{n \in \Nat}$), prompt $\prompt$, error-correcting pseudorandom code $\PRC$ with code length $\ell = \ell(n)$ such that $m(n) = \omega(n \ell(n))$. \\
{\bf Output:} Secret key $\sk$ for $\PRC$ and a tuple of functions $\vec{h} \in \H^{\ceil{m/\ell}}$, where $\H$ is the set of functions $h: \V \rightarrow \Bit$.
\medskip

{\bf Protocol:} 
\begin{enumerate}
\item Generate $\sk \leftarrow \PRC.\Gen(1^n)$.

\item Generate $\vec{h}$ by, for $i \in [\ceil{m/\ell}]$, randomly sampling $h_i \leftarrow \H$.

\item Return $\sk$ and $\vec{h}$.

\end{enumerate}
}

\protocol{}{Protocol $\Watermark$ for watermarking a generative model.}{fig:watermark}{
{\bf Input:} Security parameter $n$, generative model $\Generate = \Generate(n)$ over token vocabulary $\V = \V(n)$ with maximum output length $m = m(n)$ (belonging to family $\braces{\Generate(n)}_{n \in \Nat}$), prompt $\prompt$, error-correcting pseudorandom code $\PRC$ with code length $\ell = \ell(n)$, secret key $\sk$ for $\PRC$, and tuple of functions $\vec{h} \in \H^{\ceil{m/\ell}}$, where $\H$ is the set of all functions $h: \V \rightarrow \Bit$. \\
{\bf Output:} Watermarked output $\tokens \in \V^*$.
\medskip

{\bf Protocol:} 
\begin{enumerate}
\item Initialize $\tokens = ()$, $i = 1$, and $j = 1$.

\item Draw $c \leftarrow \PRC.\Encode_\sk(1^n)$. Let $h = h_j$, the $j^\text{th}$ element of $\vec{h}$.

\item Let $D = \Generate(\prompt, \tokens)$.

\item Randomly and independently sample two tokens $t_0, t_1 \leftarrow D$. Generate the next token $t$ by selecting one of these two as follows:
\begin{enumerate}
\item If $h(t_0) = h(t_1)$, let $t = t_b$ for a uniformly random $b \leftarrow \Bit$.
\item Otherwise, if $h(t_0) \neq h(t_1)$, let $t$ be whichever of the two samples $t_b$ has $h(t_b) = c_i$ (where $c_i$ is the $i^\text{th}$ bit of $c$). That is, if $h(t_0) = 0$, let $t = t_{c_i}$, and otherwise let $t = t_{1 - c_i}$.
\end{enumerate}

\item Append $t$ to $\tokens$, then:
\begin{enumerate}
\item If $t = \End$, terminate (padding the remainder of $\tokens$ to length $m(n)$ with $\End$ tokens if necessary) and output $\tokens$.
\item Otherwise, if $i = \ell$, start a new block by returning to step (2), incrementing $j$, and resetting $i$ to 1.
\item Otherwise, generate the next token of the current block by incrementing $i$ and returning to step (3).
\end{enumerate}
\end{enumerate}
}

\protocol{}{Protocol $\Detect$ for detecting the presence of a watermark in a given message.}{fig:detect}{
{\bf Input:} Security parameter $n$, message $\tokens \in \V^*$ (for $\V = \V(n)$), error-correcting pseudorandom code $\PRC$ with code length $\ell(n)$, secret key $\sk$ for $\PRC$, and tuple of functions $\vec{h} \in \H^{\ceil{m(n)/\ell(n)}}$, where $\H$ is the set of functions $h: \V \rightarrow \Bit$. \\
{\bf Output:} Boolean $b$ indicating whether a watermark was detected.
\medskip

{\bf Protocol:} 
\begin{enumerate}
\item Let $\mathsf{len} = \min(|\tokens|, m(n))$. For $i \in [\mathsf{len}]$, $j \in [i, \min(i+\ell(n)-1, \mathsf{len})]$, $k \in |\vec{h}|$:
\begin{enumerate}
\item If $\PRC.\Decode_\sk(h_k(\tokens_i), \ldots, h_k(\tokens_j)) \neq \bot$, return $\true$.
\end{enumerate}
\item If none of the above calls to $\PRC.\Decode$ succeed, return $\false$.
\end{enumerate}
}

\section{Proof}

We begin by proving that $(\Gen, \Watermark, \Detect)$ is a secure watermarking scheme (i.e., undetectable and sound), as these properties hold regardless of the robustness of $\PRC$ (though they do require pseudorandomness and soundness); afterwards, we prove the definitions of robustness for each of our results in turn based on the respective required robustness properties for $\PRC$.

\subsection{Undetectability and Soundness}

\begin{claim*}
\label{clm:sound}
$(\Gen, \Watermark, \Detect)$ satisfies soundness and undetectability for any family of generative models $\M = \braces{\Generate(n)}_{n \in \Nat}$.
\end{claim*}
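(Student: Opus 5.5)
For soundness, I will fix an arbitrary $\tokens \in \V(n)^{m(n)}$ and take a union bound over all the calls $\Detect$ makes to $\PRC.\Decode$. There are at most $\mathsf{len}\cdot \ell(n)\cdot \ceil{m/\ell} \le m(n)^2\ell(n)$ calls, which is polynomial in $n$. The only subtlety is that the string $(h_k(t_i),\ldots,h_k(t_j))$ passed to $\Decode$ depends on the random $\vec h$. However, $\vec h$ is sampled by $\Gen$ independently of $\sk \leftarrow \PRC.\Gen(1^n)$. So I will condition on $\vec h$: for each fixed $(i,j,k)$ and each fixed value of $\vec h$, the input to $\Decode$ is a fixed string $c \in \Bit^*$. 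PRC soundness then gives $\Pr_{\sk}[\Decode_\sk(c)\ne\bot]\le\epsilon(n)$. Averaging over $\vec h$ and summing over polynomially many triples yields a negligible total. Formally the PRC soundness bound is ``for each $c$ there is a negligible $\epsilon_c$''. Since the relevant strings all have length at most $\ell(n)$, I will interpret soundness as holding uniformly over strings of a given length, which is the standard reading. I expect this uniformity point to be the main technical care needed here.

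For undetectability, I will proceed by a hybrid argument. Hybrid $H_0$ is the real oracle $\Watermark_\sk$. In hybrid $H_1$, every codeword $c\leftarrow\PRC.\Encode_\sk$ drawn in step 2 is replaced by a fresh uniform string in $\Bit^{\ell}$. A distinguisher between $H_0$ and $H_1$ yields an adversary against PRC pseudorandomness. This reduction samples $\vec h$ itself and runs $\Generate$ (which is efficient) together with the embedding procedure. It obtains each codeword by querying its oracle on the single message (the zero-bit case) — but the uniform oracle in the pseudorandomness definition is a random \emph{function}, so repeated queries on the same message return the same string. To handle this I will either use the standard many-sample/stateful formulation of PRC pseudorandomness from \cite{ChrGun24}, where each encode call yields a fresh pseudorandom codeword indistinguishable from fresh uniform strings, or note that the zero-bit PRC encoding is randomized and argue with that definition. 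This is the step I expect to need the most care in stating correctly.

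I will then show that $H_1$ is \emph{identically} distributed to $\O_\Generate$. Fix the prefix, and let $D=\Generate(\prompt,\tokens)$. The bit $c_i$ is uniform and independent of everything used so far, because each bit of each fresh uniform codeword is used exactly once. Condition on $t_0,t_1$ and on $h$. If $h(t_0)=h(t_1)$, the output is $t_b$ for uniform $b$. If $h(t_0)\ne h(t_1)$, the output is $t_b$ with $b$ determined by $c_i \oplus h(t_0)$, which is also uniform. In both cases the output is $t_b$ with $b$ uniform and independent of $(t_0,t_1)$, hence distributed exactly as $D$. This holds even though $h$ is reused across a block, since we condition on $h$ and only the fresh bit $c_i$ matters. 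Inducting over tokens, and noting that the $\End$-padding is handled identically, shows each watermarked response has the $\O_\Generate$ distribution. Independence across oracle queries follows because each query uses fresh codewords and fresh randomness. The reused $\vec h$ does not matter, since the output distribution is the same for every $h$.
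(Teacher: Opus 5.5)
Your proposal is correct and follows the paper's route. Soundness is a union bound over the polynomially many $\PRC.\Decode$ calls. Undetectability is a reduction to PRC pseudorandomness, together with the fact that with truly random codeword bits each token is distributed exactly as $D$; your observation that the output is $t_b$ with $b$ uniform and independent of $(t_0,t_1)$ is a cleaner version of the paper's computation $2D(t^*)(1-D(t^*))\cdot\frac{1}{2}+D(t^*)^2=D(t^*)$.

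Your extra care—conditioning on $\vec{h}$ (independent of $\sk$), reading PRC soundness uniformly over short strings, and using the fresh-sample form of pseudorandomness—covers points the paper glosses over. Only your first option for pseudorandomness works: the literal random-function oracle on the single zero-bit message returns one repeated string, and reusing one codeword across all blocks and queries does not give the $\O_{\Generate}$ distribution.
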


\begin{proof}
Both properties are fairly straightforward to demonstrate. 

Soundness follows from the soundness of $\PRC$; notice that $\Detect$, on any given input, calls $\PRC.\Decode$ at most a number of times polynomial in the security parameter $n$, since $m(n)$, $\ell(n)$, and the size of $\vec{h}$ are polynomial in $n$. Since $\sk$ is generated by $\PRC.\Gen$, the probability of each such call returning a result besides $\bot$ is negligible in $n$, and thus the probability over $\sk$ and $\vec{h}$ of $\Detect$ returning $\true$ must likewise be negligible by a union bound.

Undetectability follows from the pseudorandomness of $\PRC$. First consider an ``ideal'' world where the code $c$ from $\PRC$ is replaced with true randomness, and thus each $c_i$ is $0$ or $1$ with exactly probability $1/2$. In this world we can show easily that $\Watermark$ perfectly preserves the distribution $D$ of each token generated by $\Generate$. To see this, consider any token $t^* \in \V$. If $t^*$ is drawn as $t_0$ and $t_1 \neq t^*$ (which happens with probability $D(t^*)(1 - D(t^*))$), and $p$ is the probability that $h(x) \neq h(t^*)$ conditioned on $x \leftarrow D$ and $x \neq t^*$, the probability that $t^*$ is selected as the next token $t$ is:

\[p \Pr[c_i = h(t^*)] + (1-p) \frac{1}{2} = \frac{1}{2}p + \frac{1}{2}(1-p) = \frac{1}{2}\]
since, in $\Watermark$, $t_0$ is selected when $c_i = h(t_0)$ (which, as we assume $c_i$ to be uniformly random, happens with probability exactly 1/2) if $h(t_0) \neq h(t_1)$ and with probability 1/2 otherwise. The same is true symmetrically when $t^*$ is drawn as $t_1$ and $t_0 \neq t^*$, and when $t^*$ is drawn as \emph{both} $t_0$ and $t_1$ (which happens with probability $D(t^*)^2$) then it is selected with probability 1. Thus, as all of these events are mutually exclusive, the probability that $t^*$ is selected as the next token $t$ is exactly:

\[2  (D(t^*)(1 - D(t^*))\bigg(\frac{1}{2} \bigg) + D(t^*)^2 = D(t^*)\]

So if $c$ is truly random, it is straightforward to see that, in each iteration of $\Watermark$, the next token $t$ is drawn with exactly the same probability as it would be by $\Generate$, and thus the output of $\Watermark$ is identically distributed to the output of $\Generate$ on any specific prompt $\prompt$. We shall denote the above ``ideal'' experiment by $\Ideal$.

However, as $c$ is instead the \emph{pseudorandom} output of $\PRC$, assume towards contradiction that there is some adversary $\A$ and polynomial $p(\cdot)$ where
\[|\Pr[\sk \leftarrow \Gen(1^n) : \A^{\Watermark_\sk(1^n, \cdot)}(1^n) = 1] - \Pr[\A^{\O_\Generate(1^n, \cdot)}(1^n) = 1]| \geq p(n)\]

We leverage $\A$ to construct an adversary $\A'$ which contradicts the pseudorandomness of $\PRC$. $\A'$ runs $\A$, but every time $\A$ invokes its oracle $\A'$ runs and returns the result of running $\Watermark$ on the given prompt with the single exception that every code $c$ generated from $\PRC$ is replaced by the output of a call to its own oracle.

If the oracle given to $\A'$ calls $\PRC.\Encode$, then the experiment $\A'$ runs is identical to $\Watermark$; if it instead generates uniform randomness, then the experiment is instead identical to $\Ideal$, whose output is as we have already argued identically distributed to generating tokens with $\Generate$. Thus,
\[|\Pr[\sk \leftarrow \Gen(1^n) : \A'^{\O_\Encode(\sk)}(1^n) = 1] - \Pr[\A'^{\O_\Unif}(1^n) = 1]|\]
is by construction equal to
\[|\Pr[\sk \leftarrow \Gen(1^n) : \A^{\Watermark_\sk(1^n, \cdot)}(1^n) = 1] - \Pr[\A^{\O_\Generate(1^n, \cdot)}(1^n) = 1]|\]
But since we assumed this latter quantity was at least polynomial $p(n)$, the former must be as well, contradicting the pseudorandomness of $\PRC$ and completing the argument that $(\Gen, \Watermark, \Detect)$ satisfies undetectability.

\end{proof}

\subsection{Robustness}

The robustness guarantee of our watermarking scheme depends primarily on the robustness of $\PRC$, since, while the scheme itself generates errors attempting to embed the PRC codeword $c$ into (the hash of) the output, we show that these errors are comprised of a limited number of specifically substitution errors, which can be bounded as long as sufficiently many of the output tokens are generated with enough empirical entropy. Notably, however, these errors are not necessarily random, as, depending on the behavior of the model, tokens could be generated in such a way that the errors are correlated with significant probability over the choice of $h$. 

\subsubsection{Random Substitutions}

If we restrict to the case of only random substitutions, we note that an alternative version of $\Watermark$ that generates a new function $h$ for every token, as opposed to reusing the same $h$ for an entire codeword of $\PRC$, would be sufficient to prove robustness, since without insertions or deletions the number and order of the tokens cannot change (and thus one could always track which function $h$ was used to generate each token, even after random substitutions were made). However, this alternative approach fails with even a single insertion or deletion, and additionally only gains a small constant factor in the analysis for random substitutions, so we opt to instead present and analyze the existing version of $\Watermark$ to better set up the subsequent results for robustness against substitutions and deletions.

For the case of random substitutions only, we demonstrate the following claim:

\begin{claim*}
\label{clm:robust}
For constants $\epsilon, \epsilon', \delta > 0$ such that $\epsilon < \epsilon' \delta / 16$, if $\PRC$ having block length $\ell(n)$ is robust against any $(1/2 - \epsilon)$-bounded binary substitution channel, then, for any family $f = \braces{f_n: \V(n) \rightarrow \Delta(\V(n))}_{n \in \Nat}$, $\Watermark$ satisfies $(n \ell(n), \delta, 1)$-entropy-restricted substring robustness against the random substitution channel family $\E_\text{sub}(f, 1/2-\epsilon')$.
\end{claim*}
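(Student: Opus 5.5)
The plan is to reduce to the robustness of $\PRC$ block by block. Fix a prompt and condition on the event that some substring of length at least $n\ell(n)$ has at least a $\delta$ fraction of tokens with empirical entropy at least $1$. Such a substring contains about $n$ full blocks (codewords). By averaging, at least a constant fraction of these blocks, roughly $\delta n/2$ of them, must themselves contain at least a $\delta/2$ fraction of high-entropy tokens. It then suffices to show two things. First, each such ``good'' block, after the channel $\E_\text{sub}(f,1/2-\epsilon')$, has hashed disagreement rate with its codeword at most $1/2-\epsilon$ with some constant probability $q$. Second, this happens independently enough across blocks, since $h_j$ and $c$ are fresh per block, that with probability $1-(1-q)^{\Omega(n)}$ at least one block succeeds. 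For that block, $\Detect$ tries the exact window and the correct $h_j$, because no insertions or deletions occur. Robustness of $\PRC$ against $(1/2-\epsilon)$-bounded substitution channels then gives decoding with overwhelming probability. One subtlety is that $\PRC$ robustness is stated for channels independent of the codeword. I would handle this by first arguing in the ideal world where $c$ is uniform, and then transferring to the real world using pseudorandomness, or by treating the error pattern as an adaptive bounded channel.

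The core step is the per-block error analysis. For each token position $i$ in the block, call $i$ a ``split'' if $h(t_0)\neq h(t_1)$. On a split, the output hashes to $c_i$. On a non-split, the chosen token is independent of $c_i$. By the key observation, after substitution via $f$, which depends only on the token, the token remains independent of $c_i$, so it errs with probability exactly $1/2$. On a split, substitution (with probability $1/2-\epsilon'$) can at worst cause an error. So conditioned on the split set $S$, the expected number of errors is at most $(\ell-|S|)/2+(1/2-\epsilon')|S| = \ell/2-\epsilon'|S|$. Moreover, the errors are independent across positions given $h$, the sampled pairs, and the split pattern, since each $c_i$ is a fresh uniform bit in the ideal world. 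Hoeffding's inequality then concentrates the count within $o(\ell)$ of its mean.

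It remains to lower-bound $|S|$ in a good block. For a token of empirical entropy at least $1$, we have $D(t_i)\le 1/2$. Using the argument from the overview, the probability over $h$ and the second sample that the pair splits is at least $1/4$. So the expected number of splits among the at least $\delta\ell/2$ high-entropy positions is at least $\delta\ell/8$. Since $|S|\le \ell$, Markov's inequality applied to the complement gives, with constant probability over $h$, $|S|\ge \delta\ell/16$. In that case the error rate is at most $1/2-\epsilon'\delta/16+o(1)<1/2-\epsilon$, using $\epsilon<\epsilon'\delta/16$; the constant $16$ should fit exactly after adjusting the averaging.

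The main obstacle is making this averaging rigorous when $h$ is reused across the block and the distributions $D$ are adaptively chosen by the model depending on earlier tokens. The split indicators are neither independent nor is the high-entropy set fixed in advance. I would handle this with a martingale argument: the conditional probability of a split at a high-entropy step is at least $1/4$ given the history, so with constant probability the number of splits is at least a constant fraction of the high-entropy count. Alternatively, I would bound $\mathbb{E}[|S|]$ directly and use $|S|\le\ell$ together with the reverse Markov bound as above. A further concern is that the ``good block'' event is itself defined by the output, so I would condition carefully, taking a union bound over the polynomially many choices of substring.
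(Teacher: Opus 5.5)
Your proposal is correct and follows essentially the paper's route: pass to the ideal world with uniform codewords, classify positions by whether $h(t_0)\neq h(t_1)$, use that an $f$-substitution of a non-split token stays independent of $c_i$ (so the expected error count is at most $\ell/2-\epsilon'|S|$), lower-bound $|S|$ in good blocks by averaging over the reused $h$, and amplify using the independent per-block hash functions. Your deviations are sound refinements. You apply Hoeffding to the post-substitution error count instead of the paper's Markov step, and you use a reverse-Markov bound ($|S|\ge\delta\ell/16$ with probability at least $\delta/16$, made rigorous by conditioning on the output sequence, which is independent of $h$) instead of the paper's ``probability $1/4$ that a $1/4$ fraction split''. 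However, you should drop the martingale alternative: conditioned on earlier split indicators, the shared $h$ can force the split probability to $0$, so $|S|$ does not concentrate.
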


(We note that \cite{ChrGun24} demonstrates constructions of the requisite PRC.)

\begin{proof}

First, similarly to the other two properties, it suffices to prove robustness by bounding the number of errors (i.e., tokens $t_i$ where $h_j(t_i) \neq c_i$) in the ``ideal" experiment described in Claim \ref{clm:sound} where the codewords $c$ generated by $\PRC$ are replaced by uniformly random bits; as shown in Claim \ref{clm:sound}, pseudorandomness of $\PRC$ guarantees that the view of this experiment is identically distributed to that of the real execution of $\Watermark$ except with negligible probability, and thus, if in the ideal experiment there exists with overwhelming probability a block of the random codeword $c$ where sufficiently many tokens in the output are correctly embedded, it follows that there will also with overwhelming probability be such a block of the pseudorandom code $\PRC$ (which will thus be recoverable) in the real experiment.

We next need to bound the error rate of the output of $\Watermark$ introduced by the embedding process. Intuitively, in each step, the closer the distribution $\braces{h(d):d \leftarrow D}$ is to uniform over $\Bit$, the more likely the algorithm is to draw a pair $(t_0, t_1)$ such that $h(t_0) \neq h(t_1)$ and thus output a token that hashes to the correct value of $c_i$ from $\PRC$ without biasing the distribution of tokens. We make the following claim to show that, in a block where a sufficient number of tokens are drawn with enough empirical entropy, it is likely that enough of those tokens are drawn from sufficiently ``close to uniform" distributions so that enough of \emph{those} tokens are able to be drawn with hash value matching the PRC codeword to make the watermark easily detectable: 

\begin{claim*}
\label{clm:expect}
Consider the execution of $\Watermark$ over some block of $\PRC$ (i.e., the iteration from $i = 1$ to $i = \ell(n)$ for some fixed $j$). There exists negligible $\nu(\cdot)$ such that, with at least probability $1/4$ over the choice of function $h$, the fraction of tokens $t_i$ for which $h_j(t_i) \neq c_i$ is, except with probability $\nu(n)$ over the randomness of $\Watermark$, at most $1/2 - \delta' / 16$, where $\delta'$ is the fraction of tokens drawn with empirical entropy at least 1.
\end{claim*}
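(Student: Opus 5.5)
The plan is to work throughout in the ideal experiment $\Ideal$ from Claim \ref{clm:sound}, where the block's codeword $c$ is uniform and independent of everything else. Fix the block index $j$ and write $h=h_j$. At step $i$, call the step a \emph{success} ($S_i=1$) if the two samples satisfy $h(t_0)\neq h(t_1)$, and a \emph{failure} otherwise. The proof then splits into three steps.

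\textbf{Step 1: error count given the successes.} In a success step, the output token satisfies $h(t_i)=c_i$ by construction, so it contributes no error. In a failure step, the output token is chosen without looking at $c_i$. Since $c_i$ is used nowhere else, $c_i$ is a fresh uniform bit independent of the entire transcript of the block. Hence, conditioned on the transcript (the pairs, $h$, and the success pattern), the failure-step error indicators $\textbf{1}_{h(t_i)\neq c_i}$ are i.i.d.\ $\Ber(1/2)$. By Hoeffding, the number of errors is at most $(\ell-\sum_i S_i)/2+\ell^{2/3}$, except with probability $e^{-\Omega(\ell^{1/3})}$. This is negligible since $\ell=\ell(n)$ grows polynomially. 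It therefore suffices to show that, with probability at least $1/4$ over $h$, the number $K$ of steps that are \emph{both} high-entropy and successful satisfies $K\ge \delta'\ell/8+O(\ell^{2/3})$, again except with negligible probability over the rest of the randomness. The lower-order terms are absorbed into the constant, with slack in the $1/16$.

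\textbf{Step 2: per-step lower bound on successes.} The key observation is the symmetry from the undetectability proof. Because $c_i$ is uniform, given the ordered pair and $h$ each of $t_0,t_1$ is output with probability exactly $1/2$. So the output token $t$ and the discarded token $t'$ are i.i.d.\ from $D$, and a success is exactly the event $h(t)\neq h(t')$. If $t$ has empirical entropy at least $1$, then $D(t)\le 1/2$, so $t'\neq t$ with probability at least $1/2$. When $t'\neq t$ and $h$ is a fresh random function, $h(t)\neq h(t')$ with probability $1/2$. This gives, per step,
\[
\Pr[\text{step } i \text{ high-entropy and } S_i=1]\ \ge\ \tfrac14\Pr[\text{step } i \text{ high-entropy}].
\]

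\textbf{Step 3 (main obstacle): reuse of $h$ across the block.} The difficulty is that $h$ is shared by the whole block, so conditioned on the history, the values of $h$ on previously seen tokens are no longer uniform. The inequality of Step 2 therefore holds only on average over $h$, not stepwise given $h$. My plan is to avoid conditioning on history and instead take expectations jointly over $h$ and the randomness of $\Watermark$. By linearity,
\[
\mathbb{E}[K]\ \ge\ \tfrac14\,\mathbb{E}[\#\text{high-entropy steps}].
\]
Next, for each fixed $h$, concentrate $K$ and the count $\delta'\ell$ around their conditional expectations given $h$ using Azuma's inequality on the Doob martingale over the steps of the block (bounded differences $1$). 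This reduces everything to the functions $g(h)=\mathbb{E}[K\mid h]/\ell$ and $d(h)=\mathbb{E}[\delta'\mid h]$, with $g\le d$.

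Finally, apply a reverse-Markov averaging argument to $g$ relative to $d$ to obtain a set of functions $h$ of constant probability on which $g(h)\ge d(h)/8$. The exact probability $1/4$ claimed requires some care in this averaging step. A naive reverse Markov only gives about $1/7$, so I would first try to sharpen it using that conditional on $t'\neq t$ the split probability is exactly $1/2$ per fresh pair, which should give the stated $1/4$. If that sharpening fails, I would settle for a smaller constant probability, which still suffices downstream because blocks use independent $h_j$ and only a constant fraction of good blocks is needed.
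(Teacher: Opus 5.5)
\textbf{There is a genuine gap in Step 3.} Steps 1 and 2 are sound and use the same ingredients as the paper. A fresh $c_i$ on failure steps makes those errors i.i.d.\ $\Ber(1/2)$, and $D(t)\le 1/2$ together with a random hash gives the per-step $1/4$.

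The failing step is Azuma on the Doob martingale $\mathbb{E}[K\mid h,\mathcal{F}_i]$: its differences are not bounded by $1$. The model's later next-token distributions depend on earlier outputs, so a single step can shift the conditional expectation of all remaining terms by $\Theta(\ell)$. Hence neither $K$ nor $\delta'\ell$ need concentrate around its expectation given $h$, and $g(h)\ge d(h)/8$ between conditional expectations says nothing about $K$ versus $\delta'\ell$ in an individual run. Concretely, let the first token be uniform over three ``mode'' tokens, and in mode $k$ let every later token be uniform on $\{a_k,b_k\}$. Then $\delta'=1$ always. If $h$ splits exactly one pair, $g(h)=1/6\ge d/8$, yet with probability $2/3$ the run has $K\approx 0$ and error fraction $\approx 1/2$. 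In this example only the $h$ splitting all three pairs (probability $1/8$) satisfy the ``for such $h$, except with negligible probability over $\Watermark$'' form you are aiming for. So that quantifier order is unattainable for adaptive models. The claim has to be read with probability taken jointly over $h$ and the run, or conditionally on the realized tokens. That is what the paper's footnote effectively does and what the later cross-block argument needs.

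\textbf{The missing idea is to condition on the run rather than on $h$.} Your own Step 2 symmetry shows that the whole sequence $\pi$ of (output, discarded) pairs has the same law for every $h$, so $\pi$ is independent of $h$. The history biases $h$ only if you also condition on $c$ or on the success pattern. Fix $\pi$, let $H'$ be the high-entropy steps with $t'_i\neq t_i$, and randomize only $h$. Then $K$ is a uniformly random cut of the multigraph with edges $\{t_i,t'_i\}$, $i\in H'$, so $\mathbb{E}_h[K]=|H'|/2$ exactly and $K\le|H'|$. Reverse Markov gives $\Pr_h[K\ge(\tfrac14+\eta)|H'|]\ge(\tfrac14-\eta)/(\tfrac34-\eta)\ge\tfrac14$ for $\eta\le 1/12$. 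This is the sharpening you anticipated, but it is valid per realization, not for the averaged $g$.

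Separately, $|H'|\ge|H|/2-O(\ell^{2/3})$ except with negligible probability. This follows from Azuma on the increments $\mathbf{1}[\text{step } i\text{ high-entropy}]\,(\mathbf{1}[t'_i\neq t_i]-\tfrac12)$, which are genuinely bounded and have nonnegative mean given $\mathcal{F}_{i-1}$ and $t_i$. The $\eta$ is what absorbs your lower-order terms. There is no slack in the $1/16$ itself, since $K\ge\delta'\ell/8$ yields exactly $1/2-\delta'/16$. Absorbing them needs $\delta'$ bounded below by a constant, as the paper's Hoeffding step also implicitly assumes.

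You were right to distrust the constant. The paper's intermediate assertion, that with probability $1/4$ over $h$ at least $1/4$ of the high-entropy tokens split, rests on an averaging remark that does not follow from the expectation alone. The per-realization cut argument gives the weaker $K\gtrsim|H|/8$ with probability about $1/3$. That is exactly enough for the claim's final $1/2-\delta'/16$.
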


\begin{proof}
Consider an alternate experiment to generate the given block of $\Watermark$ where, to generate every token $t_i$, we first sample a \emph{new} function $h \leftarrow \H$, and sample $t_i$ using the respective function $h$ but otherwise according to $\Watermark$. Then, given some token $t_i$ drawn with empirical entropy at least 1, this means that it was sampled from a distribution $D$ such that $D(t_i) \leq 1/2$. In particular, this means that the probability that $t_i$ was generated from a pair $(t_0, t_1)$ such that $t_0 = t_1$ must likewise be at most 1/2. If $t_0 \neq t_1$, then the probability over the randomly generated function $h$ that $h(t_0) \neq h(t_1)$ is exactly 1/2 (meanwhile, if $t_0 = t_1$, this probability is clearly 0); thus, the probability that $h(t_0) \neq h(t_1)$ during the generation of this specific token $t_i$ must be at least 1/4.

Notice that if $h(t_0) \neq h(t_1)$, then by construction of $\Watermark$ we will always select $t_i$ such that $h(t_i) = c_i$---that is, the embedding of the bit $c_i$ of the PRC codeword will be successful. Meanwhile, if $h(t_0) = h(t_1)$, then the embedding will be successful if and only if $h(t_0) = h(t_1) = c_i$, which over randomly generated $h$ happens with probability exactly 1/2. 

\iffalse
Combining this with the above yields that, if $t_i$ has empirical entropy of at least 1, then $h(t_i) = c_i$ must be true with probability at least $1/4 (1) + 3/4 (1/2) = 5/8$.
\fi

In the actual execution of $\Watermark$, $h$ remains the same throughout the entire generation of the block; however, because tokens are always generated according to their unbiased probabilities from $\Generate(\prompt, \\ \vec{t})$ by construction, we note that the tokens generated are identically distributed to the above experiment (or to the output of simply generating non-watermarked tokens using $\Generate(\prompt, \vec{t})$). Because of this, it still holds that, if we sample two independently random tokens $t_0$ and $t_1$ as above, there is still probability $1/4$ over the choice of $h$ that $h(t_0) \neq h(t_1)$ for every token $t_i$ generated with sufficient empirical entropy. Unlike the above experiment, though, these probabilities are not independent as tokens may be repeated, so it is not necessarily the case that the fraction of tokens satisfying the entropy requirement with $h(t_0) \neq h(t_1)$ is close to $1/4$ except with negligible probability; however, we may still conclude by an averaging argument that, with at least probability $1/4$ over the choice of $h$, at least $1/4$ of the tokens with sufficient empirical entropy have $h(t_0) \neq h(t_1)$.\footnote{This may seem unintuitive since, at first glance, one would expect the tokens selected, and thus the indices of the tokens with sufficient experimental entropy, to be affected by the choice of $h$; however, we again stress that, because $\Watermark$ perfectly preserves the distributions of tokens generated, the tokens are independent of $h$ and $c$. In particular, one can imagine any set $S \subset [\ell(n)]$ of tokens and argue that, conditioned on the tokens with indices in $S$ having sufficient empirical entropy, at least $1/4$ of the tokens with indices in $S$ have $h(t_0) \neq h(t_1)$ with probability $1/4$ over randomly generated $h$.}

Thus, in the case (which occurs with at least probability $1/4$) where at least $1/4$ of the tokens with sufficient empirical entropy (i.e., a $\delta' / 4$ fraction of total tokens) have $h(t_0) \neq h(t_1)$, those tokens will \emph{never} have $h_j(t_i) \neq c_i$, and the remainder will have it with at most probability $1/2$. This means that, \emph{in expectation}, at most a $1/2 (1 - \delta'/4) = 1/2 - \delta'/8$ fraction of tokens can have $h_j(t_i) \neq c_i$. However, for the tokens with $h(t_0) = h(t_1)$, the probabilities that $h_j(t_i) \neq c_i$ are independent in the ``ideal" experiment where the codeword $c$ is randomly generated, and thus it can be asserted that, except with negligible probability, the fraction of those tokens for which $h_j(t_i) \neq c_i$ is actually true will be close to its expectation.

Specifically, assume at least a $1/4$ fraction of tokens have $h(t_0) = h(t_1)$; otherwise we are already done (as then a $3/4$ fraction would be guaranteed to have $h_j(t_i) = c_i$). Then, by Hoeffding's inequality, the probability that the fraction of all tokens with both $h(t_0) = h(t_1)$ and $h_j(t_i) \neq c_i$ is at least $\delta'/16$ greater than its expectation---in other words, that the \emph{number} of such tokens is at least $\delta' \ell(n)/16$ greater than its expectation---is at most

\[\text{exp} \bigg( \frac{-2(\delta' \ell(n)/16)^2}{ \ell(n)/4} \bigg) = \text{exp}(- \delta'^2 \ell(n) / 32)\]

which is negligible in $n$ because $\delta'^2/32$ is a (small) constant and $\ell(\cdot)$ is polynomial. And, in the case where a $\delta' / 4$ fraction of total tokens have $h(t_0) \neq h(t_1)$ but the above event is \emph{not} true (which, letting $\nu(n)$ be the negligible function above, must happen with at least probability $1/4 - \nu(n)$), the actual fraction of tokens with $h_j(t_i) \neq c_i$ can be at most

\[1/2 - \delta'/8 + \delta'/16 = 1/2 - \delta'/16\]

which completes the argument.

\end{proof}

Given this, assume that an output exists some substring of length at least $n \ell(n)$ containing at least a $\delta$ fraction of tokens with empirical entropy 1 (otherwise, if no such substring exists entropy-restricted substring robustness holds trivially). In this case, this substring must contain $k \geq n-1$ complete blocks of the PRC, and those blocks must contain at least a $\delta - \frac{1}{n}$ fraction of the tokens with sufficient entropy, which we may take to be greater than $\delta/2$ for sufficiently large $n > 2/\delta$. 

Then, by an averaging argument, at least $\delta k / 2$ blocks in that substring have at least a $\delta/4$ fraction of such tokens, which by Claim \ref{clm:expect} means that those blocks, with probability $1/4$ over the choice of functions $h$, have (with overwhelming probability) at most a $1/2 - \delta / 64$ error rate for embedding $c$. We shall refer to such blocks as ``good" for later analysis. Notably, because an independently random function $h$ is drawn for each block, these probabilities are independent.

Now, let us assume that the output is subjected to the random substitution channel $\E_\text{sub}(f, 1/2-\epsilon')$---that is, with independent probability $1/2-\epsilon'$, each token is replaced with a different token determined according to an arbitrary function $f$.\footnote{Namely, consider mapping each token in the alphabet to a distribution of replacement tokens and, for each replacement, selecting a token from the corresponding distribution, as long as the same distribution is used for each instance of a token.} Given this, we want to show that with overwhelming probability there is at least one block of the output that still has at most a $(1/2 - \epsilon)$ fraction of tokens $t_i$ for which $h_j(t_i) \neq c_i$; if we can conclude this, then by the robustness of $\PRC$ the watermark must be detectable with overwhelming probability by hashing that block and running $\PRC.\Decode$ during the execution of $\Detect$. Formally, we prove this via the following:

\begin{claim*}
\label{clm:good}
Given some ``good" block of the output of $\Watermark$, after $\E_\text{sub}(f, 1/2-\epsilon')$ is applied the expected fraction of tokens $t_i$ in that block for which $h_j(t_i) \neq c_i$ is no more than $1/2 - \epsilon' \delta / 16$.
\end{claim*}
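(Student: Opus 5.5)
The plan is to split the tokens of a fixed good block into two classes according to how they were generated by $\Watermark$, and then track how the substitution channel changes the error indicator $\mathbf{1}_{h_j(t_i) \neq c_i}$ in each class. Throughout we work in the ideal experiment of Claim \ref{clm:sound}, where $c$ is uniformly random. Fix $h = h_j$ and the good event of Claim \ref{clm:expect}: at least a $\delta/4 \cdot 1/4 = \delta/16$ fraction of the block's tokens were produced in the ``success'' case $h(t_0) \neq h(t_1)$. Call the set of such indices $S$, and let $s = |S|/\ell(n) \ge \delta/16$.

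\emph{Tokens outside $S$ (failure case).} For $i \notin S$, we have $h(t_0) = h(t_1)$, and the output token $t_i$ is chosen by a fair coin between $t_0$ and $t_1$. Its selection therefore never consults $c_i$, so $t_i$ is independent of $c_i$. The channel $\E_\text{sub}(f, 1/2-\epsilon')$ replaces $t_i$ with probability $1/2 - \epsilon'$ by a sample from $f_n(t_i)$. This depends only on $t_i$ and fresh randomness, so the resulting token $t'_i$ is still independent of $c_i$. Since $c_i$ is a uniform bit independent of $(h, t'_i)$, we get
\[\Pr[h(t'_i) \neq c_i] = 1/2\]
exactly, both before and after substitution. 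A subtle point here is that $c_i$ must also be independent of which indices fall in $S$ and of the earlier tokens. This holds because membership of $i$ in $S$ is determined by $(t_0, t_1, h)$, which are drawn before $c_i$ is consulted. Earlier tokens depend only on $c_{i'}$ for $i' < i$, and in the ideal world those bits are independent of $c_i$.

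\emph{Tokens in $S$ (success case).} For $i \in S$, before substitution $h(t_i) = c_i$, so the token is never an error. With probability $1/2 + \epsilon'$ it is left untouched and stays correct. With probability $1/2 - \epsilon'$ it is substituted, and in the worst case the substitution always produces an error. So each such token is an error with probability at most $1/2 - \epsilon'$.

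\emph{Combining the two classes.} By linearity of expectation, the expected error fraction after the channel is at most
\[\tfrac12(1 - s) + \left(\tfrac12 - \epsilon'\right) s = \tfrac12 - \epsilon' s \le \tfrac12 - \epsilon'\delta/16,\]
using $s \ge \delta/16$.

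I expect the main obstacle to be the bookkeeping in the first step: justifying rigorously that $t'_i$ is independent of $c_i$ for $i \notin S$. This matters because $S$ itself is random and correlated with the tokens, so it is not enough to argue independence for a fixed index set. I would handle it by conditioning on the full history up to step $i$ together with $(t_0, t_1, h)$ at step $i$. On the event $h(t_0) = h(t_1)$, the output $t_i$ is a function of these quantities and a fresh coin. The substitution adds only randomness independent of $c_i$, and $c_i$ remains uniform given all of this. Summing the resulting conditional expectations over $i$ then gives the bound above.
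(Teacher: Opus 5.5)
Your proposal is correct and follows essentially the paper's argument. You split the block into tokens drawn with $h(t_0)\neq h(t_1)$, which are correct before the channel and, in the worst case, become errors with probability at most $1/2-\epsilon'$, and the remaining tokens, which are errors with probability exactly $1/2$ before and after substitution because neither their selection nor $f$ consults $c_i$; you then combine linearly using a success fraction of at least $\tfrac14\cdot\tfrac{\delta}{4}=\delta/16$ (the paper's text says $\delta/8$ at that step, but its stated bound uses $\delta/16$, so your count is the consistent one). Your independence bookkeeping is more careful than the paper's, but since the good event also depends on later indices, add that when $i\notin S$ no later step of $\Watermark$ consults $c_i$: later steps depend only on $t_i$, which is independent of $c_i$, and on $c_{i'}$ for $i'>i$, so $c_i$ stays uniform even conditioned on the whole block transcript.
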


\begin{proof}
Assume as a worst case that, whenever a substitution is made to a token which, in the experiment given in Claim \ref{clm:expect}, was drawn with $h(t_0) \neq h(t_1)$ (i.e., selected from a pair whose elements hashed to different values), that substitution \emph{always} changes the hash value of $h$ applied to the token, creating a guaranteed error. Other tokens are errors with probability $1/2$, meaning that in expectation they will retain that probability after being substituted irrespective of the probability with which substitution changes the value of $h$.

In this case, let $\psi$ be the fraction of tokens with $h(t_0) \neq h(t_1)$. Then in expectation the resulting fraction of errors is
\[\psi \bigg(\frac{1}{2} - \epsilon' \bigg) + (1 - \psi) \frac{1}{2} = \frac{1}{2} - \epsilon' \psi\]

For ``good" blocks we assumed that $\psi$ was at least $\delta/8$ (since Claim \ref{clm:expect} selects blocks where more than 1/4 of the tokens with empirical entropy 1 have $h(t_0) \neq h(t_1)$ and we apply it to blocks with at least a $\delta/4$ fraction of tokens having empirical entropy 1), thus the conclusion follows.
\end{proof}

Markov's inequality then gives that the probability that, in a ``good" block, the fraction of tokens $t_i$ for which $h_j(t_i) \neq c_i$ is greater than $(1/2 - \epsilon)$ (i.e., the watermark cannot be successfully detected) can be at most
\[\frac{1/2 - \epsilon' \delta / 16}{1/2 - \epsilon}\]
which in particular means that, as long as $\epsilon < \epsilon' \delta / 16$, the probability that a ``good" block is recoverable is a (small) constant. Importantly, the probability of this occurrence is independent between blocks due to the resampling of $h$, which means that, since there are at least $\delta k/2$ blocks having enough tokens with empirical entropy 1, each of which has an (independent) $1/4$ probability to be ``good" and the above constant (and also independent) probability to, if ``good", have few enough tokens $t_i$ for which $h_j(t_i) \neq c_i$ that robustness of $\PRC$ guarantees that the codeword $c$ can be successfully detected, we can conclude that the probability of $\Detect$ failing to detect the watermark is negligible (inverse exponential) in $k \geq n-1$ and thus in $n$, as desired.
    
\end{proof}

Together with Claim \ref{clm:sound}, Claim \ref{clm:robust} implies Theorem \ref{thm:substitution} and thus Corollary \ref{cor:substitution}.

\subsubsection{Random Substitutions and Deletions}

Continuing to the case of random substitutions and deletions, we present the following claims, which are both closely related to (and whose proofs largely follow from the proof of) Claim \ref{clm:robust}. First, we complete the proof of Theorem \ref{thm:subdel1} by analyzing the case of prompts which satisfy Definition \ref{def:consistent}:

\begin{claim*}
For constants $\epsilon_{sub}, \epsilon'_{sub}, \epsilon_{del}, \delta > 0$ such that $\epsilon_{sub} < \epsilon'_{sub} \delta / 16$, if $\PRC$ is robust against $\E_{\text{bin}}(p) \circ \E_{\text{del}}(1 - \epsilon_{del})$ for any $p \in [0, 1/2 - \epsilon_{sub})$, then, letting $f$ be any function family $f = \braces{f_n: \V(n) \rightarrow \Delta(\V(n))}_{n \in \Nat}$ and letting $\Pi$ be the set of prompts $\prompt$ that produce consistently distributed errors with respect to $\E_{sub}(f, \frac{1}{2} - \epsilon'_{sub})$, $\Watermark$ satisfies $(n \ell(n), \delta, 1)$-entropy-restricted substring robustness against the composition $\E_{sub}(f, \frac{1}{2} - \epsilon'_{sub}) \circ \E_{del}(1 - \epsilon_{del})$ over the set $\Pi$ of prompts.
\end{claim*}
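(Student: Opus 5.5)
We would follow the structure of the proof of Claim \ref{clm:robust} as closely as possible, replacing the Markov-inequality step (which is what forced us to use adversarial-substitution robustness there) with the consistent-error assumption of Definition \ref{def:consistent}. As before, we first pass to the ``ideal'' experiment in which each PRC codeword $c$ is replaced by uniform bits. Claim \ref{clm:sound} and pseudorandomness guarantee that this changes the probability of any efficiently checkable event (such as $\Detect$ failing) by only a negligible amount. We then fix an output with a substring of length at least $n\ell(n)$ containing a $\delta$ fraction of tokens with empirical entropy at least $1$. That substring contains $k \ge n-1$ complete blocks, and by the same averaging argument at least $\delta k/2$ of them have a $\delta/4$ fraction of such tokens. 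Each of these is ``good'' independently with probability at least $1/4$ over its own $h_j$, by Claim \ref{clm:expect}.

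For a good block, Claim \ref{clm:good} shows that after applying $\E_{sub}(f, \frac12 - \epsilon'_{sub})$ the expected fraction of indices with $h_j(t_i) \neq c_i$ is at most $\frac12 - \epsilon'_{sub}\delta/16$. Now we invoke the assumption that $\prompt \in \Pi$. Every token, after embedding and substitution, is independently an error with the same probability $p$, so the error pattern on the hashed block is exactly $\E_{\text{bin}}(p)$ applied to $c$. Since $p$ equals the per-token expectation, the bound from Claim \ref{clm:good} gives $p \le \frac12 - \epsilon'_{sub}\delta/16 < \frac12 - \epsilon_{sub}$ by the hypothesis $\epsilon_{sub} < \epsilon'_{sub}\delta/16$. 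This is the step we expect to be the main obstacle. Definition \ref{def:consistent} asserts a single constant $p$ for all tokens, but Claim \ref{clm:good} only bounds good blocks, conditioned on the choice of $h_j$. We would argue that the uniform $p$ must then agree with the expectation on good blocks. If that identification is shaky, the fallback is to read the definition as applying block-wise conditioned on $h_j$, which is how the heuristic is intended to be used.

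Next we handle deletions. Random deletions act on tokens independently of their content, so applying $\E_{del}(1-\epsilon_{del})$ after substitution and then hashing with $h_j$ is the same as applying $\E_{\text{bin}}(p)\circ\E_{del}(1-\epsilon_{del})$ to $c$. One point to address is that the surviving tokens of block $j$ sit contiguously in the edited output, at some unknown offset and with some length at most $\ell(n)$. $\Detect$ tries every start index $i$, every length up to $\ell(n)$, and every $h_k$, so it tries in particular the exact window of surviving tokens of block $j$ together with $h_j$. By the robustness of $\PRC$ against $\E_{\text{bin}}(p)\circ\E_{del}(1-\epsilon_{del})$, this call decodes except with negligible probability.

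Finally, good blocks occur independently across the $\ge \delta k/2$ candidate blocks, since the $h_j$ are sampled independently. Hence except with probability $(3/4)^{\delta k/2} = 2^{-\Omega(n)}$ at least one good block exists, and $\Detect$ outputs $\true$. Moving back from the ideal to the real experiment costs only negligible probability, which gives entropy-restricted substring robustness over $\Pi$.
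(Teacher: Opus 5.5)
Your argument follows the paper's proof. Claim~\ref{clm:good} bounds the expected post-substitution error fraction on a good block by $\frac12-\epsilon'_{sub}\delta/16$. Definition~\ref{def:consistent} then forces the common per-token error probability to satisfy $p<\frac12-\epsilon_{sub}$. The hashed block therefore sees $\E_{\text{bin}}(p)\circ\E_{\text{del}}(1-\epsilon_{del})$, and $\PRC$ robustness applies. The identification of $p$ that you flag as the main obstacle is exactly the step the paper takes with ``or else the above statement would be trivially false,'' so you are not missing anything there. Your remark that $\Detect$ scans every window of length at most $\ell(n)$ under every $h_k$ is a detail the paper leaves implicit here. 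Under the paper's reading of a single global $p$, its footnote notes that every block decodes, so your $(3/4)^{\delta k/2}$ good-block step is only needed for your block-wise fallback.

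The step that fails as written is the ideal/real bookkeeping. You assert that replacing codewords by uniform bits changes the probability of any efficiently checkable event, ``such as $\Detect$ failing,'' only negligibly, and you finish by ``moving back'' to the real experiment. But $\Detect$ requires $\sk$, which a pseudorandomness distinguisher does not have. Its failure probability is also not preserved: in the ideal experiment the output and its hashes are independent of $\sk$, so by soundness $\Detect$ returns $\false$ with overwhelming probability. For the same reason you cannot invoke $\PRC$ robustness inside the ideal experiment, since there $c$ is not a codeword at all.

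The fix is to reverse the order, as the paper does in the proof of Claim~\ref{clm:robust}:
\begin{enumerate}
\item Use the ideal experiment only for $\sk$-free statistics. The distinguisher receives $c$ from its oracle and samples $h$, the model's coins and the channel itself, so it can compute each block's error fraction.
\item Transfer the bound $\frac12-\epsilon'_{sub}\delta/16$ on the expected error fraction to the real experiment, up to a negligible term.
\item Apply Definition~\ref{def:consistent}, which is a statement about the real execution with genuine codewords, to get $p<\frac12-\epsilon_{sub}$ for large $n$.
\item Only then apply $\PRC$ robustness in the real experiment.
\end{enumerate}
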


\begin{proof}
This follows closely from Claims \ref{clm:expect} and \ref{clm:good}. Considering any output block defined in the proof of Claim \ref{clm:robust} as ``good" (i.e., being completely contained in the minimal-entropy substring and having at least a $\delta/4$ fraction of tokens with empirical entropy at least 1), it still holds after $\Watermark$ and $\E_{sub}(f, \frac{1}{2} - \epsilon'_{sub})$ that the expected fraction of tokens $t_i$ in that block for which $h_j(t_i) \neq c_i$ is no more than $1/2 - \epsilon'_{sub} \delta / 16$.

Given this, if the respective prompt satisfies Definition \ref{def:consistent}, then it must be the case that the probability $p$ for each token to have $h(t) \neq c_i$ can likewise be no greater than $1/2 - \epsilon'_{sub} \delta / 16$, or else the above statement would be trivially false. Thus, the modification made to the hash values $h(t)$ of the output tokens relative to the original PRC codeword $c$ by $\Watermark$ and $\E_{sub}(f, \frac{1}{2} - \epsilon'_{sub})$ must, according to Definition \ref{def:consistent}, be equivalent to the binary symmetric substitution channel $\E_{\text{bin}}(p)$ for some $p < \frac{1}{2} - \epsilon'_{sub} \delta / 16 < \frac{1}{2} - \epsilon_{sub}$.

Hence, when composing the above with the random deletion channel $\E_{del}(1 - \epsilon_{del})$, the transformation made to the codeword $c$ is equivalent to $\E_{\text{bin}}(p) \circ \E_{\text{del}}(1 - \epsilon_{del})$, which means that by the robustness property assumed of $\PRC$ the codeword must be recoverable with overwhelming probability.\footnote{This in fact applies to any block of the output as opposed to just ``good" blocks, since Definition \ref{def:consistent} likewise applies to the entire output.} 
\end{proof}

This, combined with Claim \ref{clm:sound}, suffices to prove Theorem \ref{thm:subdel1} and thus Corollary \ref{cor:subdel1}.

Finally, we complete the proof of Theorem \ref{thm:subdel2} by assuming a PRC with robustness against \emph{adversarial} substitutions and random deletions.

\begin{claim*}
For constants $\epsilon_{sub}, \epsilon'_{sub}, \epsilon_{del}, \delta > 0$ such that $\epsilon_{sub} < \epsilon'_{sub} \delta / 16$, if, for any $(1/2 - \epsilon_{sub})$-bounded binary substitution channel $\E$, $\PRC$ is robust against $\E \circ \E_{\text{del}}(1 - \epsilon_{del})$, then, for any function family $f = \braces{f_n: \V(n) \rightarrow \Delta(\V(n))}_{n \in \Nat}$, $\Watermark$ satisfies $(n \ell(n), \delta, 1)$-entropy-restricted substring robustness against the composition $\E_{sub}(f, \frac{1}{2} - \epsilon'_{sub}) \circ \E_{del}(1 - \epsilon_{del})$.
\end{claim*}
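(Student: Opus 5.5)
The plan is to reuse the proof of Claim \ref{clm:robust} almost verbatim, observing that the deletion channel acts after the substitution channel and independently of everything else. The deletions therefore simply compose with whatever binary substitution pattern is induced on the hashed codeword.

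\textbf{Step 1: pass to the ideal experiment.} As in Claim \ref{clm:robust}, I will work in the ideal experiment, where each codeword $c$ is uniformly random. By the pseudorandomness argument of Claim \ref{clm:sound}, this changes detection probabilities by at most a negligible amount.

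\textbf{Step 2: find many good blocks.} Fix an output with a substring of length at least $n\ell(n)$ in which a $\delta$ fraction of tokens have empirical entropy at least $1$. As before, this substring contains $k \ge n-1$ complete blocks, and by averaging at least $\delta k/2$ of them have at least a $\delta/4$ fraction of high-entropy tokens. Each such block is "good" with independent probability $1/4$ over its own $h_j$, by Claim \ref{clm:expect}. For a good block, Claim \ref{clm:good} shows that after $\E_{sub}(f,\frac12-\epsilon'_{sub})$ the expected fraction of positions with $h_j(t_i)\ne c_i$ is at most $1/2-\epsilon'_{sub}\delta/16$. By Markov's inequality, with some constant probability $q>0$ (using $\epsilon_{sub}<\epsilon'_{sub}\delta/16$), the realized error fraction is at most $1/2-\epsilon_{sub}$. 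Independence across blocks then gives that, except with probability $2^{-\Omega(n)}$, some block $j$ is good and has error fraction at most $1/2-\epsilon_{sub}$ after substitution.

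\textbf{Step 3: view the substituted block as a bounded binary substitution.} Condition on the outcome of the substitution stage for that block. The hashed block $(h_j(t'_1),\dots,h_j(t'_\ell))$ then equals $\E(c)$ for a fixed $(1/2-\epsilon_{sub})$-bounded binary substitution channel $\E$.

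\textbf{Step 4: account for deletions.} The deletion channel $\E_{del}(1-\epsilon_{del})$ acts on tokens with independent coins, independent of $c$, $h_j$ and the substitutions. Its restriction to the tokens of block $j$ is therefore exactly $\E_{del}(1-\epsilon_{del})$ applied to that block, and the surviving tokens of block $j$ form a contiguous substring of the edited output, of length at most $\ell$. $\Detect$ enumerates all start indices $i$, all end indices $j'\le i+\ell-1$, and all $k$. It will thus, in particular, call $\PRC.\Decode_\sk$ on $h_j$ applied to exactly that substring, which is distributed as $(\E\circ\E_{del}(1-\epsilon_{del}))(c)$. By the assumed robustness, this decodes to a non-$\bot$ value except with negligible probability, and a union bound finishes the argument.

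\textbf{Main obstacle.} The delicate point is Step 3: $\E$ depends on $c$, because which positions err depends on the embedding. PRC robustness is stated for a fixed channel. The remedy I would try first is the one implicit in Claim \ref{clm:robust}: read the robustness hypothesis as covering every $(1/2-\epsilon_{sub})$-bounded substitution channel, including ones that may depend on the codeword, i.e. adversarial substitutions. This is exactly what "any $(1/2-\epsilon_{sub})$-bounded binary substitution channel" is meant to supply.

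The remaining check is that, after deletions, the boundaries of block $j$ are not known to $\Detect$. This is harmless because of the exhaustive enumeration, and soundness is unaffected by the extra decode attempts (Claim \ref{clm:sound}).
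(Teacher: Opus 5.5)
Your overall route is the paper's: rerun the proof of Claim~\ref{clm:robust} to get, with overwhelming probability, a block whose hashed and substituted string is within a $(1/2-\epsilon_{sub})$ fraction of its codeword. The independent deletions then compose with this, and the exhaustive enumeration in $\Detect$ hits the surviving substring of that block. But Step~1 is wrong as stated, and this breaks the logic of Step~4.

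\textbf{Step 1.} The hybrid to the ideal experiment does \emph{not} preserve detection probabilities. A pseudorandomness distinguisher cannot run $\Detect$, since it needs $\sk$ for $\PRC.\Decode_\sk$. Moreover, in the ideal experiment the hashed output is independent of $\sk$, so by soundness $\Detect$ outputs $\false$ with overwhelming probability there. For the same reason, invoking PRC robustness on $(\E\circ\E_{\text{del}}(1-\epsilon_{del}))(c)$ is meaningless while $c$ is uniform.

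What the hybrid should transfer, as the paper does, is only an efficiently testable event: ``the high-entropy substring exists and no block has at most a $(1/2-\epsilon_{sub})$ fraction of positions with $h_j(t'_i)\neq c_i$ after substitution.'' The distinguisher can test this, since it holds the codewords it received and the $h_j$, and can simulate the model and the substitutions. Once this event is shown to have negligible probability in the real experiment, Steps~3--4 must be carried out there, on genuine codewords.

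\textbf{Step 3.} Do not ``condition on the outcome'': conditioning changes the distribution of $(\sk,c)$ over which robustness is averaged. Instead, for each block index $j$, let $\E_j$ be the channel that on input $c$ does the following:
\begin{itemize}
\item it runs the embedding-plus-substitution process for block $j$;
\item it outputs the hashed block if that block is within distance $(1/2-\epsilon_{sub})\ell$ of $c$, and outputs $c$ itself otherwise.
\end{itemize}
This channel is always $(1/2-\epsilon_{sub})$-bounded and agrees with the real process whenever block $j$ is within the bound. A union bound over the polynomially many $j$ then handles the fact that the good block is only identified after the fact. This is the formal content of the paper's one-line assertion that the modification ``was a $(1/2-\epsilon_{sub})$-bounded binary substitution channel $\E$.''

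Strictly, $\E_j$ must also simulate the earlier blocks, whose codewords share $\sk$ with $c$. Neither your proposal nor the paper addresses this. Closing it seems to require robustness against channels that may query $\PRC.\Encode_\sk$, which is more than reading the hypothesis as allowing codeword-dependent substitutions.
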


\begin{proof}
This follows nearly immediately from the proof of Claim \ref{clm:robust}. Therein we showed that, if $\epsilon_{sub} < \epsilon'_{sub} \delta / 16$, then with overwhelming probability there was some block in the output for which the modification made to the hash values $h(t)$ of the output tokens relative to the original PRC codeword $c$ by $\Watermark$ followed by $\E_{sub}(f, \frac{1}{2} - \epsilon'_{sub})$ was a $(1/2 - \epsilon_{sub})$-bounded binary substitution channel $\E$. 

In such a case, the transformation applied to $c$ by $\Watermark$ and $\E_{sub}(f, \frac{1}{2} - \epsilon'_{sub})$ subsequently followed by the deletion channel $\E_{del}(1 - \epsilon_{del})$ is equivalent to $\E \circ \E_{\text{del}}(1 - \epsilon_{del})$, against which $\PRC$ is robust by definition. Thus, with overwhelming probability, there exists some block of the output which is recoverable, as the construction of $\Detect$ guarantees that, irrespective of the number of deletions, the entirety of the respective block (that remains after deletion) will be decoded at some point in its execution.
\end{proof}

This, combined with Claim \ref{clm:sound}, suffices to prove Theorem \ref{thm:subdel2}.

\subsubsection{Edit Robustness}
\label{sec:edit}

Finally, we complete the proof of Theorem \ref{thm:edit}.

\begin{claim*}
For constants $\epsilon_{sub}, \epsilon_{edit}, \epsilon'_{edit}, \delta, \alpha > 0$ such that $\epsilon_{sub} < \delta / 64$ and $\epsilon'_{edit} < \epsilon_{edit} \delta \alpha / 17$, if, for any $(1/2 - \epsilon_{sub})$-bounded binary substitution channel $\E$ and any $\epsilon_{edit}$-bounded binary edit channel $\E'$, $\PRC$ is robust against $\E \circ \E'$, then $\Watermark$ satisfies $(\alpha m(n), \delta, 1)$-entropy-restricted substring robustness against any $\epsilon'_{edit}$-bounded edit channel.
\end{claim*}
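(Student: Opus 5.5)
Since the adversary is now arbitrary rather than random, the plan is to first establish, using only the randomness of $\Watermark$ and $\vec{h}$ and before any edits, that with overwhelming probability a constant fraction of all blocks in the output are ``very good'': their embedding error rate is at most $1/2-\delta/64$. Then I would use a counting argument to show that an $\epsilon'_{edit}$-bounded edit channel cannot corrupt all of them beyond an $\epsilon_{edit}$ fraction of edits. Throughout, I work in the ideal experiment of Claim \ref{clm:sound}, where codewords are uniform; pseudorandomness transfers the conclusion to the real execution.

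\textbf{Step 1: most high-entropy blocks embed well.} Suppose the output has a substring of length at least $\alpha m(n)$ in which at least a $\delta$ fraction of tokens have empirical entropy at least $1$. Since $m(n)=\omega(n\ell(n))$, this substring contains $k\ge \alpha m(n)/\ell(n)-2$ complete blocks, and their total entropic fraction is still at least $\delta/2$ for large $n$. By averaging, at least $\delta k/4$ of these blocks have at least a $\delta/4$ fraction of entropic tokens. Claim \ref{clm:expect} applies to each such block: with independent probability at least $1/4-\nu(n)$ over its own $h_j$, the block has embedding error rate at most $1/2-\delta/64$. Call such blocks good. A Chernoff bound over the independent $h_j$ then gives, except with probability $\exp(-\Omega(k))$, at least $\delta k/17$ good blocks. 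The slack between $1/16$ and $1/17$ absorbs $\nu$ and the boundary blocks, and $k=\Omega(m/\ell)\ge \Omega(n)$. The condition $\epsilon_{sub}<\delta/64$ makes every good block, before editing, within a $(1/2-\epsilon_{sub})$-bounded substitution of its codeword.

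\textbf{Step 2: pigeonhole over the adversary's edits.} The adversary sees the output but has a total budget of $\epsilon'_{edit}m(n)$ insertions and deletions. I would fix an optimal edit alignment and charge each edit to the original block containing the affected position, with insertions charged to an adjacent block. If every good block received more than $\epsilon_{edit}\ell$ edits, the total number of edits would exceed
\[\frac{\delta k}{17}\,\epsilon_{edit}\ell \;\ge\; \frac{\delta\alpha\epsilon_{edit}}{17}\,m(n)(1-o(1)) \;>\; \epsilon'_{edit}m(n),\]
by the hypothesis $\epsilon'_{edit}<\epsilon_{edit}\delta\alpha/17$. Here I need the strict inequality to leave room for the $o(1)$ loss. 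So some good block $B$ has its image $B'$ under the alignment obtained from $B$ by at most $\epsilon_{edit}\ell$ edits. Hashing $B'$ with $h_j$ yields exactly the composition $\E\circ\E'$ applied to $c$, where $\E$ is the embedding substitution pattern (which is $(1/2-\epsilon_{sub})$-bounded) and $\E'$ is an $\epsilon_{edit}$-bounded edit channel. The image $B'$ is contiguous, and $\Detect$ tries every window of length at most $\ell$ with every $h_k$. If $B'$ is longer than $\ell$ because of insertions, I would instead take a length-$\ell$ window, which differs from $B'$ only by removing inserted symbols and so stays within the edit budget.

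\textbf{Main obstacle.} The edit channel is adversarial and may depend on the output, so it may depend on $h_j$ and $c$ through the text. PRC robustness, however, is stated for a fixed channel. I would handle this by noting that the claim assumes robustness for \emph{any} channel in the class, and by taking a union bound over the polynomially many choices of which good block is used. The honest version of this step requires robustness against channels chosen after seeing the codeword, or at least a reduction to it. I expect this adaptivity issue, together with making the charging of insertions to blocks precise, to be the delicate part of the argument.
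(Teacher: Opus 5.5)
Your proposal follows the paper's proof: Claim~\ref{clm:expect} plus a Hoeffding bound over the independent $h_j$ gives about $\delta k/16$ blocks with embedding error at most $1/2-\delta/64$, and a pigeonhole over the $\epsilon'_{edit}m(n)$ edit budget, using $\epsilon'_{edit}<\epsilon_{edit}\delta\alpha/17$, leaves one of them with at most $\epsilon_{edit}\ell$ edits. The paper never treats the adaptivity of the edit channel or the length-$\ell$ window restriction in $\Detect$; it just invokes the PRC hypothesis on the realized corrupted block, as you do. Your window fix is not right as stated, though: a contiguous length-$\ell$ window of $B'$ drops symbols at its ends, which need not be the inserted ones, so its edit distance to $B$ can reach $e+(|B'|-\ell)\le 2e$.
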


\begin{proof}
To begin, we note that any $\epsilon'_{edit}$-bounded edit channel over the output of $\Watermark$ is trivially an $\epsilon'_{edit}$-bounded binary edit channel over the images under $h$ of the output tokens, as every insertion, substitution, or deletion to output tokens corresponds to at most one of the respective edit to the corresponding image (and possibly zero if, for instance, a token is replaced by a different token with the same hash).

Repeating the proof of Claim \ref{clm:robust}, if an at least $\alpha m(n)$-length substring of the output of $\Watermark$ has a $\delta$ fraction of tokens with sufficient empirical entropy, then, letting $k$ be the number of complete blocks contained within the substring (and noticing that $k \geq \alpha (m(n)/\ell(n)) - 1$, i.e., that $k$ is roughly equivalent to an $\alpha$ fraction of the number of blocks), at least $\delta k / 2$ blocks within that substring are ``good" blocks with at least a $\delta/4$ fraction of such tokens, which by Claim \ref{clm:expect} means that those blocks, with probability at least $1/4$ over the choice of functions $h$, have (with overwhelming probability) at most a $1/2 - \delta / 64$ error rate for embedding $c$.

Since the choice of $h$ is independent between blocks, these $1/4$ probabilities are independent. Furthermore, the expected number of ``good" blocks having at most a $1/2 - \delta / 64$ error rate is at least $\delta k / 8$; Hoeffding's inequality then gives us that the probability that the actual number of ``good" blocks meeting that criterion is less than $\delta k / 16$ (i.e., deviates from its expectation by at least $\delta k / 16$) is at most
\[\text{exp}(-2(\delta k / 16)^2/k) = \text{exp}(-2\delta^2 k / 256)\]
which is negligible in $k$ and thus (since the substring length is polynomial in $n$) in $n$. We refer to these blocks (``good" blocks having at most a $1/2 - \delta / 64$ error rate) henceforth as ``recoverable" blocks.

Thus, if $\PRC$ is robust to a $1/2 - \epsilon_\text{sub} > 1/2 - \delta / 64$ fraction of (arbitrary) substitutions in addition to an $\epsilon_{edit}$ fraction of (arbitrary) edits, the only way that the code $c$ cannot be successfully recovered from at least one of the ``recoverable" blocks is for \emph{all} of the ``recoverable" blocks to have at least an $\epsilon_{edit}$ fraction of edits made, which with overwhelming probability (and for sufficiently large $n$) is only possible if at least an $\epsilon_{edit} \delta \alpha / 17$ fraction of edits is made to the entire output (since at least a $\delta k / 16$ quantity of the blocks, or asymptotically at least a $\delta \alpha / 17$ fraction of the blocks, must have an $\epsilon_{edit}$ fraction of edits). Thus, $\Watermark$ is robust to any fraction of edits $\epsilon'_{edit} < \epsilon_{edit} \delta \alpha / 17$, as desired.

\end{proof}

This, combined with Claim \ref{clm:sound}, suffices to prove Theorem \ref{thm:edit}.

\paragraph{A note on parameterization.} We lastly note that, if instantiated using the PRC construction of \cite{cggm25} the specific value of $\epsilon_{edit}$ is quite small: using the parameterization given in Definition 5.8 of \cite{cggm25} requires that $\epsilon_\text{edit} = c \epsilon_\text{sub}^3 \text{log}(1/\epsilon_\text{sub})$
for a sufficiently small constant $c$, which effectively means that $\epsilon_\text{edit}$ is $\Tilde{O}(\delta^3)$ (since $\epsilon_{sub} < \delta / 64$) and, regarding the edit-robustness of $\Watermark$ itself, $\epsilon'_\text{edit}$ is $\Tilde{O}(\delta^4)$ (since $\epsilon'_{edit} < \epsilon_{edit} \alpha \delta / 17$). Obtaining better parameters for edit-robustness is theoretically possible but would require an analogous improvement in the underlying PRC construction.

\bibliographystyle{alpha}

\bibliography{cryptobib/abbrev3,cryptobib/crypto,watermarking}
\appendix

\end{document}